\documentclass{article}

\usepackage{setspace}
\usepackage[T1]{fontenc} 

\usepackage{graphicx}
\usepackage{amsthm,amsmath,amssymb,enumitem,mathrsfs}
\usepackage{color}
\usepackage[noadjust]{cite}
\usepackage{multirow}
\usepackage{tabu,array,soul}
\usepackage{subcaption}
\usepackage{caption}
\usepackage[ruled,vlined,linesnumbered]{algorithm2e}
\usepackage{siunitx}
\usepackage{textcomp}
\usepackage{tikz}
\usetikzlibrary{decorations.pathreplacing, calc}
\usetikzlibrary{positioning}
\usetikzlibrary {arrows.meta}
\usepackage{rotating}
\usepackage{amsbsy}

\usepackage{hyperref}
\hypersetup{
  colorlinks = true,
  linkcolor = black!30!red,
  citecolor = black!30!green
}

\usepackage[textsize=footnotesize,color=green!40]{todonotes}

\newtheorem{theorem}{Theorem}
\newtheorem{definition}[theorem]{Definition}

\newtheorem{lemma}[theorem]{Lemma}

\newtheorem{claim}[theorem]{Claim}

\newcommand{\overbar}[1]{\mkern 1.5mu\overline{\mkern-1.5mu#1\mkern-1.5mu}\mkern 1.5mu}

\newcommand{\set}[1]{\left\{#1\right\}}
\newcommand{\interval}[1]{\left[#1\right]}
\newcommand{\abs}[1]{\left|#1\right|}
\newcommand{\parentesis}[1]{\left(#1\right)}

\newcommand{\TikzInterval}[2]{\draw #1 -- #2;\draw #1 -- ++(0,0.2);\draw #1 -- ++(0,-0.2);\draw #2 -- ++(0,0.2);\draw #2 -- ++(0,-0.2);}
\newcommand{\TikzIntSep}[1]{\draw #1 -- ++(0,0.2);\draw #1 -- ++(0,-0.2);}
\newcommand{\TikzSimplicial}[1]{\draw #1 node[fill] {};}

\newcommand{\Interval}[1]{\boldsymbol{#1}}

\newcommand{\Root}[1]{R(#1)}

\newcommand{\Implication}[2]{IMP\left[\neg \Interval{#1} \rightarrow \Interval{#2}\right]}

\newcommand{\startInterval}{{\Interval{o}}}
\newcommand{\trueInterval}{{\Interval{\top}}}
\newcommand{\trueVertex}{\top}
\newcommand{\startVertex}{o}

\newcommand{\CGa}[1]{\Interval{a_{#1}}}
\newcommand{\CGy}[1]{\Interval{y_{#1}}}

\newcommand{\CGb}[1]{\Interval{b_{#1}}}
\newcommand{\CGc}[1]{\Interval{c_{#1}}}
\newcommand{\CGap}[1]{\Interval{a'_{#1}}}
\newcommand{\CGbp}[1]{\Interval{b'_{#1}}}
\newcommand{\CGcp}[1]{\Interval{c'_{#1}}}

\newcommand{\CGd}[1]{\Interval{d_{#1}}}
\newcommand{\CGf}[1]{\Interval{f_{#1}}}
\newcommand{\CGCov}[1]{\Interval{cov_{#1}}}

\newcommand{\INS}[2]{\textsc{INS}\left[\Interval{#1} , \Interval{#2}\right]}
\newcommand{\InsertSigma}[2]{\Interval{\sigma(#1,#2)}}

\newcommand{\alphaAND}[2]{\Interval{\alpha\left(#1,#2\right)}}
\newcommand{\betaAND}[2]{\Interval{\beta\left(#1,#2\right)}}
\newcommand{\gammaAND}[2]{\Interval{\gamma\left(#1,#2\right)}}
\newcommand{\deltaAND}[2]{\Interval{\delta\left(#1,#2\right)}}

\newcommand{\CovGadget}[1]{\textsc{COV}[#1]}
\newcommand{\AND}[2]{\textsc{AND}\left[\Interval{#1} , \Interval{#2}\right]}

\newcommand{\VarGadget}{\mathscr{X}}
\newcommand{\StartGadget}{\mathscr{S}}
\newcommand{\ClauseGadget}{\mathscr{C}}
\newcommand{\EndGadget}{\mathscr{E}}

\newcommand{\Tail}[1]{\Interval{e_{#1}}}

\newcommand{\Compatible}[1]{\mathcal{C}_{#1}}

\newcommand{\auxGraph}[2]{H^{#2}_{#1}}

\newcommand{\Svertices}[2]{S^{#2}_{#1}}

\newcommand{\typpe}{\tau}

\newcommand{\typext}{\tau^{ext}}

\newcommand{\typent}{\tau^{int}}

\newcommand{\typecov}{\tau^{cov}}
\newcommand{\typeuncov}{\tau^{uncov}}
\newcommand{\typebag}{\tau^{bag}}

\newcommand{\solution}[2] {sol\left[#1 , #2\right] }

\newcommand{\x}[1]{x_{#1}}

\newcommand{\bound}{4 + 7n + 58m}
\newcommand{\boundWithoutSimplicial}{n + 6m}
\newcommand{\nbSimplicial}{4+ 6n + 52m}
\newcommand{\ntracks}{2+4n+35m}

\title{Algorithms and complexity for geodetic sets on interval and chordal graphs\thanks{This research was supported by the IFCAM project ``Applications of graph homomorphisms'' (MA/IFCAM/18/39), the ERC grant titled PARAPATH, the IDEX-ISITE initiative CAP 20-25 (ANR-16-IDEX-0001), the International Research Center ``Innovation Transportation and Production Systems'' of the I-SITE CAP 20-25, and the ANR project GRALMECO (ANR-21-CE48-0004). This work was partially carried out while Harmender Gahlawat was employed at G-SCOP, Grenoble-INP and Université Clermont Auvergne and Ben-Gurion University of the Negev.\\ This paper contains the full versions of parts of an extended abstract from the proceedings of ISAAC 2020~\cite{ISAAC-version}. The other results of~\cite{ISAAC-version} were published in a journal~\cite{chakraborty2023algorithms}, but are completely disjoint from the results in the present paper.}}

\author{Dibyayan Chakraborty\thanks{School of Computing, University of Leeds, United Kingdom.}  \and Sandip Das\thanks{Indian Statistical Institute, Kolkata, India.} \and Florent Foucaud \thanks{Université Clermont Auvergne, CNRS, Clermont Auvergne INP, Mines Saint-Etienne, LIMOS, 63000 Clermont-Ferrand, France.} \and Harmender Gahlawat \thanks{Department of Mathematics, Indian Institute of Technology Delhi, Hauz Khas, New Delhi 110016, India} \and Dimitri Lajou \thanks{Univ. Bordeaux, Bordeaux INP, CNRS, LaBRI, UMR5800, F-33400 Talence, France}}

\begin{document}

\maketitle

\begin{abstract}
We study the computational complexity of finding the \emph{geodetic number} of a graph on chordal graphs and interval graphs. A set $S$ of vertices of a graph $G$ is a \emph{geodetic set} if every vertex of $G$ lies in a shortest path between some pair of vertices of $S$. The \textsc{Minimum Geodetic Set (MGS)} problem is to find a geodetic set with minimum cardinality of a given graph. We  show that \textsc{Minimum Geodetic Set} is fixed parameter tractable for chordal graphs when parameterized by its \emph{tree-width} (which equals its clique number). This implies a polynomial-time algorithm for $k$-trees, for fixed $k$. Then, we show that \textsc{Minimum Geodetic Set} is NP-hard on interval graphs, thereby answering a question of Ekim et al. (LATIN, 2012), who showed that \textsc{Minimum Geodetic Set} is polynomial-time solvable on proper interval graphs. As interval graphs are very constrained, to prove the latter result, we design a rather sophisticated reduction technique to work around their inherent linear structure.
\end{abstract}

\section{Introduction}

A simple undirected graph $G$ has vertex set $V(G)$ and edge set $E(G)$. For two vertices $u,v\in V(G)$, let $I(u,v)$ denote the set of all vertices in $G$ that lie in some shortest path between $u$ and $v$.  For a subset $S$ of vertices of a graph $G$, let $I(S)=\bigcup_{u,v\in S} I(u,v) $. We say that $T\subseteq V(G)$ is \emph{covered} by $S$ if $T\subseteq I(S)$. A set of vertices $S$ is a \emph{geodetic set} if $V(G)$ is covered by $S$. The \emph{geodetic number} is the minimum integer $k$ such that $G$ has a geodetic set of cardinality $k$. Given a graph $G$, the \textsc{Minimum Geodetic Set (MGS)} problem, introduced in~\cite{harary1993}, is to compute a geodetic set of $G$ with minimum cardinality. In this paper, we study the computational complexity of \textsc{Minimum Geodetic Set} on interval and chordal graphs. \textsc{Minimum Geodetic Set} is a natural graph covering problem that falls in the class of problems dealing with the important geometric notion of \emph{convexity}: see~\cite{BOOKaraujo2025,farber1986,bookGC} for some general discussion on graph convexities. The setting of \textsc{Minimum Geodetic Set} is quite natural, and it can be applied to facility location problems such as the optimal determination of bus routes in a public transport network~\cite{caldam2020,bus}. See also~\cite{ekim2012} for further applications. The aim of this paper is to study \textsc{Minimum Geodetic Set} on chordal graphs. Chordal graphs are the graphs with no induced cycle of order greater than~$3$; equivalently, they are the intersection graphs of subtrees of trees. Their structural properties imply an interesting behaviour for various types of convexity, as pointed out in~\cite{JCMCC96,farber1986}.

The algorithmic complexity of \textsc{Minimum Geodetic Set} has been studied actively. \textsc{Minimum Geodetic Set} is known to be NP-hard on chordal graphs~\cite{JCMCC96}, chordal bipartite graphs~\cite{dourado2008,dourado2010}, subcubic graphs~\cite{bueno2018}, planar graphs~\cite{caldam2020}, partial grids~\cite{ISAAC-version,chakraborty2023algorithms}, co-bipartite graphs~\cite{ekim2012}, line graphs~\cite{ISAAC-version}, and graphs of diameter~2~\cite{ISAAC-version}. 
On the positive side, polynomial-time algorithms to solve \textsc{Minimum Geodetic Set} are known for cographs~\cite{dourado2010}, split graphs~\cite{dourado2010,JCMCC96} and its superclass well-partitioned chordal graphs~\cite{wellpart}, ptolemaic graphs~\cite{farber1986} and more generally distance-hereditary graphs~\cite{dh}, block-cactus graphs~\cite{ekim2012}, outerplanar graphs~\cite{mezzini2018}, solid grid graphs~\cite{ISAAC-version,chakraborty2023algorithms}, and proper interval graphs~\cite{ekim2012}. From the perspective of parameterized complexity, \textsc{Minimum Geodetic Set} is unlikely to be FPT (it is W[1]-hard) for the parameters solution size, feedback vertex set number, and pathwidth, combined~\cite{kellerhals2020}. In fact it is even NP-hard for graphs that have pathwidth~17 and feedback vertex set number~13~\cite{DBLP:conf/iwpec/Tale25}. The problem is FPT for each of the parameters tree-depth, modular-width and feedback edge set number~\cite{kellerhals2020}, and also for treewidth and diameter, combined~\cite{foucaud2023tight}. The approximability of \textsc{Minimum Geodetic Set} has also been studied in~\cite{caldam2020,davot2021approximation}, and fine-grained complexity aspects with respect to solution size and structural parameters have been studied in~\cite{foucaud2023tight,DBLP:conf/stacs/FoucaudGK0IST25}.

\medskip \noindent \textbf{Our results.} To complement the hardness of \textsc{Minimum Geodetic Set} on chordal graphs, in this paper, we design an FPT algorithm for \textsc{Minimum Geodetic Set} on chordal graphs when parameterized by its treewidth, which equals its clique number $\omega$ minus one. 
We use dynamic programming on tree-decompositions to prove the following. 

\begin{theorem}\label{thm:fpt-chordal}\label{thm:fpt-interval}
	\textsc{Minimum Geodetic Set} can be solved in time $2^{2^{O(w)}} n$ for chordal graphs and in time $2^{O(w)}n$ for interval graphs, where $n$ is the order and $w$ the treewidth of the input graph.
\end{theorem}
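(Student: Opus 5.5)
We will run a dynamic program over a clique tree of the chordal graph $G$. Such a tree has $O(n)$ bags, each bag a maximal clique of size at most $w+1$, and it can be computed in linear time. For interval graphs the clique tree is a path (a clique path), which is what yields the better bound in that case.

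The key structural fact we rely on is that in a chordal graph every minimal separator is a clique, namely the intersection $B_t\cap B_{t'}$ of adjacent bags. Consider a vertex $u$ in the subgraph $G_t$ below a node $t$ and a vertex $v$ outside it. Every shortest $u$--$v$ path crosses the separator $B_t$, so
\[
d(u,v)=\min_{x\in B_t}\bigl(d(u,x)+d(x,v)\bigr).
\]
Because $B_t$ is a clique, the distances $d(u,x)$ for $x\in B_t$ differ by at most $1$. Hence the \emph{distance profile} of $u$ with respect to $B_t$ is determined by the subset $P_u=\{x\in B_t : d(u,x)=d(u,B_t)\}$ of the bag closest to $u$, up to an additive constant that does not matter for deciding betweenness. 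We will prove a lemma stating that whether $w\in I(u,v)$ for a crossing pair depends only on the profiles of $u$ and $v$ on the separator. The same holds for the profile of $w$ relative to $u$, which is captured by a pair of subsets of the bag: the vertices $x\in B_t$ with $d(u,w)+d(w,x)=d(u,x)$ and those with $d(u,x)=d(u,B_t)$.

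The DP state at node $t$ records two pieces of data. The first is the set $\mathcal{P}\subseteq 2^{B_t}$ of profiles realised by chosen solution vertices in $G_t$; there are $2^{2^{O(w)}}$ possibilities. The second is the set of \emph{still uncovered} vertices of $G_t$, summarised by which "demands" they pose. An uncovered $w$ can be covered later only by a pair $u,v$ with at least one of them outside $G_t$. Its requirement is therefore a family of admissible profile pairs, or single profiles combined with one inside vertex, again describable by a subset of $2^{B_t}\times 2^{B_t}$. We store the set of such pending demands, together with the minimum number of solution vertices achieving the state.

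Introduce, forget and join operations are then handled at bag level. A join unions profile sets and tests whether demands from one side are met by profiles from the other, using the lemma to compose distances through the clique. The number of states is $2^{2^{O(w)}}$ per node, and the transitions are polynomial in the state count, which gives $2^{2^{O(w)}}n$. For interval graphs we use a clique path. Ordering the vertices by left endpoint, the relevant profiles become nested "prefixes/suffixes" of the bag, so only $O(w)$ profiles exist on each side instead of $2^{O(w)}$. The state then reduces to $2^{O(w)}$ choices and the running time to $2^{O(w)}n$.

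The main obstacle is the lemma that betweenness across a separator is determined by bag profiles, together with the claim that pending demands compose correctly at join nodes. We will prove it first, using the fact that any shortest path enters and leaves each separator clique at most once, to bound the demand types. The interval refinement additionally needs a proof that in an interval graph the nearest-set $P_u$ is an interval of the left-endpoint order of the bag.
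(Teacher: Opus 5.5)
There is a genuine gap, and it sits exactly where your running times come from: the state count. Your second state component is a \emph{set} of pending demands. A single demand is, as you say, a subset of $2^{B_t}\times 2^{B_t}$. Even after aggregating over the chosen inside vertices $u$, it is at best a subset $\mathcal{B}_w\subseteq 2^{B_t}$ of external profiles that would cover $w$. That already gives $2^{2^{|B_t|}}$ possible demands, so the number of possible sets of pending demands is $2^{2^{2^{|B_t|}}}$, which is triply exponential in $w$. Keeping only inclusion-minimal demands does not help: you are then counting antichains in $2^{2^{B_t}}$, a Dedekind number, which is still triply exponential. So ``the number of states is $2^{2^{O(w)}}$ per node'' does not follow from the state you define. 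Your DP would still be a correct FPT algorithm, but only with triply exponential dependence. The interval bound fails for the same reason: with $O(w)$ external profiles, a demand is a subset of an $O(w)$-element set, and sets of demands range over $2^{2^{O(w)}}$ values, not $2^{O(w)}$.

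The missing idea is to reverse the bookkeeping: do not remember what the future must still supply; \emph{guess} it. At each node $t$, the paper guesses the set $\tau^{ext}\subseteq 2^{\beta(t)}$ of close sets (your profiles) that solution vertices outside $G_{\le t}$ will realise. It simulates each guessed profile by a phantom vertex adjacent to exactly that subset of the bag. Because the bag is a clique, this preserves every relevant betweenness relation, via Lemma~\ref{lem:cover-clique}. It then requires every vertex of $G_{<t}$ to be covered \emph{already}, by a pair with one endpoint chosen in $G_{\le t}$ and the other either chosen there or a phantom. This is legitimate precisely because of the fact you invoke: a shortest path between two vertices outside $G_{<t}$ cannot dip into $G_{<t}$ through a clique separator. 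Hence forgotten vertices never need two outside endpoints, and no pending demands for them need to be stored. The only residual information is a partition of the bag into chosen, covered and uncovered vertices. Consistency of the guesses is checked at join nodes ($\tau^{ext}_i=\tau^{int}_j\cup\tau^{ext}$) and at the root ($\tau^{ext}=\emptyset$). The state $(\tau^{int},\tau^{ext},\text{partition})$ then has $2^{2^{O(w)}}$ values, and $2^{O(w)}$ for interval graphs once profiles are restricted to an $O(w)$-size family.

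A minor correction on that last point. For a vertex lying to the right of a bag in an interval graph, the close set is a suffix in the \emph{right}-endpoint order, not an interval of the left-endpoint order. For example, the bag $\{[0,10],[3,6],[5,12]\}$ and $v=[9,13]$ give close set $\{[0,10],[5,12]\}$, which is not contiguous in left-endpoint order. Using both orders, the $O(w)$ count still holds.
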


This result applies to the following setting. A \emph{$k$-tree} is a graph formed by starting with a complete graph on $(k + 1)$ vertices and then repeatedly adding vertices by making each added vertex adjacent to exactly $k$ neighbours forming a $(k+1)$-clique. Allgeier~\cite{allgeier} gave a polynomial-time algorithm to solve \textsc{Minimum Geodetic Set} on maximal outerplanar graphs, which is a subclass of $2$-trees, and thus our algorithm generalizes this result (note that $2$-trees are both chordal and planar), as it shows that \textsc{Minimum Geodetic Set} can be solved in time $2^{2^{O(k)}}n$ for $k$-trees of order $n$. Recall that this is unlikely to hold for \emph{partial} $k$-trees (which are exactly the graphs of treewidth at most $k$) since \textsc{Minimum Geodetic Set} is NP-hard for graphs of treewidth~14~\cite{DBLP:conf/iwpec/Tale25}. Recently, it has been proved that \textsc{Minimum Geodetic Set} does not admit a $2^{2^{o(w)}}\cdot n^{O(1)}$ algorithm (where $w$ is the treewidth) even on bounded diameter graphs~\cite{foucaud2023tight}.

In this paper, we further strengthen the existing NP-hardness result of \textsc{Minimum Geodetic Set} on chordal graphs by proving it to be NP-hard on \emph{interval} graphs. An \emph{interval representation} of a graph $G$ is a collection of intervals on the real line such that two intervals intersect if and only if the corresponding vertices are adjacent in $G$. A graph is an \emph{interval graph} if it has an interval representation. Ekim et al.~\cite{ekim2012} asked if there is a polynomial-time algorithm for \textsc{Minimum Geodetic Set} on interval graphs. We give a negative answer to their question in the following theorem (note that proper interval graphs are the interval graphs with no induced $K_{1,3}$).

\begin{theorem}\label{thm:interval-hard}
	\textsc{Minimum Geodetic Set} is NP-hard for interval graphs (even with no induced $K_{1,5}$).
\end{theorem}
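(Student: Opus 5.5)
We reduce from \textsc{3-SAT}, using the macros the paper already sets up: a start gadget $\StartGadget$, variable gadgets $\VarGadget$, clause gadgets $\ClauseGadget$, an end gadget $\EndGadget$, together with implication, insertion, AND and covering sub-gadgets. The reduction rests on two facts about an interval graph with representation on the line:
\begin{itemize}
\item simplicial vertices belong to every geodetic set;
\item for two vertices $u,v$, the set $I(u,v)$ consists of the vertices lying on shortest ``left-to-right'' paths, so which vertices $I(u,v)$ covers is controlled by the positions of the endpoints.
\end{itemize}
Given a formula $\phi$ with $n$ variables and $m$ clauses, we build an interval graph $G_\phi$ and an integer $k = \bound$. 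We then show that $\phi$ is satisfiable if and only if $G_\phi$ has a geodetic set of size at most $k$. The count $\nbSimplicial$ will be the number of simplicial vertices, which are forced into every solution. The remaining budget $\boundWithoutSimplicial$ allows exactly one non-simplicial choice per variable and six per clause.

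The construction lays the gadgets out along $\ntracks$ ``tracks''. Each track is a chain of intervals, so that distances are governed by the number of hops along chains. A special start interval $\startInterval$ and a true interval $\trueInterval$ play the role of a far-away reference. Each variable gadget $\VarGadget$ contains two candidate intervals $\x{i}$ and $\neg\x{i}$. Choosing one of them covers the intervals of one literal-track, and the non-chosen literal's intervals are left to be covered by clause gadgets. An implication gadget $\Implication{a}{b}$ enforces that if the interval $\Interval{a}$ is not covered by the pair formed with the simplicial vertices, then $\Interval{b}$ must be picked. Concretely, it is a short path attached to the track so that the only shortest paths through its middle vertex end at $\Interval{a}$ or $\Interval{b}$. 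Each clause gadget $\ClauseGadget$ uses AND and covering gadgets $\CovGadget{\cdot}$ so that its six budgeted vertices suffice exactly when at least one literal of the clause is true.

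The proof then has two directions.
\begin{itemize}
\item \emph{Satisfiable implies a small geodetic set.} Given a satisfying assignment, take all simplicial vertices, the chosen literal intervals, and six prescribed intervals per clause. Then verify track by track that every vertex lies on a shortest path between two chosen vertices, which amounts to routine distance computations in each gadget.
\item \emph{A small geodetic set implies satisfiable.} Counting shows that each gadget receives exactly its quota. A local analysis shows that each variable gadget must contain exactly one of $\x{i}$, $\neg\x{i}$. The clause analysis then shows that if all three literals of a clause are false, some interval of its gadget stays uncovered.
\end{itemize}
The bound on induced $K_{1,5}$ follows by inspecting the gadgets: no interval contains more than four pairwise disjoint neighbours.

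The main obstacle is \emph{locality}. Interval graphs are nearly linear, so shortest paths between far-apart chosen vertices cover long stretches of every chain they traverse. In particular, the pair (start, end) threatens to cover everything. To prevent this, I would give the tracks distinct lengths and offsets, so that any shortest path between two vertices of different gadgets must use a designated ``highway'' clique sequence and skip the gadget interiors. Proving this carefully, as a lemma stating that $I(u,v)$ meets a gadget only if $u$ or $v$ lies in it or in a neighbouring gadget, is the step I expect to dominate the work. It is essential both for the soundness direction and for confirming that the budget cannot be exploited through long-range pairs.
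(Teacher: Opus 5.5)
\textbf{There is a genuine gap in the hard direction.} Your soundness argument rests on a lemma that is false for the construction you adopt: that $I(u,v)$ meets a gadget only if $u$ or $v$ lies in it or in a neighbouring gadget. The counts $4+6n+52m$ and $2+4n+35m$ you use include one forced tail $\Tail{T}$ per track, placed in the end gadget. In that construction, long-range pairs are not a danger to be suppressed; they are the mechanism of the reduction:
\begin{itemize}
\item $T\subseteq I(\startInterval,\Tail{T})$ for every track, so this pair meets every gadget the track runs through;
\item $\Interval{x_i}\in I(\trueInterval,\Interval{s_{x_i}})$ and $\Interval{r_q}\in I(\startInterval,\Interval{s_q})$;
\item $\alphaAND{p}{q}\in I(\Interval{p},\Tail{T_1})$ and $\deltaAND{p}{q}\in I(\Interval{q},\Tail{T_3})$, where one of $\Interval{p},\Interval{q}$ is a literal interval in a variable gadget far to the left, and the tails lie in the end gadget.
\end{itemize}
The last pairs are precisely how a literal's choice reaches a clause gadget. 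So the step you expect to dominate the work cannot be proved, and nothing else in your plan supports the hard direction.

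Your gadget semantics are also off. $\Implication{p}{q}$ forces $\Interval{q}$ into the solution when $\Interval{p}$ is not \emph{chosen}. Under your reading (``not covered'') it would be vacuous, since $\Interval{x_i}$ is always covered by the forced pair $(\trueInterval,\Interval{s_{x_i}})$. Likewise, literal choices neither cover tracks nor leave work for clause gadgets. Every track is covered for free by $(\startInterval,\Tail{T})$. The unchosen literal is covered by $(\trueInterval,\Interval{s_{x_i}})$ or by $(\Interval{x_i},\Interval{s_{\overbar{x_i}}})$. A budget of exactly $n+6m$ leaves no slack for anything else.

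\textbf{The missing idea is the normalization of Lemma~\ref{lem:good-geodetic-set}.} Tracks are already covered by forced vertices, so any solution interval lying on a track can be exchanged for the root of that track without losing coverage. Hence one may assume the solution avoids all tracks. Soundness then comes from an entry argument, not from locality.
\begin{itemize}
\item Take a critical interval: $\Interval{q}$ in $\Implication{p}{q}$, $\alphaAND{p}{q}$, $\deltaAND{p}{q}$, or $\CGCov{i}$.
\item By Lemma~\ref{lem:semi-good}, on a shortest path through it, the interval just before it lies on one specific track: $T(\Interval{p})$, $T(\Interval{q})$, or one of $T(\CGa{i}),T(\CGb{i}),T(\CGc{i})$. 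In the AND case, the alternative predecessor $\gammaAND{p}{q}$ is ruled out unless it is itself chosen.
\item The path reaches that interval along rightmost neighbours, and the tracks are shifted so that no shortcut between them exists. So the left endpoint in a track-free solution must be that track's root.
\end{itemize}
This gives ``$\Interval{p}$ or $\Interval{q}$'' for every implication gadget, and ``both roots, or one of $\alpha,\gamma,\delta$'' for every AND gadget. From these follow the lower bounds: $n$ on the variable gadgets, $6$ per clause gadget, and $7$ for a clause with no chosen literal. These are exactly the per-gadget quotas that your counting step presupposes but never derives.
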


This result is somewhat surprising, as most covering problems can be solved in polynomial time on interval graphs (but other distance-based problems, like \textsc{Metric Dimension}, are NP-complete for interval graphs~\cite{FoucaudMNPV17}). Our reduction (from \textsc{$3$-Sat}) uses a quite involved novel technique, that we hope can be used to prove similar results for other distance-related problems on interval graphs. The main challenge here is to overcome the linear structure of the graph to transmit information across the graph. To this end, we use a sophisticated construction of many parallel \emph{tracks}, \textit{i.e.}, shortest paths with intervals of (mostly) the same length spanning roughly the whole graph, and such that each track is shifted with respect to the previous one. Each track  represents shortest paths that will be used by solution vertices from our variable and clause gadgets. In between the tracks, we are able to build our gadgets.

We remark that \textsc{Minimum Geodetic Set} admits a polynomial-time algorithm on proper interval graphs by a nontrivial dynamic programming scheme~\cite{ekim2012}. Problems known to be NP-complete on interval graphs but polynomial-time solvable on proper interval graphs are quite rare; two examples known to us are \textsc{Equitable Coloring}~\cite{equitable-coloring} and \textsc{Induced Subgraph Isomorphism}~\cite{heg2015}. Theorem~\ref{thm:interval-hard} together with the algorithm from~\cite{ekim2012} adds \textsc{Minimum Geodetic Set} to this list. The state-of-the-art complexity status of \textsc{Minimum Geodetic Set} for various subclasses of chordal graphs is depicted in Figure~\ref{fig:diagram}.


\medskip \noindent \textbf{Structure of the paper.} In Section~ \ref{sec:chordal}, we describe the fixed parameter tractable algorithm for chordal graphs. In Section~\ref{sec:interval-hard}, we prove hardness for interval graphs. We conclude in Section~\ref{sec:conclusion}.

\begin{figure}[!tpb]
\centering
\scalebox{0.9}{\begin{tikzpicture}[node distance=7mm]

\tikzstyle{mybox}=[fill=white,line width=0.5mm,rectangle, minimum height=.8cm,fill=white!70,rounded corners=1mm,draw];
\tikzstyle{myedge}=[line width=0.5mm]
\newcommand{\tworows}[2]{\begin{tabular}{c}{#1}\\{#2}\end{tabular}}

    \node[mybox, fill=red!30] (chordal) [] {chordal \cite{dourado2010,JCMCC96}};
    \node[mybox, fill=green!30] (wp) [below =of chordal,xshift=20mm] {well-partitioned chordal \cite{wellpart}} edge[myedge] (chordal);
    \node[mybox, fill=green!30] (split) [below =of wp] {split \cite{dourado2010,JCMCC96}} edge[myedge] (wp);
    \node[mybox, fill=red!30,line width=1mm] (interval) [right=of wp,xshift=10mm] {\textbf{interval}} edge[myedge] (chordal);
    \node[mybox, fill=green!30] (proper) [below=of interval] {proper interval \cite{ekim2012}} edge[myedge] (interval);
    \node[mybox, fill=green!30] (ptolemaic) [left=of wp] {ptolemaic \cite{farber1986}} edge[myedge] (chordal);
    \node[mybox, fill=green!30,line width=1mm,xshift=-5mm] (ktree) [left=of ptolemaic] {\textbf{$k$-trees (fixed $k$)}} edge[myedge] (chordal);
    \node[mybox, fill=green!30] (block) [below=of ptolemaic]
    {block \cite{ekim2012}} edge[myedge] (ptolemaic) edge[myedge] (wp);
    \node[mybox, fill=green!30] (mop) [below=of ktree,yshift=-15mm]
    {maximal outerplanar \cite{allgeier}} edge[myedge] (ktree);    
    
    \node[mybox, fill=green!30] (tree) [below=of block] {trees \cite {harary1993}} edge[myedge] (block)  edge[myedge] (ktree);
  \end{tikzpicture}}

\caption{Inclusion diagram for subclasses of chordal graphs. If a class $A$ has an upward path to class $B$, then $A$ is included in $B$. For graphs in the green classes, \textsc{Minimum Geodetic Set} is polynomial-time solvable; for graphs in the red classes, it is NP-complete. The results from the two bold boxes are proved in this paper.}
\label{fig:diagram}
\end{figure}

\section{Preliminaries}\label{sec:prelim}
Let $G$ be a graph. Throughout this paper, unless otherwise specified, all graphs are assumed to be connected. This is without loss of generality, as the geodetic number of a disconnected graph is simply the sum of the geodetic numbers of its connected components.  For a subset $S\subseteq V(G)$, let $G[S]$ denote the subgraph of $G$ induced by the vertex set $S$. Further, let $G-S$ denote the subgraph $G[V(G)\setminus S]$. To ease the presentation, for a vertex $v\in V(G)$ and a subgraph $H$ of $G$, we denote $G-\{v\}$ by $G-v$ and $G-V(H)$ by $G-H$. For two vertices $u,v\in V(G)$, let $d(u,v)$ denote the length of a shortest path between $u$ and $v$. For a vertex $v\in V(G)$, let $N(v) = \{u~|~uv\in E(G)\}$ and let $N[v] = N(v)\cup \{v\}$. A subgraph $H$ of $G$ is a \textit{clique} if every $u,v\in V(H)$, $uv \in E(G)$. The maximum size of a clique in $G$ is defined as the \textit{clique number} of $G$, denoted by $\omega(G)$. A vertex set $X\subseteq V(G)$ is a \textit{clique cutset} of $G$ if $G[X]$ is a clique and $G-X$ is disconnected.

A graph is \textit{chordal} if every cycle of length at least $4$ has a chord. A vertex $v\in V(G)$ is \textit{simplicial} if $N[v]$ induces a clique in $G$. It is well known that every non-trivial chordal graph has at least two simplicial vertices. It is easy to observe that if $v$ is a simplicial vertex of $G$, then every geodetic set of $G$ contains $v$ as for any $u,w$ distinct from $v$, $v\notin I(u,w)$.

\begin{definition}[{\bf Treewidth}]\label{D:tw}
 A \emph{tree decomposition} of a graph $G$ is a pair $(T,\beta)$, where $T$ is a tree and $\beta$ is a mapping from $V(T)$  to subsets of $V(G)$ (also called \emph{bags}), i.e., $\beta: V(T) \rightarrow 2^{V(G)}$, satisfying the following properties. 
\begin{enumerate}
    \item For every $uv \in E(G)$, there exists $t\in V(T)$, such that $\{u,v\} \subseteq \beta(t)$.
    \item For every $v \in V(G)$, the subgraph of $T$ induced by the set $T_v = \{ t \in V(T)~|~v \in \beta(t)\}$ is a non-empty tree.
\end{enumerate}

\noindent The \emph{width} of a tree decomposition $(T,\beta)$ is $\max_{t\in V(T)} |\beta(t)| - 1$. The \emph{treewidth} of $G$ is the minimum possible width of a tree decomposition of $G$. The mapping $\beta$ is extended from vertices of $T$ to subgraphs of $T$. In particular, for a subgraph $U$  of $T$,  $\beta(U) = \bigcup_{v\in V(U)} \beta(v)$.
\end{definition}

In order to better facilitate the dynamic programming, it is sometimes beneficial to have a tree decomposition with ``nice'' additional properties. Our algorithm performs dynamic programming on a \emph{nice tree decomposition} of the input chordal graph, as defined in~\cite{niceTWchordal} based on the notion of nice tree-decompositions for general graphs~\cite{niceTW}. To ease the presentation of our algorithm, we assume additional properties that the leaf nodes and the root node are empty bags. Notice that this can be attained by adding $O(|V(G)|)$ many nodes to the tree decomposition using standard techniques.

\begin{definition}\label{D:niceTW}
A \emph{nice tree decomposition} of a chordal graph $G$ is a tree decomposition $(T, \beta)$, of width $\omega(G)-1$, where $T$ is a rooted tree and each internal node of $T$ has one or two children, with the following additional properties.
\begin{enumerate}

\item For each node $v\in V(T)$, either $\beta(v)=\emptyset$ or $G[\beta(v)]$ is a clique.

\item Each node of $T$ belongs to one of the following types: \emph{introduce}, \emph{forget}, \emph{join} or \emph{leaf}.

\item A join node $v$ has two children $v_1$ and $v_2$ such that $\beta(v) = \beta(v_1) = \beta(v_2)$.

\item An introduce node $v$ has one child $v_1$ such that $\beta(v) \setminus \set{x} = \beta(v_1)$, where $x \in \beta(v)$.

\item A forget node $v$ has one child $v_{1}$ such that $\beta(v)  = \beta(v_1) \setminus \set{x}$, where $x \in \beta(v_1)$.

\item A leaf node $v$ is a leaf of $T$ with $\beta(v)=\emptyset$.

\item The tree $T$ is rooted at a leaf node $r$ with $\beta(r) = \emptyset$.
\end{enumerate}
\end{definition}

\section{FPT Algorithm for Chordal Graphs}\label{sec:chordal}
We present an FPT algorithm for chordal graphs parameterized by the clique number (which is also the treewidth plus $1$). We assume a nice tree decomposition $(T, \beta)$ of our chordal graph. If no such decomposition is given, we can compute one in $O(n+m)$ time. Our algorithm traverses the nice tree decomposition bottom-up. At each node $v$ of the tree, we  construct a table of size $O\left(2^{2^{\omega(G)}}\right)$ containing ``partial solutions'' for the graph induced by the vertices in the bags of the subtree rooted at $v$ (denoted by $G_{\leq v}$). Informally, a partial solution for $G_{\leq v}$ is a set of vertices of $G_{\leq v}$ that, when combined with certain other vertices of $G-G_{\leq v}$, covers all vertices of $G_{\leq v}$. We associate a ``type'' to each of these partial solutions which encodes, among other information, the vertices required to convert it into a valid solution for $G_{\leq v}$, and the effect of this partial solution to the rest of the graph (Definition~\ref{def:type}).

To ensure that at least one of these partial solutions can be ``extended'' and will be part of a geodetic set with minimum cardinality of $G$, we characterize the shortest path structure between a pair of vertices $u,w$ where $u\in V(G_{\leq v})$ and $w\in V(G) \setminus V(G_{\leq v})$ (Lemma~\ref{lem:cover-clique}). Observe that $v$ cannot be a leaf or  parent of a leaf in $T$, and therefore, we observe that the vertices in the bag $\beta(v)$ induce a \emph{clique cutset} (clique whose removal disconnects the graph) and all shortest paths between $u,w$ contain vertices from $\beta(v)$. Let $X'\subseteq \beta(v)$ be the vertices lying in some shortest path between $u,w$.
Observe that ``pre-selecting'' the vertices of $X'$ captures the effect of $w$ on $G_{\leq v}$, if $w$ is selected in the solution set. For a given solution set, doing this for all vertices of the set, we obtain a collection of subsets of $\beta(v)$. Hence, by considering all $2^{2^{|\beta(v)|}}$ different collections of subsets of $\beta(v)$, we can capture the possible effects of all the solution vertices in $G-G_{\leq v}$, i.e., ``\emph{exterior vertices}'' on $G_{\leq v}$. For different collections of subsets of $\beta(v)$, we have different ``types'' of partial solutions. 

Once we have all the partial solutions for the children of a node $v$, we show how to extend these to get the partial solutions of $v$. It is possible that a partial solution of a node of some ``type'' is extended to a partial solution of its parent of a different ``type''. Depending on the node under consideration, we define an exhaustive set of rules to ensure that the extended partial solutions are valid (Definitions~\ref{def:introduce}, \ref{def:forget}, \ref{def:join}, \ref{def:root}). We prove the exhaustiveness of these rules in Lemmas~\ref{lem:introduce-minimal}, \ref{lem:forget-minimal}, \ref{lem:join-minimal}, \ref{lem:root-minimal} and we prove the correctness of our algorithm in Lemma~\ref{lem:correct}.

\subsection{Geodetic Sets and Clique Cutsets}
Recall that on general (non-chordal) graphs, \textsc{Minimum Geodetic Set} is W[1]-hard when parameterized by treewidth (even for much more restricted parameters solution size, feedback vertex set number and pathwidth, combined)~\cite{kellerhals2020} and even NP-hard for graphs of feedback vertex set number~13 and pathwidth~17~\cite{DBLP:conf/iwpec/Tale25}. What helps us achieve tractability in the case of chordal graphs is the existence of clique cutsets in chordal graphs and the fact that they interact nicely with the shortest paths. We introduce a few definitions and notations first. 

\begin{definition}
    Consider a graph $G$, and let $y$ be a vertex of $G$ and $X$ be a subset of $V(G)$ such that $G[X]$ is a clique. We say that $y$ is \emph{close} to a nonempty set $A \subseteq X$ with respect to $X$ if $d(y,x) = d_y$ when $x \in A$ and $d(y,x) = d_y +1$ when $x \in X \setminus A$, for some integer $d_y$.
\end{definition}  

Using this definition, we have the following intuitive but crucial lemma.

\begin{lemma}\label{lem:cover-clique}
Let $X$ be a clique cutset of a graph $G$ and $u,v$ be vertices lying in two different connected components of $G-X$. Let $A$ and $B$ be two subsets of $X$ such that $u$ (resp. $v$) is close to $A$ (resp. $B$) with respect to $X$. Then, a vertex $x\in I(u,v) \cap X$ if and only if $x \in A \cap B$ or, $A \cap B = \emptyset$ and $x \in A \cup B$.
\end{lemma}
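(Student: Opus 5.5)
Since $X$ separates $u$ from $v$, every $u$--$v$ path meets $X$. I will first show that every shortest $u$--$v$ path meets $X$ in exactly one vertex or in two adjacent consecutive vertices. Indeed, $X$ is a clique, so if a shortest path $P$ met $X$ in two vertices $x,x'$, these must be consecutive on $P$; otherwise we could shortcut through the edge $xx'$. Hence $P$ uses at most two vertices of $X$, and if two, they are consecutive.

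Next I compute $d(u,v)$ in terms of $d_u$ and $d_v$ (the constants from closeness to $A$ and $B$). For each $x\in X$, any path through $x$ has length at least $d(u,x)+d(x,v)$, and this is attained by concatenating shortest paths; a $u$--$x$ geodesic stays on $u$'s side until it reaches $X$, which causes no problem. Also $d(u,v)=\min_{x\in X}(d(u,x)+d(x,v))$, because every path hits $X$. Now $d(u,x)+d(x,v)$ equals $d_u+d_v$ if $x\in A\cap B$, equals $d_u+d_v+1$ if $x$ lies in exactly one of $A,B$, and equals $d_u+d_v+2$ otherwise. Moreover, $x\in I(u,v)$ if and only if $d(u,x)+d(x,v)=d(u,v)$; this is the standard characterization and needs no cutset.

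The case analysis is then straightforward. If $A\cap B\neq\emptyset$, then $d(u,v)=d_u+d_v$, and exactly the vertices of $A\cap B$ achieve it. If $A\cap B=\emptyset$, I must show $d(u,v)=d_u+d_v+1$. Pick $a\in A$ and $b\in B$; they are distinct and adjacent, so the path $u\to a\to b\to v$ has length $d_u+1+d_v$. A lower bound also holds: the minimum over single $x$ is already $d_u+d_v+1$ (note that $A$ and $B$ are nonempty). So the vertices achieving equality are exactly those of $A\cup B$, while vertices outside $A\cup B$ give $d_u+d_v+2$ and are excluded.

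The main subtlety is to justify $d(u,v)=\min_x(d(u,x)+d(x,v))$ and the equivalence $x\in I(u,v)\iff d(u,x)+d(x,v)=d(u,v)$. Both are immediate from the fact that every $u$--$v$ path intersects $X$ together with the triangle inequality. With these in hand, the first step about consecutive vertices is not even needed, though it gives intuition.
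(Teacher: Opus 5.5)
Your proof is correct and follows essentially the same route as the paper. Both compute $d(u,v)$ as $d_u+d_v$ or $d_u+d_v+1$ according to whether $A\cap B$ is empty, and then test each $x\in X$ by comparing $d(u,x)+d(x,v)$ with $d(u,v)$. Your explicit identity $d(u,v)=\min_{x\in X}\bigl(d(u,x)+d(x,v)\bigr)$ makes rigorous the step the paper phrases as ``any shorter path would contradict the definition of $A$ and $B$,'' and your opening remark about geodesics meeting $X$ in at most two consecutive vertices is, as you note, not needed.
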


\begin{proof}

Assume that for all $y \in A$, $d(u,y) = d_u$ and for all $y \in B$, $d(v,y) = d_v$ for some integers $d_u$, $d_v$ and let $x \in X$. We distinguish the following two cases. See Figure~\ref{fig:cutset} for an illustration.

\begin{figure}
\centering
\begin{subfigure}{.5\textwidth}
  \centering
  \includegraphics[width=.9\linewidth]{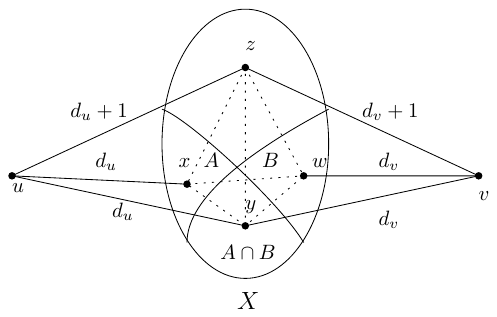}
  \caption{$A\cap B \neq \emptyset$ and $y\in A\cap B$. Here, $d(u,v)= d_u +d_v$.}
  \label{fig:OC1}
\end{subfigure}%
\begin{subfigure}{.5\textwidth}
  \centering
  \includegraphics[width=.9\linewidth]{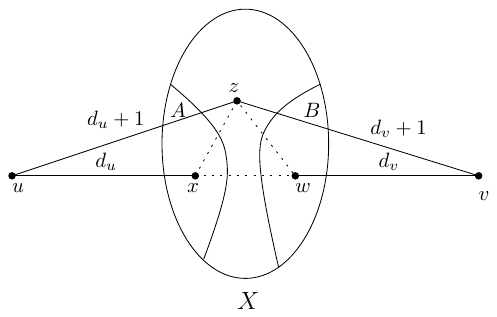}
  \caption{$A\cap B = \emptyset$. Here, $d(u,v)= d_u +d_v+1$.}
  \label{fig:OC2}
\end{subfigure}
\caption{Illustration of Lemma~\ref{lem:cover-clique}. In both (a) and (b), any $u,v$-path passing through $z$ has length at least $d_u+d_v+2$, and hence $z\notin I(u,v)$. In (a), $y\in I(u,v)$ and $x,w\notin I(u,v)$ since any $u,v$-path that passes through any of $x$ or $y$ has length at least $d_u+d_v+1$. In (b), both $x,w\in I(u,v)$.} 
\label{fig:cutset}
\end{figure}

\begin{enumerate}
    \item  $A \cap B\neq\emptyset$. Let $y$ be some vertex in $A\cap B$. First, observe that $d(u,v) \leq d(u,y)+d(v,y) = d_u + d_v$ and $d(u,v) \geq d_u+d_v$ as any shorter path  would contradict the definition of $A$ and $B$. Thus $d(u,v) = d_u+d_v$. If $x \in A \cap B$, then $d(u,x) + d(x,v) = d_u + d_v$,  and hence, $x \in I(u,v)$. Conversely, if $x \notin A \cap B$, then observe that $d(u,x) + d(x,v) \geq d_u + d_v +1$ since every $u$ to $v$ path contains an edge entirely contained in $G[X]$, and hence, $x \notin I(u,v)$.
    
    \item $A \cap B = \emptyset$. First, observe that $d(u,v) \geq d_u + d_v + 1$, as any shorter path would imply the existence of a vertex $y \in X$ such that $d(u,y) + d(y,v) \leq d_u + d_v $, which would contradict $A \cap B = \emptyset$. Since there is some vertex $y\in A$, $d(u,v) \leq d_u + d_v + 1$ as $d(u,y) + d(y,v) \leq d_u + d_v +1$. Hence, $d(u,v) = d_u + d_v + 1$.  Now, if $x\in A \cup B$, then  $d(u,x) + d(x,v) = d_u + d_v +1$, and hence $x \in I(u,v)$. Conversely, if $x \notin A \cup B$, then $d(u,x) + d(x,v) = d_u + d_v +2$, and hence $x \notin I(u,v)$.
\end{enumerate}
This completes the proof.
\end{proof}

\subsection{The Algorithm}\label{SS:algo}

From now on, let $T$ denote a nice tree decomposition of $G$. For a node $v\in T$, let $G_{\leq v}$ be the subgraph of $G$ induced by the vertices present in the nodes of the subtree of $T$ rooted at $v$; let $G_{< v}$ denote $G_{\leq v}-\beta(v)$; $G_{\geq v}$ denote $G-V(G_{ <v})$; and $G_{> v}$ denote $G_{\geq v}-\beta(v)$. For a node $v$, let $\mathcal{T}_v$ be the set of all $5$-tuples $\typpe = (\typent, \typext, \typecov, \typeuncov, \typebag)$  where $\typent,\typext$ are Boolean vectors of size $2^{|\beta(v)|}$ indexed by subsets of $\beta(v)$ and $\typecov,\typeuncov,\typebag$ are subsets of $\beta(v)$, and together they constitute a partition of $\beta(v)$. Since $|\beta(v)|\leq \omega(G)$, the cardinality of $\mathcal{T}_v$ is $2^{2^{O(\omega(G))}}$. We provide an intuition for the elements of the tuple in the following paragraphs.

For a node $v$ and a $5$-tuple $\typpe = (\typent, \typext, \typecov, \typeuncov, \typebag)$  let $\auxGraph{v}{\typpe}$ denote the graph obtained by adding a vertex $S_v$ to $G_{\leq v}$ whenever there is a set $S\subseteq \beta(v)$ with $\typext[S] = 1$, and making $S_v$ adjacent to each $x \in S$. See Figure~\ref{fig:EXT} for an illustration. Let $\Svertices{v}{\typpe} = \set{S_v \colon S\subseteq \beta(v),\typext[S]=1}$ denote the set of newly added vertices. Observe that $G_{\leq v}$ is an induced subgraph of $\auxGraph{v}{\typpe}$ for any $5$-tuple $\typpe \in \mathcal{T}_v$.

\begin{figure}
    \centering
    \includegraphics[scale=1]{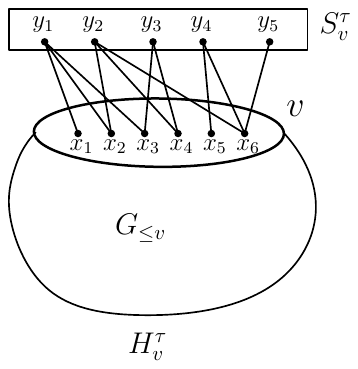}
    \caption{Here $v$ is a node of $T$ such that $\beta(v) = \set{x_1,\ldots,x_6}$. Further, let $\typpe = (\typent, \typext, \typecov, \typeuncov, \typebag)$  be a $5$-tuple such that for $S\subseteq \beta(v)$, $\typext[S] = 1$ iff $S\in \mathcal{S}=\set{\set{x_1,x_2,x_3}, \set{x_2,x_4,x_6}, \set{x_3,x_4}, \set{x_5,x_6}, \set{x_6}}$. We construct $\auxGraph{v}{\typpe}$ by first considering the graph $G_{\leq v}$ and for each $S\in \mathcal{S}$, we add a vertex in $\Svertices{v}{\typpe}$ and connect it to each vertex of $S$.}
    \label{fig:EXT}
\end{figure}

We begin by providing an intuition for our $5$-tuple parameters.   For each $5$-tuple $\typpe$ of $\mathcal{T}_v$, we will store a ``partial solution'' of minimum size, say $Q\subseteq V(G_{\leq v})$, which has the following properties: In graph $\auxGraph{v}{\typpe}$, for each vertex $x\in \typecov$, there exists a pair of vertices $w_1,w_2 \in Q\cup \Svertices{v}{\typpe}$ such that $w_1\in Q$ and $x$ is covered by $w_1$ and $w_2$ (i.e., $x\in I(w_1,w_2)$).
Intuitively, for a set $A\subseteq \beta(v)$ of vertices, the Boolean $\typent[A]$ represents whether there is some vertex $y$ in the partial solution for $G_{\leq v}$ such that $y$ is close to $A$ with respect to $\beta(v)$ (``int'' stands for ``interior'').  The Boolean $\typext[A]$ represents whether we need to add, at a later step of our algorithm, some vertex $y$ which is close to $A$ with respect to $\beta(v)$. Here, $y$ is a vertex of $G_{> v}$ that needs to be added later to the solution, in the upper part of the tree (``ext'' stands for ``exterior''). Moreover, $\typebag$ is the set of vertices of the partial solution that are present in the bag (i.e., $Q\cap \beta(v)$).  Furthermore, $\typecov$ imposes the condition on which vertices of $\beta(v)$ must be covered by a pair of vertices, one of which must be from the partial solution and the other vertex could be either from the partial solution or could be an exterior solution vertex from $G_{>v}$. Finally, $\typeuncov = \beta(v)\setminus (\typebag\cup \typecov)$ are the vertices of $\beta(v)$ that will only be covered by a pair of vertices from $G_{>v}$.

Since each non-empty bag of our tree decomposition induces a clique, the information whether there is a vertex in the partial solution that is close to $A$ with respect to $\beta(v)$ (stored in $\typent[A]$) and whether there will be a vertex that will be selected later in the solution that will be close to $A$ with respect to $\beta(v)$ (stored in $\typext[A]$) shall be very useful to us (using Lemma~\ref{lem:cover-clique}). We formally define the properties of the $5$-tuples and partial solutions that we will be using for our algorithm in the following definition.

\begin{definition}\label{def:type}
Let $v$ be a node of $T$. A $5$-tuple $\typpe = (\typent, \typext, \typecov, \typeuncov, \typebag)$  of $\mathcal{T}_v$ is a ``type associated with $v$'' if there exists a set of vertices $X\subseteq V(\auxGraph{v}{\typpe})$ such that the following hold.
\begin{enumerate}[label=(\roman*)]
    
    \item \label{it:type-1} $\Svertices{v}{\typpe} \subseteq X$ and $\typebag = X \cap \beta(v)$.
        
    \item\label{it:type-2} For each vertex $w\in V(G_{< v}) \cup \typecov$, there exists a pair $w_1,w_2 \in X$ such that $w\in I(w_1,w_2)$ and $w_1 \in X\setminus \Svertices{v}{\typpe}$.

    \item\label{it:type-3} For a subset $A\subseteq \beta(v)$, we have $\typent[A]=1$ if and only if $X \cap V(G_{\leq v})$ contains a vertex which is close to $A$ with respect to $\beta(v)$.
    
\end{enumerate}

Moreover, we shall say that the set $X\setminus \Svertices{v}{\typpe}$ is a ``{\emph certificate}'' for $(v,\typpe)$.
\end{definition}

\medskip
\noindent\textbf{Remark} Note that Definition~\ref{def:type} requires that vertices in $\typecov$ are covered by $X$, but it does not explicitly forbid vertices in $\typeuncov$ from being covered. In our dynamic programming algorithm, if a partial solution $X$ covers a vertex $w \in \typeuncov$, it will also be a valid certificate for a "stronger" type where $w \in \typecov$. Since we compute a minimum-size set, the algorithm will naturally propagate the type that correctly identifies $w$ as covered whenever this information is necessary for extending the solution. Thus, for simplicity, we do not enforce the negative constraint in the definition.

Observe that the only type associated with the root node of $T$ is $\typpe_0 = (\mathbf{0},\mathbf{0},\emptyset,\emptyset,\emptyset)$ where $\mathbf{0}$ denotes the all-$0$ vector. Now, we characterize the types associated with different sorts of nodes of the nice tree decomposition $T$. Lemma~\ref{lem:cover-clique} implies that to compute an optimal (partial) solution for $G_{\leq v}$ for a given node $v$, it is sufficient to ``guess'' for which subsets $A$ of $\beta(v)$, there will exist (in the future solution that will be computed for ancestors of $v$) a vertex $y$ which is close to $A$ with respect to $\beta(v)$. 

Suppose we have a fixed geodetic set $D$ of $G$. In the following lemma, we show that when $D$ is restricted to a particular subgraph $G_{\leq v}$, the set $D\cap G_{\leq v}$ serves as a certificate of some $(v,\typpe)$. 

\begin{lemma}\label{lm:type}
Let $D$ be a geodetic set of $G$, then for each node $v\in T$, there is a $5$-tuple $\typpe=(\typent,\typext,\typecov,\typeuncov,\typebag)$ such that $\typpe$ is a type associated with $v$ and $D\cap V(G_{\leq v})$ is a certificate of $(v,\typpe)$.
\end{lemma}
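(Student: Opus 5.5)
We construct the tuple $\typpe$ explicitly from $D$ and then check the three conditions of Definition~\ref{def:type} with $X = (D\cap V(G_{\leq v}))\cup \Svertices{v}{\typpe}$.

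\textbf{Defining $\typpe$.} Fix a node $v$ and put $Q = D\cap V(G_{\leq v})$.
\begin{itemize}
\item Let $\typebag = Q\cap\beta(v)$.
\item For $A\subseteq\beta(v)$, set $\typent[A]=1$ exactly when some vertex of $Q$ is close to $A$ with respect to $\beta(v)$.
\item Set $\typext[A]=1$ exactly when some vertex of $D\cap V(G_{>v})$ is close to $A$ with respect to $\beta(v)$. Every vertex $y$ is close to exactly one nonempty set, namely the set of vertices of the clique $\beta(v)$ nearest to $y$, since distances to a clique differ by at most one.
\item Let $\typecov$ be the set of vertices $x\in\beta(v)\setminus\typebag$ for which there exist $w_1\in Q$ and $w_2\in D$ with $x\in I(w_1,w_2)$ in $G$.
\item Let $\typeuncov=\beta(v)\setminus(\typebag\cup\typecov)$.
\end{itemize}
These sets form a partition of $\beta(v)$. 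Conditions \ref{it:type-1} and \ref{it:type-3} then hold by construction. For \ref{it:type-3}, distances from $y\in V(G_{\leq v})$ to $\beta(v)$ are the same in $G$ and in $\auxGraph{v}{\typpe}$; this needs the distance claim below.

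\textbf{Key distance claim.} For $a,b\in V(G_{\leq v})$, we claim $d_{\auxGraph{v}{\typpe}}(a,b)=d_G(a,b)$, and the vertices of $G_{\leq v}$ lying on shortest $a$--$b$ paths are the same in both graphs.
\begin{itemize}
\item A path through $G_{>v}$ or through a new vertex $S_v$ enters and leaves via $\beta(v)$, so it uses at least two steps outside.
\item Since $\beta(v)$ is a clique (or empty, at the root and leaves, where the statement is trivial), that detour can be replaced by a single edge inside $\beta(v)$. This replacement is strictly shorter, so no shortest path in either graph leaves $G_{\leq v}$.
\end{itemize}
In particular, every shortest path in $G$ between vertices of $G_{\leq v}$ stays in $G_{\leq v}$.

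\textbf{Verifying condition~\ref{it:type-2}.} Take $w\in V(G_{<v})\cup\typecov$. Since $D$ is a geodetic set, there are $w_1,w_2\in D$ with $w\in I_G(w_1,w_2)$. If $w\in D$ we may take $w_1=w_2=w$, so assume otherwise.
\begin{itemize}
\item \emph{Both $w_1,w_2$ lie in $G_{>v}$.} This cannot happen when $w\in V(G_{<v})$, because $\beta(v)$ separates $G_{<v}$ from $G_{>v}$. A shortest path would enter and leave $G_{<v}$ through the clique, and shortcutting through the clique shortens it. For $w\in\typecov$ this case is excluded by the definition of $\typecov$, which (after relabeling) supplies a pair with one endpoint in $Q$.
\item \emph{Both $w_1,w_2$ lie in $Q$.} The key claim shows that $w\in I(w_1,w_2)$ also in $\auxGraph{v}{\typpe}$.
\item \emph{$w_1\in Q$ and $w_2\in G_{>v}$.} Let $B$ be the set $w_2$ is close to, so $\typext[B]=1$ and $B_v\in X$. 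We show $w\in I_{\auxGraph{v}{\typpe}}(w_1,B_v)$. Every shortest $w_1$--$w_2$ path in $G$ passes through $\beta(v)$. Its last vertex in $\beta(v)$ can be taken in $B$ (otherwise shortcut to a vertex of $B$). So $d_G(w_1,w_2)=d(w_1,B)+d(w_2,B)$, where $d(w_1,B)=\min_{b\in B}d(w_1,b)$. In $\auxGraph{v}{\typpe}$ we have $d(w_1,B_v)=d(w_1,B)+1$, using again that $w_1$ cannot reach $B$ more cheaply outside $G_{\leq v}$.
\begin{itemize}
\item If $w\in V(G_{<v})$, it lies on the subpath from $w_1$ to some $b\in B$ that is a shortest $w_1$--$b$ path with $d(w_1,b)=d(w_1,B)$. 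Appending the edge $bB_v$ gives a shortest $w_1$--$B_v$ path through $w$.
\item If $w\in\beta(v)$, the same argument applies. Equivalently, when $w_1$ is in a different component of $G-\beta(v)$, Lemma~\ref{lem:cover-clique} applied to the clique cutset $\beta(v)$ of $\auxGraph{v}{\typpe}$ with the pair $w_1,B_v$ matches its application to $w_1,w_2$ in $G$.
\end{itemize}
\end{itemize}

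\textbf{Main obstacle.} The delicate step is the last case, matching shortest-path membership in $G$ with membership in the auxiliary graph. One must check that $B_v$ faithfully simulates $w_2$, that is, $d(w_1,w_2)=d(w_1,B_v)+d_{w_2}-1$ uniformly over $w_1\in V(G_{\leq v})$. This is a consequence of $w_2$ being close to $B$ and every path crossing the clique $\beta(v)$, and it is exactly where Lemma~\ref{lem:cover-clique} and the clique-cutset property are used.
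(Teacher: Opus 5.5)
Your proof is correct and uses the paper's construction of $\typebag$, $\typent$ and $\typext$ from $D$ exactly. The one difference is that you define $\typecov$ through coverage in $G$ by pairs from $D$, whereas the paper defines it directly in $\auxGraph{v}{\typpe}$ by pairs from $X$ with one endpoint outside $\Svertices{v}{\typpe}$, so that the $\typecov$ part of condition (ii) holds by definition (your proxy argument with $B_v$ shows the two sets coincide); your distance claim and proxy argument also supply the verification for $V(G_{<v})$, which the paper only asserts with ``observe''.
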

\begin{proof}
We construct $\typpe$ from $D$ as follows. Define $\typebag = D\cap \beta(v)$. For each vertex $u\in D\cap V(G_{\leq v})$ we find the set $Z_u\subseteq \beta(v)$ such that $u$ is close to $Z_u$ with respect to $\beta(v)$ and put $\typent[Z_u]=1$. For each $u\in D\cap V(G_{>v})$ we find the set $Z_u\subseteq \beta(v)$ such that $u$ is close to $Z_u$ with respect to $\beta(v)$ and put $\typext[Z_u]=1$. Similarly, let us construct $H^{\typpe}_v$, and let $X= S^{\typpe}_v \cup (D\cap G_{\leq v})$. Now, for each vertex $w\in \beta(v)\setminus D$, if $w\in I(w_1,w_2)$ such that $w_1,w_2\in X$ and $w_1\in X\setminus S^{\typpe}_v$, then we put $w$ in $\typecov$. Finally, we put $\typeuncov=\beta(v) \setminus (\typebag\cup \typecov)$. Observe that $D \cap V(G_{\leq v})$ is a certificate of $(v,\typpe)$ and $\typpe$ is a type associated with $v$. 
\end{proof}

For a node $v$, there might be some $5$-tuples in $\mathcal{T}_v$ which are not associated with $v$. In the following lemma, we establish certain restrictions that any type associated with $v$ must follow.

\begin{lemma}\label{lem:valid}
Let $v$ be a node of $T$ and $\typpe=(\typent,\typext,\typecov,\typeuncov,\typebag)$ be a type associated with $v$. Then $\typpe$ must satisfy all of the following conditions.
\begin{enumerate}[label=(\alph*)]
\item Whenever we have a vertex $u \in \typebag$ we have  $\typent[\set{u}] = 1$.

\item $\typent[\emptyset] = \typext[\emptyset] = 0$. 

\item Let $x \in \beta(v)$ and $A,B\subseteq \beta(v)$ such that $\typent[A] = 1$ and $\typext[B] = 1$. Then, if either $x \in A \cap B$ or $A \cap B = \emptyset$ and $x \in A \cup B$, then $x\in \typecov$.
\end{enumerate}
\end{lemma}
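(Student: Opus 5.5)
The plan is to fix a certificate for $(v,\typpe)$ and read each condition off Definition~\ref{def:type}. Let $X$ be the set from Definition~\ref{def:type}, so that $X\setminus\Svertices{v}{\typpe}$ is a certificate. Two preliminary observations will be used throughout.

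\emph{$G_{\leq v}$ is isometric in $G$ and in $\auxGraph{v}{\typpe}$.} Any path leaving $G_{\leq v}$ must exit and re-enter through the clique $\beta(v)$, so it can be shortcut by a single edge of $\beta(v)$. Likewise, a detour through some $S_v$ joins two bag vertices by a path of length $2$, while they are already adjacent.

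\emph{$\beta(v)$ is a clique cutset of $\auxGraph{v}{\typpe}$.} It separates $\Svertices{v}{\typpe}$ from $G_{<v}$. Moreover, each $S_v$ is close to $S$ with respect to $\beta(v)$, with $d_{S_v}=1$.

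For (a), take $u\in\typebag$. By item~\ref{it:type-1}, $u\in X\cap V(G_{\leq v})$. Since $d(u,u)=0$ and $d(u,x)=1$ for every other $x\in\beta(v)$, the vertex $u$ is close to $\{u\}$. Item~\ref{it:type-3} then gives $\typent[\{u\}]=1$.

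For (b), the notion ``close to $A$'' is defined only for nonempty $A$. So no vertex of $X\cap V(G_{\leq v})$ is close to $\emptyset$, and item~\ref{it:type-3} forces $\typent[\emptyset]=0$. Similarly, the construction of $\auxGraph{v}{\typpe}$ only adds a vertex $S_v$ for a set $S$ with $\typext[S]=1$ that some exterior vertex can be close to. For this to be meaningful we need $\typext[\emptyset]=0$; I would note explicitly that this is forced by how types arise in Lemma~\ref{lm:type}, which records $\typext[Z_u]=1$ only for the nonempty set $Z_u$.

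For (c), take $y\in X\cap V(G_{\leq v})$ close to $A$ (it exists by item~\ref{it:type-3}), and take $B_v\in\Svertices{v}{\typpe}$, which is close to $B$. There are two cases.

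\emph{Case $y\in V(G_{<v})$.} Here $y$ and $B_v$ lie in different components of $\auxGraph{v}{\typpe}-\beta(v)$. Lemma~\ref{lem:cover-clique} then says exactly that $x\in I(y,B_v)$ under the hypothesis of (c).

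\emph{Case $y\in\beta(v)$.} Then $A=\{y\}$ and Lemma~\ref{lem:cover-clique} does not apply directly, so we check by hand. If $x\in A\cap B$, then $x=y\in\typebag$. If $A\cap B=\emptyset$ and $x\in A\cup B$, then either $x=y$, or $x\in B$, in which case $y,x,B_v$ is a shortest path of length $2$ (as $y\notin B$), so again $x\in I(y,B_v)$.

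So in all cases, either $x\in\typebag$ or $x$ is covered by a pair $w_1=y\in X\setminus\Svertices{v}{\typpe}$, $w_2=B_v\in X$. This is precisely the covering requirement defining $\typecov$.

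The main obstacle is the final step. Definition~\ref{def:type} only \emph{requires} coverage of $\typecov$ and, by the Remark, does not forbid $\typeuncov$ vertices from being covered. I would therefore argue (c) in the form it is used: for the types actually produced, namely those constructed as in Lemma~\ref{lm:type}, any non-bag vertex so covered is placed in $\typecov$. Equivalently, the conclusion should be read as $x\in\typecov\cup\typebag$, since $\typebag$ is disjoint from $\typecov$ by the partition and the boundary case $x=y$ genuinely lands in $\typebag$.
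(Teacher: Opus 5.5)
Your proposal is correct and follows the paper's route: (a) because a bag vertex is close to its own singleton, (b) because no vertex is close to $\emptyset$, and (c) by Lemma~\ref{lem:cover-clique}. Your separate check of the case $y\in\beta(v)$, where that lemma does not literally apply, fills a step the paper skips. Your caveats are also well founded. Read literally, (c) together with (a) forces every vertex of $\typebag$ into $\typecov$ as soon as some $\typext[B]=1$, contradicting the partition. Moreover, neither $\typext[\emptyset]=0$ nor the conclusion $x\in\typecov$ is enforced by Definition~\ref{def:type}. So proving $x\in\typecov\cup\typebag$ for types that record every covered non-bag vertex in $\typecov$, as those built in Lemma~\ref{lm:type} do, is the right reading.
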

\begin{proof}
    The proof of (a) follows from the fact that for a clique cutset $X$ and a vertex $u\in X$, $u$ is close to $\set{u}$ with respect to $X$. The proof of (b) follows from the fact that there can be no vertex close to $\emptyset$ with respect to $\beta(v)$. The proof of (c) follows directly from Lemma~\ref{lem:cover-clique}.
\end{proof}

From now on for a node $v$, we will only consider the $5$-tuples which satisfy the conditions of Lemma~\ref{lem:valid}. 





Let $v$ be an introduce node and $u$ be its child, and let $\typpe,\typpe_1$ be types associated with $v,u$ respectively. Below, we state some \emph{compatibility rules} that $\typpe$ and $\typpe_1$ must follow so that the certificate of $(u,\typpe_1)$ can be extended to a  certificate of $(v,\typpe)$.

\begin{definition}\label{def:introduce}
Let $v$ be an introduce node and $\typpe$ be a type associated with $v$. Let $u$ be the child of $v$ such that $\beta(v) =\beta(u) \cup \{x\}$ and $\typpe_1$ be a type associated with $u$. The pair $(\typpe,\typpe_1)$ is \emph{compatible} if the following holds.

\begin{enumerate}[label=(\alph*)]

\item \label{it:intro-a} $\typebag_1 = \typebag \setminus \set{x} =\typebag \cap \beta(u)$.

\item \label{it:intro-bnew}  If $x\notin \typebag$, then $\typecov_1 = \typecov \setminus \set{x} = \typecov \cap \beta(u)$ and $\typeuncov_1 = \typeuncov \setminus \set{x} = \typeuncov \cap \beta(u)$.

\item \label{it:intro-bnew1} If $x\in \typebag$, then let us define the set $Cov(v)$ as follows: for each $w\in \beta(u)$, we put the vertex $w$ in $Cov(v)$ if and only if all of the following are true: 
\begin{enumerate}
    \item $\exists A\subseteq \beta(u)$ with  $w\in A$ such that $\typext[A] = \typext_1[A]=1$, and 
    \item for each $B\subseteq \beta(u)$ such that $w\in \beta(u)$, $\typent_1[B] = 0$, and 
    \item for each $B\subseteq \beta(u)$ with $w\in \beta(u)$ such that $\typext_1[B] = 1$, $\typebag_1 \subseteq  B\setminus \{w\}$.
\end{enumerate}
Then,  $\typecov_1 = \typecov \setminus Cov(v)$, $\typecov = \typecov_1\cup Cov(v)$, and $\typeuncov_1 = \typeuncov \cup Cov(v)$. 

\item\label{it:intro-c} For all non-empty $A \subsetneq \beta(u)$, $\typext_1[A] = 1$ if and only if  $\typext[A] = 1$ or $\typext[A \cup \set{x}] = 1$. 
    
\item\label{it:intro-d} $\typext_1[\emptyset] = 0$. Moreover, given that $\beta(u) \neq \emptyset$, $\typext_1[\beta(u)] = 1$ if and only if  $\typext[\beta(v)] = 1$, $\typext[\beta(u)] = 1$, $x \in \typebag$ or $\typext[\set{x}] = 1$.

\item\label{it:intro-e} If $x\in \typecov$ then there exist non-empty sets $A\subseteq \beta(u)$, $B\subseteq \beta(v)\setminus A$ such that $\typent_1[A]=1$ and $\typext[B\cup \set{x}]=1$.

\item\label{it:intro-f} $\typent[\set{x}] = 1$ if and only if $x \in \typebag$.
    
\item \label{it:intro-g} For all non-empty $A \subseteq \beta(u)$, $\typent [A  \cup \set{x}] = 0$.    

\item \label{it:intro-h} For all $A \subseteq \beta(u)$, $\typent[A] = 1$ if and only if  $\typent_1[A] = 1$.

\end{enumerate}

\end{definition}

\begin{lemma}\label{lem:introduce-minimal}
Let $v$ be an introduce node, $\typpe$ be a type associated with $v$, $u$ be the child of $v$ and $D$ be a minimal certificate of $(v,\typpe)$. Then there exists a type $\typpe_1$ associated with $u$ such that $(\typpe,\typpe_1)$ is a compatible pair.
\end{lemma}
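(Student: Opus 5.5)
The natural approach is to construct $\typpe_1$ explicitly from the minimal certificate $D$ of $(v,\typpe)$, in the same way as in the proof of Lemma~\ref{lm:type}. We then verify the conditions of Definition~\ref{def:introduce} one at a time.

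Let $X = D \cup \Svertices{v}{\typpe}$ witness that $\typpe$ is associated with $v$. Note that $V(G_{\leq v}) = V(G_{\leq u}) \cup \{x\}$, since $x$ is introduced at $v$ and cannot occur below $u$. We define $\typpe_1$ as follows.
\begin{itemize}
\item $\typebag_1 = D \cap \beta(u)$.
\item $\typent_1[A]=1$ if and only if $D\cap V(G_{\leq u})$ contains a vertex close to $A$ with respect to $\beta(u)$.
\item The exterior vector $\typext_1$ records the closeness sets, with respect to $\beta(u)$, of the vertices of $D\setminus V(G_{\leq u})$ together with those of $\Svertices{v}{\typpe}$. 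Here $D\setminus V(G_{\leq u})$ is either empty or $\{x\}$. A vertex $S_v$ of $\Svertices{v}{\typpe}$ is close to $S\setminus\{x\}$ if $S\not\subseteq\{x\}$, and close to $\beta(u)$ if $S=\{x\}$. The vertex $x$ itself is close to $\beta(u)$.
\item $\typecov_1$ consists of the vertices of $\beta(u)\setminus\typebag_1$ covered by a pair $w_1,w_2$ in $D\cap V(G_{\leq u})$ together with $\Svertices{u}{\typpe_1}$, with $w_1\in D$.
\item $\typeuncov_1$ is the remainder.
\end{itemize}
By construction, $D\cap V(G_{\leq u})$ is a certificate for $(u,\typpe_1)$, so $\typpe_1$ is a type associated with $u$. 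This is just Lemma~\ref{lm:type} applied to the auxiliary graph.

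Next we check the easy items. Item (a) is immediate from the definition of $\typebag_1$. Items (d) and (e) follow from the closeness computation above. The only way a set changes is by deleting $x$: a set $A\cup\{x\}$ restricts to $A$, the set $\{x\}$ becomes $\beta(u)$, and $\beta(v)$ becomes $\beta(u)$.

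For items (g), (h) and the forward direction of (f), we use that $x$ has no neighbours in $G_{<u}$, because $\beta(u)$ separates $x$ from $G_{<u}$. Hence for any $y\in V(G_{\leq u})$ we have $d(y,x) = 1+\min_{z\in\beta(u)} d(y,z)$, so $x$ never belongs to the closeness set of $y$, except when $y=x$. This gives (g) and (h), and it shows that $\typent[\{x\}]=1$ forces $x\in D$, which is (f).

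For (e), suppose $x\in\typecov$. By Definition~\ref{def:type}, $x$ is covered by some $w_1\in D$ and $w_2\in X$. Since $x$ is simplicial-like toward the inside, it cannot lie in $I(w_1,w_2)$ when both endpoints lie in $G_{\leq u}$. So one endpoint must be an exterior vertex $S_v$ with $x\in S$. Lemma~\ref{lem:cover-clique} then yields sets $A$ and $B$ as required, with $A\cap B=\emptyset$, because $x\notin A$. We use minimality of $D$ to exclude $w_1 = x$: that case would put $x$ in $\typebag$, not in $\typecov$.

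The main obstacle is the covered/uncovered bookkeeping in items (b) and (c), and particularly the set $Cov(v)$. A vertex $w\in\beta(u)$ may be covered in $\auxGraph{v}{\typpe}$ only via a pair involving $x\in D$, so after restriction it lies in $\typeuncov_1$, with $x$ now acting as an exterior vertex close to $\beta(u)$.

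If $x\notin\typebag$, we argue via Lemma~\ref{lem:cover-clique} applied to the clique cutset $\beta(u)$. Every covering pair for $w$ in $\auxGraph{v}{\typpe}$ translates into a covering pair in $\auxGraph{u}{\typpe_1}$, because distances between vertices of $G_{\leq u}$ and the vertices $S_v$ are preserved up to the uniform shift computed above. Hence $\typecov_1=\typecov\cap\beta(u)$.

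If $x\in\typebag$, we characterise exactly which $w$ lose coverage when $x$ is replaced by an exterior vertex. These are the $w$ that are covered only through $x$ and some $S_v$, and never through an interior vertex. We would show that this is precisely conditions (i) to (iii) defining $Cov(v)$, using Lemma~\ref{lem:cover-clique} on the pairs (interior vertex, exterior set). Condition (iii) is where minimality of $D$ and the constraint on $\typebag_1$ enter. Carefully matching this characterisation against the stated definition of $Cov(v)$ is where I expect the most friction.
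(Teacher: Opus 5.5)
Your handling of the exterior and interior vectors, and of the case $x\in\typecov$, is sound and is essentially the paper's argument. All of it rests on $d(y,x)=1+\min_{z\in\beta(u)}d(y,z)$ for $y\in V(G_{\leq u})$.

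The gap is in the covered/uncovered bookkeeping. You define $\typecov_1$ as the set of vertices of $\beta(u)\setminus\typebag_1$ that are \emph{actually} covered in $\auxGraph{u}{\typpe_1}$. You then need the equality $\typecov_1=\typecov\cap\beta(u)$ when $x\notin\typebag$, and an exact identification of the ``lost'' vertices with $Cov(v)$ when $x\in\typebag$. Neither holds in general. The type $\typpe$ is an arbitrary type associated with $v$, and, as the Remark after Definition~\ref{def:type} says, a type may put into $\typeuncov$ a vertex that its certificate does cover. An example is a bag vertex $w$ lying on a shortest path between two vertices of $D\cap V(G_{<u})$; Lemma~\ref{lem:valid} need not force such a $w$ into $\typecov$. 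For such a $\typpe$, your $\typecov_1$ contains $w$, while the compatibility rule demands $w\in\typeuncov_1$ in both cases. So your $\typpe_1$ is not compatible. Your translation argument proves only one inclusion: every vertex of $\typecov\cap\beta(u)$ is covered in $\auxGraph{u}{\typpe_1}$.

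The repair is to reverse the roles, as the paper does. Define $\typecov_1$ and $\typeuncov_1$ by the compatibility rule itself, so that rule holds by construction. Then prove only that $D\setminus\set{x}$, together with $\Svertices{u}{\typpe_1}$, covers every vertex of $\typecov_1$; a certificate may cover more than its type declares. For $x\notin\typebag$, your translation argument is exactly this.

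For $x\in\typebag$, which you leave as a plan, the missing idea is that $x\in\typebag$ forces $\typext_1[\beta(u)]=1$. So $\auxGraph{u}{\typpe_1}$ contains an exterior vertex $z$ adjacent to all of $\beta(u)$, and $z$ replaces $x$ in any covering pair $(x,w_2)$ with $w_2\in D\setminus\set{x}$. The only coverings lost are those from pairs $(x,S_v)$, where necessarily $x\notin S$ and $w\in S$. You then prove the contrapositive: a vertex $w\in\typecov$ not covered in $\auxGraph{u}{\typpe_1}$ satisfies the three conditions defining $Cov(v)$.
\begin{itemize}
\item The first holds because $\typext[S]=\typext_1[S]=1$ with $w\in S$.
\item The second holds because an interior vertex close to some $B\ni w$ would cover $w$ together with $z$.
\item The third holds because a vertex $t\in\typebag_1\setminus B$ would cover $w$ together with $B_u$.
\end{itemize}
No exact characterisation is needed, and none is true in general. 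Minimality of $D$ plays no role here either. It is not what yields the third condition, and it is not what excludes $w_1=x$ when $x\in\typecov$; that follows from $\typebag\cap\typecov=\emptyset$.
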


\begin{proof}
Let $\beta(v)=\beta(u) \cup \set{x}$. Define $\typebag_1 = D \cap \beta(u)$,  and $\typent_1[A] = 1$ if and only if there exists $y \in D \setminus \set{x}$ such that $y$ is close to $A$ with respect to $\beta(u)$. Finally, define $\typecov_1$ and $\typeuncov_1$ according to Definitions~\ref{def:introduce}\ref{it:intro-bnew} and \ref{def:introduce}\ref{it:intro-bnew1}, and define $\typext_1$ according to Definitions~\ref{def:introduce}\ref{it:intro-c} and \ref{def:introduce}\ref{it:intro-d}. To complete our proof, we will establish that the set $D'=D\setminus \set{x}$ satisfies all of the conditions in Definition~\ref{def:type} for $(u,\typpe_1)$ and the pair $(\typpe,\typpe_1)$ satisfies all conditions in Definition~\ref{def:introduce}. Observe that all of the conditions are trivially satisfied if $\beta(u) =\emptyset$, thus we assume that $\beta(u) \neq \emptyset$ for the rest of the proof. 

To begin with, we establish that $\typpe_1$ is a type associated with $u$ satisfying all of the conditions in Definition~\ref{def:type}. Since $\typpe$ is a type associated with $v$ and $D$ is a (minimal) certificate, we have that there exists a set of vertices $X\subseteq V(H^{\typpe}_v)$ such that $X=D \cup S^{\typpe}_v$ and (i) $S^{\typpe}_v \subseteq X$  and $\typebag=X\cap \beta(v)$, (ii) for each vertex $w\in (V(G_{\leq v})\setminus \beta(v)) \cup \typecov$, $\exists w_1,w_2\in X$ such that $w\in I(w_1,w_2)$ and $w_1\in X\setminus S^\typpe_v$, and (iii) for each $A\subseteq \beta(v)$, we have $\typent[A]=1$ if and only if $X\cap V(G_{\leq v})$ contains a vertex which is close to $A$ with respect to $\beta(v)$. Now, let $X'= (X\cap V(G_{\leq u}) \cup S^{\typpe_1}_u = D'\cup S^{\typpe_1}_u$ . By definition of $\typebag_1$ and $X'$, we have that $\typebag_1 = D\cap \beta(u) = (D' \cup \set{x}) \cap \beta(u) = D' \cap \beta(u)$ (since $x\notin \beta(u)$), and $S^{\typpe_1}_v \subseteq X'$. Since Definition~\ref{def:type}\ref{it:type-1} and \ref{def:type}\ref{it:type-3} are trivially satisfied, to complete our proof we will show that for each vertex $w\in V(G_{< v}) \cup \typecov$, $\exists w_1,w_2\in X$ such that $w\in I(w_1,w_2)$ and $w_1\in X\setminus S^\typpe_v$. Targeting a contradiction, let there be some vertex $w'\in V(G_{< u})\cup \typecov_1$ such that there is no $w'_1,w'_2 \in X'$ such that $w'\in I(w'_1,w'_2)$ and $w'_1\in X'\setminus S^{\typpe_1}_v$. Since, $\typecov_1\subseteq \typecov$ and $V(G_{\leq u})\setminus \beta(u) = V(G_{\leq v})\setminus \beta(v)$, we have that $w'\in (V(G_{\leq v})\setminus \beta(v)) \cup \typecov$. Therefore, $\exists w_1,w_2\in X$ such that $w'\in I(w_1,w_2)$ and $w_1\in X\setminus S^\typpe_v \subseteq (X'\setminus S^{\typpe_1}_u)\cup \set{x}$. We distinguish two cases depending on whether $x\in \typebag$ or not.

\begin{enumerate}
    \item $x\notin \typebag$: Then,  $w_1 \neq x$ and $w_2\neq x$. In this case $w_1\in X'\setminus S^{\typpe_1}_u$. Here, if $w_2\in X'\setminus S^{\typpe_1}_u$, then we reach a contradiction since $w'\in I(w_1,w_2)$, $w_1,w_2\in X'$ and $w_1\in X'\setminus S^{\typpe_1}_v$. If $w_2\notin X'\setminus S^{\typpe_1}_u$, then $w_2\in S^{\typpe}_v$. Thus, $\exists A\subseteq \beta(v)$ such that $\typext[A]=1$. Now, as per Definition~\ref{def:introduce}\ref{it:intro-c}-\ref{it:intro-d}, if $A\subsetneq \beta(v)$, then $\typext_1[A]=1$ and else, $\typext[\beta(u)]$=1. In both cases, there is a vertex $w'_2\in S^{\typpe_1}_u$ whose impact on the bag is same as $w_2$, leading us to a contradiction.

    \item $x\in \typebag$. If $w_1\neq x$ and $w_2\neq x$, then our proof is identical to the above case. Therefore, without loss of generality, let us assume that $w_1= x$. Since $x\in \typebag$, due to Definition~\ref{def:introduce}\ref{it:intro-d}, we have that $\typext_1[\beta(u)]=1$, i.e., $\exists z\in S^{\typpe_1}_{u}$ such that $N_{H^{\typpe_1}_{u}}(z)=\beta(u)$. Therefore, if $w_2\notin S^{\typpe}_v$ (i.e, $w_2\in D\cap (V(G_{\leq v})\setminus \{x\}) = D\cap V(G_{\leq u})$), then $w'\in I(w_2,z)$, a contradiction. 
    
    Thus, let us assume that $w_2\in S^{\typpe}_v$. Therefore, $\exists A\subseteq \beta(v)\setminus \{x\} =\beta(u)$ with $w'\in A$ such that $\typext[A] = \typext_1[A]=1$ (Definition~\ref{def:introduce}\ref{it:intro-h}). We will establish that $w'$ satisfies all three conditions of Definition~\ref{def:introduce}\ref{it:intro-bnew1}. Moreover, if $\exists B \subseteq \beta(u)$ with $w'\in B$ such that $\typent_1[B]=1$, then $w'$ is covered due to Lemma~\ref{lem:valid}(c). Therefore, we have that $\forall B\subseteq \beta(u)$ with $w'\in \beta(u)$, $\typent_1[B]=0$. Next, if $\exists B\subseteq \beta(u)$ with $w'\in B$ such that $\typext_1[B]=1$ and $\typebag_1$ contains a vertex $z\in \beta(u) \setminus (B \setminus \set{w'})$, then observe that $w'\in I(z,B_v)$. Therefore, we have that $\forall B\subseteq \beta(u)$ with $w'\in \beta(u)$ such that $\typext_1[B] =1$, $\typebag_1\subseteq B \setminus \set{w'}$. Observe that $w'$ satisfies all three conditions of Definition~\ref{def:introduce}\ref{it:intro-bnew1}, implying $w'\in Cov(v)$. Since $\typecov_1 = \typecov \setminus Cov(v)$ (recall that $x\in \typebag$ and hence Definition~\ref{def:introduce}\ref{it:intro-bnew1} is applied) and $w'\in Cov(v)$, we have that $w'\notin \typecov_1(u)$. This is a contradiction to our assumption that $w'\in \typecov_1$. This completes our proof.
\end{enumerate}

Finally, we establish that the pair $(\typpe,\typpe_1)$ satisfies all conditions in Definition~\ref{def:introduce}. Observe that  the conditions in Definition~\ref{def:introduce}\ref{it:intro-a}, \ref{def:introduce}\ref{it:intro-bnew}, \ref{def:introduce}\ref{it:intro-bnew1}, \ref{def:introduce}\ref{it:intro-c},\ref{def:introduce}\ref{it:intro-d} follow directly from our construction of $\typpe_1$. Next, we establish other conditions of Definition~\ref{def:introduce}. Since $\beta(u)$ is a clique cutset that separates $x$ from $V(G_{<u})$, we have that $\typent[\set{x}]=1$ if and only if $x\in \typebag$ and for all non-empty $A\subseteq \beta(u)$, $\typent[A\cup \{x\}]=0$ (Definition~\ref{def:introduce}\ref{it:intro-f} and \ref{def:introduce}\ref{it:intro-g}). Definition~\ref{def:introduce}\ref{it:intro-h} follows from the fact that $D'=D\setminus \{x\}$ and for each $A\subseteq \beta(u)$, $\typent [A]$ and $\typent_1[A]$ depend on $D'$. Finally, to complete our proof, we establish Definition~\ref{def:introduce}\ref{it:intro-e}. First, observe that no two vertices $y,z\in \beta(v)\setminus \{x\}$ can cover $x$ since $G[\beta(v)]$ is a clique.  Since $\beta(u) = \beta(v)\setminus \set{x}$ separates $\{x\}$ from $V(G_{<v})$, we have that for each vertex  $y\in V(G_{<v})$, $d(y,\beta(v))+1 \leq d(y,x)$. Therefore, in $H^{\typpe}_v$, for any two vertices $y,z \in V(G_{\leq v})\setminus \{x\}$,  it is not possible that $x\in I(y,z)$. Thus, if $\exists y,z\in H^{\typpe}_v$ such that $x\in I(y,z)$, then exactly one of $\set{y,z}$ must be from $S^{\typpe}_v$ (recall that one of $\set{y,z}$ must be from $V(G_{\leq v})\setminus \{x\}$). Without loss of generality, let us assume that $y\in S^{\typpe}_v$ and $z\in V(G_{\leq v})\setminus \{x\}$. Since $z\in \beta(v)\cap \typebag$, we have that $\typent_1[\{z\}] = 1$ (Lemma~\ref{lem:valid}(a)). Let $y$ correspond to a vertex of $S^{\typpe}_v$ that is close to $B\subseteq \beta(v)$, implying $\typext[B]=1$. Since $x\in I(y,z)$ and $\beta(v)$ induces a clique, we have that $z\notin B$ ($yz\notin E(H^{\typpe}_v)$) and that $x\in B$. Therefore, $\typext[B\cup \{x\}]=1$. This completes our proof. 
\end{proof}

For an introduce node $v$ with child $u$, let $\Compatible{v}$ denote the set of compatible pairs $(\typpe,\typpe_1)$ where $\typpe,\typpe_1$ are types associated with $v$ and $u$, respectively. 

Let $v$ be a forget node and $u$ be its child. Let $\typpe,\typpe_1$ be types associated with $v,u$ respectively. Below, we state some rules that $\typpe$ and $\typpe_1$ must follow so that the certificate of $(u,\typpe_1)$ can be extended to a  certificate of $(v,\typpe)$.

\begin{definition}\label{def:forget}
Let $v$ be a forget node and $\typpe$ be a type associated with $v$. Let $u$ be the child of $v$ such that $\beta(v) = \beta(u)\setminus \set{x}$ and $\typpe_1$ be a type associated with $u$. The pair $(\typpe,\typpe_1)$ is \emph{compatible} if the following holds.

\begin{enumerate}[label=(\alph*)]

    \item\label{it:forget-a} $\typebag = \typebag_1 \setminus \set{x}$,
    
    \item\label{it:forget-b} For all $A \subseteq \beta(v)$, $\typext_1[A] = 1$ if and only if  $\typext[A] = 1$,
    
    \item\label{it:forget-c} For all $A \subseteq \beta(v)$, $\typext_1[A  \cup \set{x}] = 0$.
    
    \item\label{it:forget-d} For all non-empty $A \subsetneq \beta(v)$, $\typent[A] = 1$ if and only if  $\typent_1[A] = 1$ or $\typent_1[A \cup \set{x}] = 1$.
    
    \item\label{it:forget-e} $\typent[\emptyset] = 0$. Moreover, given that $\beta(v) \neq \emptyset$, $\typent[\beta(v)] = 1$ if and only if  $\typent_1[\beta(u)] = 1$, $\typent_1[\beta(v)] = 1$, $\typent_1[\set{x}] = 1$, {or $x\in \typebag_1$}, 
    
    
    \item\label{it:forget-f} If $x\notin \typebag_1$, then $\typecov_1 = \typecov \cup \set{x}$ and $\typeuncov_1 = \typeuncov$, and if $x\in \typebag_1$, then $\typecov_1 = \typecov$ and $\typeuncov_1 = \typeuncov$. 
\end{enumerate}
\end{definition}

\begin{lemma}\label{lem:forget-minimal}
Let $v$ be a forget node, $\typpe$ be a type associated with $v$, $u$ be the child of $v$, and $D$ be a minimal certificate of $(v,\typpe)$. Then there exists a type $\typpe_1$ associated with $u$ such that $(\typpe,\typpe_1)$ is a compatible pair.
\end{lemma}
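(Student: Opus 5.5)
The plan follows the proof of Lemma~\ref{lem:introduce-minimal}. Since $G_{\leq u} = G_{\leq v}$ at a forget node, the certificate $D$ itself is the natural candidate certificate for the child. We build $\typpe_1$ directly from $D$ and from the exterior information in $\typpe$, and then check Definition~\ref{def:type} for $(u,\typpe_1)$ and every clause of Definition~\ref{def:forget}. We may assume $\beta(v)\neq\emptyset$, since otherwise all conditions are immediate.

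The child type is defined as follows.
\begin{itemize}
\item Set $\typebag_1 = D\cap\beta(u)$.
\item Set $\typent_1[A]=1$ exactly when some $y\in D$ is close to $A$ with respect to $\beta(u)$.
\item Set $\typext_1[A]=\typext[A]$ for $A\subseteq\beta(v)$, and $\typext_1[A\cup\set{x}]=0$ for all $A\subseteq\beta(v)$.
\item Define $\typecov_1,\typeuncov_1$ as in Definition~\ref{def:forget}\ref{it:forget-f}.
\end{itemize}
The justification for $\typext_1$ is that every exterior vertex $w\in G_{>v}$ is separated from $x$ by the clique cutset $\beta(v)$. Hence $d(w,x)\geq d(w,\beta(v))+1$. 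Such a $w$ is close to some $A\subseteq\beta(v)$ with respect to $\beta(v)$, and it is also close to the same $A$ with respect to $\beta(u)$, with $x$ sitting at distance $d_w+1$. This gives \ref{it:forget-b} and \ref{it:forget-c}, and it also shows that the auxiliary graphs $\auxGraph{u}{\typpe_1}$ and $\auxGraph{v}{\typpe}$ coincide: we obtain $\Svertices{u}{\typpe_1}=\Svertices{v}{\typpe}$ with the same neighbourhoods, because no added vertex is adjacent to $x$.

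With the graphs identical, Definition~\ref{def:type} for $(u,\typpe_1)$ with $X=D\cup\Svertices{u}{\typpe_1}$ is inherited from $(v,\typpe)$, and only the covering condition \ref{it:type-2} needs checking. Vertices of $G_{<u}\subseteq G_{<v}$ are covered as before. For $w\in\typecov_1$, either $w\in\typecov$, in which case it is covered as before, or $w=x\notin\typebag_1$, in which case $x\in V(G_{<v})$ and so Definition~\ref{def:type}\ref{it:type-2} for $v$ already provides a covering pair with one member in $D$. Clause \ref{it:forget-a} is immediate from the definition of $\typebag_1$, and clause \ref{it:forget-f} holds by construction.

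The main work is the interior clauses \ref{it:forget-d} and \ref{it:forget-e}, which relate closeness with respect to $\beta(u)$ to closeness with respect to $\beta(v)=\beta(u)\setminus\set{x}$. Take $y\in D$ close to $Z\subseteq\beta(u)$ with distance $d_y$. If $Z\neq\set{x}$, then $y$ is close to $Z\setminus\set{x}$ with respect to $\beta(v)$. If $Z=\set{x}$, then every vertex of $\beta(v)$ is at distance $d_y+1$, so $y$ is close to all of $\beta(v)$. Moreover $y=x$ (the case $x\in\typebag_1$) gives $Z=\set{x}$, yielding $\beta(v)$ again. Conversely, a vertex close to $\beta(v)$ with respect to $\beta(v)$ must have come from one of $Z=\beta(u)$, $Z=\beta(v)$, or $Z=\set{x}$, and a vertex close to a proper nonempty $A\subsetneq\beta(v)$ must have come from $Z=A$ or $Z=A\cup\set{x}$. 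This case analysis is the one step requiring care: one must check that no other $Z$ collapses to the same set after removing $x$, and that the list of contributing sets in \ref{it:forget-e} is exhaustive. Finally, $\typent[\emptyset]=0$ holds by Lemma~\ref{lem:valid}(b), which completes the verification of compatibility.
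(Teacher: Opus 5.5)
Your proposal is correct and follows the paper's proof essentially step by step. You build $\typpe_1$ from $D$ exactly as the paper does, and you observe that $\Svertices{u}{\typpe_1}=\Svertices{v}{\typpe}$, so that $D$ remains a certificate because $V(G_{<u})\cup\typecov_1\subseteq V(G_{<v})\cup\typecov$. You then verify clauses (d) and (e) by the same closeness case analysis on the position of $x$; phrasing it as the map $Z\mapsto Z\setminus\set{x}$ (with $\set{x}\mapsto\beta(v)$) is only a cleaner packaging of the paper's distance argument, not a different route.
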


\begin{proof}
Let $\beta(v) = \beta(u) \setminus \set{x}$. Define $\typebag_1 = D \cap \beta(u)$. Moreover, define $\typeuncov_1$ and $\typecov_1$ as per Definition~\ref{def:forget}\ref{it:forget-f}, and $\typext_1$  according to Definition~\ref{def:forget}\ref{it:forget-b} and \ref{def:forget}\ref{it:forget-c}. Finally, $\typent_1[A] = 1$ if and only if there exists $y \in D$ such that $y$ is close to $A$ with respect to $\beta(u)$. First, we establish that $\typpe_1$ and $D$ satisfy all of the conditions in Definition~\ref{def:type}, and thus $\typpe_1$ is a type associated with $u$. Consider the graphs $H^\typpe_v$ and $H^{\typpe_1}_u$. Observe that due to the definition of $\typext_1$, we have that $S^{\typpe}_v = S^{\typpe_1}_u$. We claim that $D$ is a certificate of $(u,\typpe_1)$ as well. Clearly, $S^{\typpe_1}_v \subseteq D$ and $\typebag_1 = D\cap \beta(u)$, satisfying Definition~\ref{def:type}\ref{it:type-1}. To see that Definition~\ref{def:type}\ref{it:type-2} is satisfied, observe that $ V(G_{< u})\cup \typecov_1 \subseteq V(G_{<v}) \cup \typecov$. Finally, Definition~\ref{def:type}\ref{it:type-3} is trivially satisfied since we define $\typent_1$ to satisfy this definition. Moreover, Definition~\ref{def:forget}\ref{it:forget-f} enforces that $\typpe_1$ is compatible with $\typpe$ only if $x\in \typebag_1 \cup \typecov_1$, i.e., $x\notin \typeuncov_1$.

Next, we establish  that $(\typpe,\typpe_1)$ is a compatible pair. Definition~\ref{def:forget}\ref{it:forget-a} is trivially satisfied, and we define $\typpe_1$ according to Definitions~\ref{def:forget}\ref{it:forget-b}, \ref{def:forget}\ref{it:forget-c}, and \ref{def:forget}\ref{it:forget-f}. Therefore,  to complete our proof, we only need to establish that $(\typpe,\typpe_1)$ also satisfy Definition~\ref{def:forget}\ref{it:forget-d} and \ref{def:forget}\ref{it:forget-e}. To see that Definition~\ref{def:forget}\ref{it:forget-d} is satisfied, let $A\subsetneq \beta(v)$ be an arbitrary subset or $\beta(v)$. In one direction, if $\typent[A] = 1$, then there exists a vertex $w\in V(G_{<v})\cap D$ such that $w$ is close to $A$ with respect to $\beta(v)$. Since $A\neq \beta(v)$, clearly $w\neq x$ as $G[x\cup \beta(v)]$ is a clique. Because $\beta(u) = \beta(v) \cup \{x\}$, we have that either $w$ is close to $A$ or $A\cup \set{x}$ in $\beta(u)$, implying either $\typent_1[A]=1$ or $\typent_1[A\cup \set{x}] = 1$. In the other direction, let $\typent_1[A]=1$ or $\typent_1[A\cup \set{x}] = 1$. Then, there is a vertex $w\in D$ such that $w$ is close to either $A$ or $A\cup \set{x}$   with respect to $\beta(u)$, and hence clearly, $w$ is close to $A$ with respect to $\beta(v) = \beta(u)\setminus \{x\}$, implying $\typent[A]=1$.  Finally, we prove that Definition~\ref{def:forget}\ref{it:forget-e} is satisfied. In one direction, let  $\typent[\beta(v)]  =1$, implying that there is a vertex  $w\in D\cap G_{<v}$ that is close to $\beta(v)$ (with respect to $\beta(v)$). If $w=x$, then $x\in \typebag_1$, and the claim is vacuously true. Thus, we assume that $w\neq x$. Since each vertex $y\in \beta(u)\setminus \{x\}$ is at the same distance from $w$, say $t$, we have that $d(w,x) \in \{ t-1,t,t+1\}$. If $d(w,x)=t-1$, then $\typent_1[\set{x}]=1$, if $d(w,x)=t$, then $\typent_1[\beta(u)]=1$, and if $d(w,x)=t+1$, then $\typent_1[\beta(v)]=1$. In the other direction, let $\typent_1[\beta(u)] = 1$, $\typent_1[\beta(v)] = 1$, $\typent_1[\set{x}] = 1$, or $x\in \typebag_1$. If $\typent_1[\beta(u)] = 1$, $\typent_1[\beta(v)] = 1$ or $\typent_1[\set{x}]=1$, then using similar arguments as above, we have that $\typent[\beta(v)]=1$. If $x\in \typebag_1$, then $x\in D\cap V(G_{<v})$ such that $x$ is close to $\beta(v)$ with respect to $\beta(v)$, implying $\typent[\beta(v)]=1$. This completes our proof.
\end{proof}

For a node $v$ with child $u$, let $\Compatible{v}$ denote the set of compatible pairs $(\typpe,\typpe_1)$ where $\typpe,\typpe_1$ are types associated with $v$ and $u$, respectively. 

Let $v$ be a join node, and let $u_1,u_2$ be its children. Let $\typpe,\typpe_1,\typpe_2$ be types associated with $v,u_1,u_2$ respectively. Below, we state some rules that $\typpe,\typpe_1,\typpe_2$ must follow so that the certificates for $(u_1,\typpe_1)$ and $(u_2,\typpe_2)$ can be combined and extended to a  certificate of $(v,\typpe)$. Since we assume connectedness of the input graph, we can safely assume that a join node will never join two empty nodes.

\begin{definition}\label{def:join}
Let $v$ be a join node and $\typpe$ be a type associated with $v$. Let $u_1,u_2$ be the children of $v$ and $\typpe_1,\typpe_2$ be types associated with $u_1,u_2$ respectively. The triplet $(\typpe,\typpe_1,\typpe_2)$ is compatible if all of the following holds.

\begin{enumerate}[label=(\alph*)]
    \item\label{it:join-a} $\typebag = \typebag_1 = \typebag_2$,
    
    \item\label{it:join-b} For $i,j$ such that $\set{i,j}= \set{1,2}$ and $A \subseteq \beta(v)$, $\typext_i[A] = 1$ if and only if $\typent_j[A] = 1$ or $\typext[A] = 1$.
    
    \item \label{it:join-c} Let $Cov(u_1,u_2)\subseteq \beta(v)$ be the set of all vertices $x$ such that for each $x\in Cov(u_1,u_2)$, there exists $A,B \subseteq \beta(v)$ with $\typent_1[A] =1$, $\typent_2[B] =1$ and either $ x \in A \cap B$ or $A \cap B = \emptyset$ and $x \in A \cup B$. Then, $\typecov = \typecov_1 \cup \typecov_2 \cup Cov(u_1,u_2)$. {Moreover, $\typeuncov = (\typeuncov_1 \cap \typeuncov_2) \setminus Cov(u_1,u_2)$.}
    
    \item \label{it:join-d} For $i,j$ such that $\set{i,j}= \set{1,2}$ and $A \subseteq \beta(v)$, if $\typent_i[A] = 1$ then $\typent[A] = 1$ and $\typext_j[A] = 1$.
    
    \item \label{it:join-e} For $i,j$ such that $\set{i,j}= \set{1,2}$ and $A \subseteq \beta(v)$, if $\typent_i[A] = 0$ and $\typent_j[A] = 0$ then $\typent[A] = 0$.
\end{enumerate}
\end{definition}

\begin{lemma}\label{lem:join-minimal}
Let $v$ be a join node, $\typpe$ be a type associated with $v$, and $D$ be a minimal certificate of $(v,\typpe)$. Let $u_1,u_2$ be the children of $v$. Then there are types $\typpe_1,\typpe_2$ associated with $u_1,u_2$ respectively such that $(\typpe,\typpe_1,\typpe_2)$ is a compatible triplet.
\end{lemma}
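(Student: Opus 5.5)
Following Lemmas~\ref{lem:introduce-minimal} and~\ref{lem:forget-minimal}, we construct $\typpe_1,\typpe_2$ explicitly from the minimal certificate $D$ of $(v,\typpe)$ and then verify Definitions~\ref{def:type} and~\ref{def:join}. For $i\in\{1,2\}$ let $D_i = D\cap V(G_{\leq u_i})$. Set $\typebag_i = D\cap\beta(v)$ (so $\typebag=\typebag_1=\typebag_2$ since $\beta(v)=\beta(u_1)=\beta(u_2)$). Let $\typent_i[A]=1$ iff $D_i$ contains a vertex close to $A$ with respect to $\beta(v)$. Let $\typext_i[A]=1$ iff $\typent_j[A]=1$ or $\typext[A]=1$, for $\{i,j\}=\{1,2\}$. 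The graph $H^{\typpe_i}_{u_i}$ then has one new vertex for every vertex of $D\cap V(G_{<u_j})$ (up to its closeness class) and one for every exterior vertex of $\typpe$. Finally, define $\typecov_i$ as the set of $w\in\beta(v)\setminus\typebag$ covered in $H^{\typpe_i}_{u_i}$ by a pair from $D_i\cup S^{\typpe_i}_{u_i}$ with at least one member in $D_i$, and $\typeuncov_i$ as the rest of $\beta(v)$.

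First check that $D_i$ is a certificate of $(u_i,\typpe_i)$. Items \ref{it:type-1} and \ref{it:type-3} hold by construction, and membership of $\typecov_i$ holds by definition. The substantive item is \ref{it:type-2} for $w\in V(G_{<u_i})$. Such a $w$ is covered in $H^{\typpe}_v$ by some $w_1\in D$, $w_2\in D\cup S^{\typpe}_v$. Since $\beta(v)$ is a clique cutset separating $G_{<u_1}$ from $G_{<u_2}$, we argue case by case. If $w_1$ (or $w_2$) lies in $G_{<u_j}$, replace it by the vertex of $S^{\typpe_i}_{u_i}$ for its closeness class $A$, which exists because $\typent_j[A]=1$ implies $\typext_i[A]=1$. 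The key claim is that distances from a vertex $y$ on the far side of $\beta(v)$ to any vertex of $G_{\leq u_i}$ are determined, up to an additive constant, by the closeness class of $y$. This holds because every shortest path passes through $\beta(v)$, and the distances from $y$ to the bag are $d_y$ on $A$ and $d_y+1$ elsewhere, exactly mimicked by a new vertex adjacent to $A$. Hence $w\in I(y,z)$ iff $w\in I(A_{u_i},z)$. If both $w_1,w_2$ lie in $G_{<u_j}$, then $w\in G_{<u_i}$ cannot lie between them, since shortest paths between vertices of the same component avoid the other side. In that case at least one of them is in $D_i$ or is replaced as above. The case where $w_1\in D_j$ and $w_2$ is exterior cannot cover $w$ either, by the same cutset argument ($w$ would have to lie in the same side as one endpoint). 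This gives condition \ref{it:type-2}, with the surviving solution endpoint lying in $D_i$.

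Then verify Definition~\ref{def:join}. Items \ref{it:join-a} and \ref{it:join-b} hold by construction. Item \ref{it:join-d} holds because a vertex of $D_i$ close to $A$ is in $D$ (so $\typent[A]=1$) and $\typext_j$ was defined to include it. Item \ref{it:join-e} holds because every vertex of $D$ lies in $D_1\cup D_2$. For item \ref{it:join-c}, take $x\in\typecov$ covered by $w_1\in D$ and $w_2$. Either both endpoints give coverage inside one side's auxiliary graph, putting $x\in\typecov_i$ by the replacement argument above, or $w_1\in D_1\setminus\beta(v)$ and $w_2\in D_2\setminus\beta(v)$. In the latter case Lemma~\ref{lem:cover-clique} gives $x\in Cov(u_1,u_2)$. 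Conversely, anything in $\typecov_i$ or $Cov(u_1,u_2)$ is genuinely covered in $H^{\typpe}_v$, again by replacing representative vertices with real ones and applying Lemma~\ref{lem:cover-clique}. The equality for $\typeuncov$ then follows by complementation within $\beta(v)\setminus\typebag$.

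The main obstacle is making the ``replacement'' step rigorous. We must show that substituting a real vertex $y$ (in $G_{<u_j}$, or exterior) by an $S$-vertex adjacent exactly to its close set preserves membership in intervals $I(\cdot,\cdot)$ for vertices on the $u_i$ side and in the bag. We would prove this as a small auxiliary claim, via the clique-cutset distance decomposition $d(y,z)=\min_{b\in\beta(v)}(d(y,b)+d(b,z))$, before running the case analysis.
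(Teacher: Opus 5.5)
Your construction of $\typpe_1,\typpe_2$ (from $D_i$, with $\typext_i$ given by rule \ref{it:join-b} and $\typecov_i$ as the covered set) and your checks of Definition~\ref{def:join}\ref{it:join-a}, \ref{it:join-b}, \ref{it:join-d}, \ref{it:join-e} are the same as the paper's. You go further on the certificate property of $D_i$, which the paper only asserts. Your replacement claim is correct: for $y$ beyond $\beta(v)$ and close to $A$, $d(y,z)=d(A_{u_i},z)+d_y-1$ for all $z\in V(G_{\leq u_i})$, obtained from $d(y,z)=\min_{b\in\beta(v)}(d(y,b)+d(b,z))$. This is exactly what makes Definition~\ref{def:type}\ref{it:type-2} hold for $(u_i,\typpe_i)$, and your case analysis there is right.

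The step that fails is the reverse inclusion in Definition~\ref{def:join}\ref{it:join-c}. A vertex of $\typecov_i$ or of $Cov(u_1,u_2)$ being ``genuinely covered in $H^{\typpe}_v$'' does not put it in $\typecov$. Definition~\ref{def:type} only requires vertices of $\typecov$ to be covered, the Remark after it explicitly allows covered vertices in $\typeuncov$, and Lemma~\ref{lem:valid}(c) forces only interior--exterior coverings into $\typecov$.

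This is a real failure, not a presentational one. Take the path $s_1xs_2$ with a join node $v$, $\beta(v)=\{x\}$, $V(G_{<u_i})=\{s_i\}$, and $D=\{s_1,s_2\}$. Let $\typpe$ have $\typebag=\typecov=\emptyset$, $\typeuncov=\{x\}$, $\typent[\{x\}]=1$ and $\typext\equiv 0$. Then $\typpe$ is a type satisfying Lemma~\ref{lem:valid}, and $D$ is a minimum certificate. However, every certificate at $u_i$ must contain the degree-one vertex $s_i$, so $\typent_i[\{x\}]=1$ for every type at $u_i$. Hence $x\in Cov(u_1,u_2)\setminus\typecov$, and no compatible triplet exists. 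The paper's proof has the same hole at ``hence $\typecov=\typecov_1\cup\typecov_2\cup Cov(u_1,u_2)$''.

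The repair is to prove the lemma only for types that are \emph{exact} for $D$: $\typecov$ must be the set of all vertices of $\beta(v)\setminus\typebag$ covered in $H^{\typpe}_v$ by a pair with one endpoint in $D$. Lemma~\ref{lm:type} produces exactly such types, and your $\typpe_i$ are exact for $D_i$. Under exactness, your converse argument does give the reverse inclusion. This suffices for correctness of the dynamic program, since the only types that matter are those Lemma~\ref{lm:type} derives from an optimal geodetic set.

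Relatedly, $Cov(u_1,u_2)$ always contains $\typebag$: take $A=B=\{x\}$ and use Lemma~\ref{lem:valid}(a) for both children. So \ref{it:join-c} must be read with $Cov(u_1,u_2)\setminus\typebag$. Your ``complementation within $\beta(v)\setminus\typebag$'' does this tacitly, and you should say so explicitly.
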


\begin{proof}
For each $i\in \set{1,2}$ define $\typebag_i = \typebag$ and $\typent_i[A] = 1$ if and only if there exists $y \in D \cap G_{\leq u_i}$ such that $y$ is close to $A$ with respect to $\beta(u_i)$. Define $\typext_i$ according to Definition~\ref{def:join}\ref{it:join-b}. Define  $\typecov_i$ in the following manner: Let $D_i = D\cap V(G_{\leq u_i})$. Recall the definition of the graph $H^{\typpe_i}_{u_i}$ and $S^{\typpe_i}_{u_i}$ (from beginning of the Section~\ref{SS:algo}). Then, let $\typecov_i = \set{w~|~w\in I(x,y)\cap \beta(v), x\in D_i~ \& ~(y\in D_i\cup S^{\typpe_i}_{u_i} )}$. 
Observe that for each $i\in \set{1,2}$ the sets $D_i$ satisfies all of the conditions in Definition~\ref{def:type} for $(u_i,\typpe_i)$.

We establish that the triplet $(\typpe,\typpe_1,\typpe_2)$ is compatible. Observe that Definition~\ref{def:join}\ref{it:join-a} and \ref{def:join}\ref{it:join-b} are trivially satisfied by construction. Consider $A\subseteq \beta(v)$ and let $\set{i,j} = \set{1,2}$. If $\typent_i[A]=1$, then there is a vertex $w\in D\cap V(G_{<u_i})$ such that $w$ is close to $A$ with respect to $\beta(u_i)=\beta(v)= \beta(u_j)$, implying $\typent[A] =1$, and since $w\in V(G_{>u_j})$, $\typext_j[A]=1$. Thus, Definition~\ref{def:join}\ref{it:join-d} is satisfied. The proof of Definition~\ref{def:join}\ref{it:join-e} follows from the fact that $\beta(u_i)=\beta(u_j) = \beta(v)$ and $D\cap V(G_{<v}) = D\cap (V(G_{<u_i}) \cup V(G_{<u_j}))$. Finally, to see the Definition~\ref{def:join}\ref{it:join-c}, observe that, due to Lemma~\ref{lem:cover-clique}, each vertex of $Cov(u_1,u_2)$ is covered by vertices in $D\cap (V(G_{<u_i}) \cup V(G_{<u_j}))$, and hence $\typecov = \typecov_1 \cup \typecov_2 \cup Cov(u_1,u_2)$ and $\typeuncov = (\typeuncov_1 \cap \typeuncov_2) \setminus Cov(u_1,u_2)$. Therefore, the triplet $(\typpe,\typpe_1,\typpe_2)$ is compatible.
\end{proof}

For a join node $v$ with children $u_1,u_2$, let $\Compatible{v}$ denote the set of compatible triplets $(\typpe,\typpe_1,\typpe_2)$ where $\typpe,\typpe_1,\typpe_2$ are types associated with $v,u_1,u_2$, respectively. Finally, we consider the root $r$. Recall that the only type associated with $r$ is $\typpe_0 = (\mathbf{0},\mathbf{0},\emptyset,\emptyset,\emptyset)$. 

\begin{definition}\label{def:root}
Consider the root $r$, its child $u$ with $\beta(u)=\set{x}$ and a type $\typpe$ associated with $u$. Then $(\typpe_0,\typpe)$ is a compatible pair if $x \in \typecov_1 \cup \typebag_1$ and $\typext[A] = 0$ for all $A \subseteq \beta(u)$.
\end{definition}

The proof for the following lemma is analogous to that of Lemmas~\ref{lem:introduce-minimal}--\ref{lem:join-minimal}.

\begin{lemma}\label{lem:root-minimal}
For any minimal geodetic set of $G$, there is a compatible pair $(\typpe_0,\typpe_1)$ where $\typpe_1$ is a type associated with $u$.
\end{lemma}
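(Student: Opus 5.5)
The plan mirrors the proofs of Lemmas~\ref{lem:introduce-minimal}--\ref{lem:join-minimal}. We take the minimal geodetic set $D$ of $G$ and apply Lemma~\ref{lm:type} at the child $u$ of the root $r$. This gives a type $\typpe_1=(\typent_1,\typext_1,\typecov_1,\typeuncov_1,\typebag_1)$ associated with $u$ such that $D\cap V(G_{\leq u})$ is a certificate of $(u,\typpe_1)$. Since $\beta(r)=\emptyset$ and $r$ has the single child $u$, we have $G_{\leq u}=G$ and $V(G_{>u})=\emptyset$. Hence $D\cap V(G_{\leq u})=D$. The only thing left is to check the two conditions of Definition~\ref{def:root} for this $\typpe_1$.

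First, the exterior vector must vanish. In the construction of Lemma~\ref{lm:type}, $\typext_1[A]$ is set to $1$ only when some $y\in D\cap V(G_{>u})$ is close to $A$. This set is empty, so $\typext_1[A]=0$ for every $A\subseteq\beta(u)$. In particular $\Svertices{u}{\typpe_1}=\emptyset$ and $\auxGraph{u}{\typpe_1}=G$.

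Second, we show $x\in\typecov_1\cup\typebag_1$, where $\beta(u)=\set{x}$. If $x\in D$, then $x\in\typebag_1$ by construction. Otherwise, since $D$ is a geodetic set of $G$, there are $w_1,w_2\in D$ with $x\in I(w_1,w_2)$. Both lie in $X=D\cup\Svertices{u}{\typpe_1}=D$, and $w_1\notin\Svertices{u}{\typpe_1}$. So the rule of Lemma~\ref{lm:type} puts $x$ into $\typecov_1$. We should also note that $\typpe_1$ meets the conditions of Lemma~\ref{lem:valid}, since it is a type associated with $u$. Together these give that $(\typpe_0,\typpe_1)$ is a compatible pair.

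The only delicate point is the degenerate setting: if $G$ is a single vertex, or if the root's child bag is empty, the argument needs a separate comment. We can assume $\beta(u)=\set{x}$ is nonempty, as the statement and Definition~\ref{def:root} implicitly do, because any nonempty bag adjacent to an empty root is reduced to a single vertex by forget nodes. Minimality of $D$ is in fact not needed; any geodetic set works.
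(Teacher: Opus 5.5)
Your proof is correct and is the argument the paper intends when it says the proof is analogous to Lemmas~\ref{lem:introduce-minimal}--\ref{lem:join-minimal}. You build $\typpe_1$ from $D$ via the construction of Lemma~\ref{lm:type} at the root's child $u$, note that $V(G_{>u})=\emptyset$ forces $\typext_1=\mathbf{0}$ and $\auxGraph{u}{\typpe_1}=G$, and use that $D$ is geodetic to get $x\in\typecov_1\cup\typebag_1$; your side remarks that minimality of $D$ is not needed and that the nice decomposition guarantees $\beta(u)=\set{x}$ are also accurate.
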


Let $\Compatible{r}$ denote the set of compatible pairs $(\typpe_0,\typpe)$ where $\typpe$ is a type associated with the only child $u$ of the root $r$.

Now we are ready to describe our algorithm. We process the nodes of $T$ bottom-up. Let $v$ be the current node under consideration. If $v$ is a leaf node, then define $$\solution{v}{(\mathbf{0},\mathbf{0},\emptyset,\emptyset,\emptyset)}=\emptyset$$ 

\noindent Let $v$ be an introduce node having $u$ as child. Then for each type $\typpe\in \mathcal{T}_v$ define $$\solution{v}{\typpe} = \displaystyle\min\limits_{(\typpe,\typpe_1)\in \Compatible{v}} \solution{u}{\typpe_1}\cup \typebag$$

\noindent Let $v$ be a forget node having $u$ as child. Then for each type $\typpe\in \mathcal{T}_v$ define $$\solution{v}{\typpe} = \displaystyle\min\limits_{(\typpe,\typpe_1)\in \Compatible{v}} \solution{u}{\typpe_1}$$

\noindent Let $v$ be a join node having $u_1,u_2$ as children. Then for each type $\typpe\in \mathcal{T}_v$ define $$\solution{v}{\typpe} = \displaystyle\min\limits_{(\typpe,\typpe_1,\typpe_2)\in \Compatible{v}} \solution{u_1}{\typpe_1} \cup \solution{u_2}{\typpe_2} $$

\noindent Finally, for the root $r$ let $u$ be its child and $\typpe_0 = (\mathbf{0},\mathbf{0},\emptyset,\emptyset,\emptyset) $. Define $$\solution{r}{\typpe_0} = \displaystyle\min\limits_{(\typpe_0,\typpe_1)\in \Compatible{r}} \solution{u}{\typpe_1}$$

We show in the following lemma that $\solution{r}{(\mathbf{0},\mathbf{0},\emptyset,\emptyset,\emptyset)}$ corresponds to a geodetic set of $G$ with minimum cardinality. Recall the definitions of $\auxGraph{v}{\typpe}$ and $\Svertices{v}{\typpe}$.

\begin{lemma}\label{lem:correct}
For each node $v$ and type $\typpe$ associated with $v$, $\solution{v}{\typpe}$ is a certificate of $(v,\typpe)$ with minimum cardinality.
\end{lemma}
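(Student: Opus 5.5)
We prove the statement by induction on the height of $v$ in $T$, processing nodes bottom-up exactly as the algorithm does. For each node $v$ and type $\typpe$ we establish two inequalities. \emph{Soundness}: $\solution{v}{\typpe}$ is a certificate of $(v,\typpe)$. \emph{Optimality}: $|\solution{v}{\typpe}|\le |D|$ for every certificate $D$ of $(v,\typpe)$.

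The base case is a leaf $v$. There $\beta(v)=\emptyset$ and $G_{\leq v}$ is empty, so the only type is $(\mathbf{0},\mathbf{0},\emptyset,\emptyset,\emptyset)$, its only certificate is $\emptyset$, and both claims are trivial.

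For the inductive step we split on the node type.

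\textbf{Optimality.} Take a minimum certificate $D$ of $(v,\typpe)$; in particular it is minimal.
\begin{itemize}
\item If $v$ is an introduce node, Lemma~\ref{lem:introduce-minimal} gives a type $\typpe_1$ of the child $u$ with $(\typpe,\typpe_1)\in\Compatible{v}$. Its proof shows that $D\setminus\{x\}$ is a certificate of $(u,\typpe_1)$. By induction, $|\solution{u}{\typpe_1}|\le |D\setminus\{x\}|$. Since $\typebag\cap\{x\}=D\cap\{x\}$, this gives $|\solution{v}{\typpe}|\le|\solution{u}{\typpe_1}\cup\typebag|\le|D|$.
\item If $v$ is a forget node, Lemma~\ref{lem:forget-minimal} shows that $D$ itself certifies $(u,\typpe_1)$, and the bound is immediate.
\item If $v$ is a join node, Lemma~\ref{lem:join-minimal} shows that $D_i=D\cap V(G_{\leq u_i})$ certifies $(u_i,\typpe_i)$. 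Since $D_1\cap D_2=\typebag$, we get $|D|=|D_1|+|D_2|-|\typebag|$. The sets $\solution{u_i}{\typpe_i}$ also intersect exactly in $\beta(v)\cap\cdot=\typebag$, because $V(G_{\leq u_1})\cap V(G_{\leq u_2})=\beta(v)$ and condition~\ref{it:join-a} holds. Hence the union of the stored sets has size at most $|D|$.
\item The root case is handled by Lemma~\ref{lem:root-minimal} in the same way.
\end{itemize}

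\textbf{Soundness.} This is the main obstacle, since compatibility was only shown to be necessary. We must show it is also sufficient: if $(\typpe,\typpe_1)$ is compatible and $Q_1$ is any certificate of $(u,\typpe_1)$, then $Q_1\cup\typebag$ (respectively $Q_1$, or $Q_1\cup Q_2$) is a certificate of $(v,\typpe)$. The key tool is Lemma~\ref{lem:cover-clique}. Because $\beta(\cdot)$ is a clique cutset, a vertex of $Q_1$ affects the rest of the graph only through the set it is close to. Consequently, whether a vertex of $G_{<v}$ or of the bag is covered depends only on the vectors $\typent$ and $\typext$ and the bag contents, not on the specific certificate. So we may exchange certificates with the same type.

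We then check condition~(ii) of Definition~\ref{def:type} case by case.
\begin{itemize}
\item \emph{Introduce node.} Vertices of $G_{<u}$ covered in $\auxGraph{u}{\typpe_1}$ remain covered in $\auxGraph{v}{\typpe}$. If the witness pair used some $S_u$ with $\typext_1[S]=1$, conditions~\ref{it:intro-c} and~\ref{it:intro-d} provide either an $S'_v$ with the same trace on $\beta(u)$ or the vertex $x\in\typebag$, which plays that role. The new vertex $x$, if it lies in $\typecov$, is covered via condition~\ref{it:intro-e} and Lemma~\ref{lem:cover-clique}.
\item \emph{Forget node.} Here $S^{\typpe}_v=S^{\typpe_1}_u$, and $x$ is covered by condition~\ref{it:forget-f}.
\item \emph{Join node.} Vertices of $G_{<u_1}$ were covered using $S_{u_1}$-vertices. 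By condition~\ref{it:join-b}, each such vertex corresponds either to a vertex of $S_v$ or to an actual vertex of $Q_2$ close to the same set, and Lemma~\ref{lem:cover-clique} shows the coverage is preserved. Bag vertices in $Cov(u_1,u_2)$ are covered by pairs drawn from $Q_1\times Q_2$.
\end{itemize}
Condition~(iii) follows from conditions~\ref{it:intro-f}--\ref{it:intro-h}, \ref{it:forget-d}--\ref{it:forget-e}, and \ref{it:join-d}--\ref{it:join-e}, respectively.

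Combining soundness and optimality, and taking minima over compatible tuples, the algorithm's $\min$ is attained by a certificate of minimum cardinality. This proves the lemma.
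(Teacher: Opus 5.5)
Your plan is the paper's: bottom-up induction, soundness by checking Definition~\ref{def:type} against the compatibility rules with Lemma~\ref{lem:cover-clique} as the tool, and optimality from Lemmas~\ref{lem:introduce-minimal}--\ref{lem:root-minimal}. Your cardinality bookkeeping at introduce and join nodes (using $x\in D \iff x\in\typebag$ and $D_1\cap D_2=\typebag$) is more explicit than the paper's.

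There is, however, a missing case in the soundness check at an introduce node with $x\in\typebag$. There, Definition~\ref{def:introduce}\ref{it:intro-bnew1} sets $\typecov=\typecov_1\cup Cov(v)$ and puts $Cov(v)$ into $\typeuncov_1$. So the induction hypothesis says nothing about covering these bag vertices. Your bullet only transports coverage already present in $\auxGraph{u}{\typpe_1}$ and handles $x$ itself. Condition~(ii) is therefore never verified for exactly the vertices whose coverage is created at $v$.

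The fix is direct (the paper does it via Lemma~\ref{lem:valid}(c)). For $w\in Cov(v)$, item~(a) of the definition of $Cov(v)$ gives $A\subseteq\beta(u)$ with $w\in A$ and $\typext[A]=1$, so $A_v\in\Svertices{v}{\typpe}$. The vertex $x$ belongs to the certificate and is adjacent to every vertex of $A$, since the bag is a clique. It is not adjacent to $A_v$, as $x\notin A$. Hence $x\,w\,A_v$ is a shortest path in $\auxGraph{v}{\typpe}$, and $w\in I(x,A_v)$ with $x$ a non-auxiliary vertex.

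Two smaller points. First, your motivating claim that coverage of bag vertices depends only on $\typent$, $\typext$ and the bag contents is false for a bag vertex lying on a shortest path between two certificate vertices in the same component of $G_{<v}$. Lemma~\ref{lem:cover-clique} needs the ends on different sides, and this is exactly the information $\typecov$ records. Your sufficiency step, stated for an arbitrary certificate $Q_1$ of $(u,\typpe_1)$, does not actually rely on that claim. Second, you never check condition~(i) or the root node in the soundness part. These follow at once from Definitions~\ref{def:introduce}\ref{it:intro-a}, \ref{def:forget}\ref{it:forget-a}, \ref{def:join}\ref{it:join-a} and~\ref{def:root}.
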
 
\begin{proof}
The statement of the lemma is trivially true when $v$ is a leaf node. By induction we assume the lemma to be true for all nodes of the subtree rooted at $v$. 

\begin{enumerate}
    
\item Assume that $v$ is an introduce node. Let $u$ be the child of $v$ and $\beta(v) = \beta(u) \cup \set{x}$. First, we show that $\solution{v}{\typpe}$ is a certificate of $(v,\typpe)$. Let $\typpe_1$ be a type associated with $u$ such that $\solution{v}{\typpe} = \solution{u}{\typpe_1} \cup \typebag$ and consider the set $D = \solution{u}{\typpe_1} \cup \typebag \cup \Svertices{v}{\typpe}$. By Definition~\ref{def:introduce}\ref{it:intro-a} we have that $\typebag_1 = \typebag \setminus \set{x}$. 
\sloppy Hence, $D\cap \beta(v) = \solution{v}{\typpe} \cap \beta(v) = (\solution{u}{\typpe_1} \cup \typebag) \cap \beta(v) = (\solution{u}{\typpe_1} \cap \beta(v)) \cup (\typebag \cap \beta(v)) = (\solution{u}{\typpe_1} \cap (\beta(u)\cup \set{x})) \cup \typebag = (\solution{u}{\typpe_1} \cap \beta(u)) \cup \typebag = \typebag_1 \cup \typebag = \typebag$. Hence, $D$ satisfies Definition~\ref{def:type}\ref{it:type-1}. 

Now, consider any vertex $w\in (V(G_{\leq v}) \setminus \beta(v)) \cup \typecov$ which is distinct from $x$. Then $w\in (V(G_{\leq u}) \setminus \beta(u)) \cup (\typecov \cap \beta(u))$. Definition~\ref{def:introduce}\ref{it:intro-bnew} and Definition~\ref{def:introduce}\ref{it:intro-bnew1} dictate $\typecov_1$. If $x\notin \typebag$, then Definition~\ref{def:introduce}\ref{it:intro-bnew} ensures that $\typecov_1= \typecov \cap \beta(u)$ and hence $w\in (V(G_{\leq u}) \setminus \beta(u)) \cup \typecov_1$. Now, due to our induction hypothesis, we have that there exists $w_1,w_2\in \solution{u}{\typpe_1} \cup \Svertices{u}{\typpe_1}$ such that $w\in I(w_1,w_2)$ and $w_1\in \solution{u}{\typpe_1}$. If $w_2\in \solution{u}{\typpe_1}$ then $w_1,w_2\in \solution{v}{\typpe}$ and therefore $w_1,w_2\in D$. Hence, Definition~\ref{def:type}\ref{it:type-2} is satisfied in this case. Else, $w_2\in \Svertices{u}{\typpe_1}$; then, there is a set $A\subseteq \beta(u)$ such that $w_2=A$ and $\typext_1[A] = 1$. If $A\neq \beta(u)$, then, due to Definition~\ref{def:introduce}\ref{it:intro-c}, we know that there is a set $B\supseteq A$ such that $\typext[B]=1$. Hence, there is a vertex $b\in \Svertices{v}{\typpe}$ such that $b=B$ and is adjacent to all vertices of $A$. Observe that $w\in I(w_1,b)$. If $A=\beta(u)$, then again, due to Definition~\ref{def:introduce}\ref{it:intro-d} and similar arguments as above, we have a vertex $b'\in \Svertices{v}{\typpe}$ such that $w\in I(w_1,b')$. Hence, Definition~\ref{def:type}\ref{it:type-2} is satisfied. Now, consider the vertex $x$.  Since $x\notin \typebag$, due to Definition~\ref{def:introduce}\ref{it:intro-e}, we have sets $A\subseteq \beta(u)$, $B\subseteq \beta(v)\setminus A$ such that $\typent_1[A]=1$ and $\typext[B\cup \set{x}]=1$. Hence, $\solution{u}{\typpe_1}$ contains a vertex, say $a$, which is close to $A$ with respect to $\beta(u)$. There also exists a vertex, say $b\in \Svertices{v}{\typpe}$, such that $b=B$. By Lemma~\ref{lem:cover-clique}, $x\in I(a,b)$. Therefore, Definition~\ref{def:type}\ref{it:type-2} is satisfied in this case. Now, consider the case when $x\in \typebag$. Recall the definition of $Cov(v)$ from Definition~\ref{def:introduce}\ref{it:intro-bnew1}, and now, $\typebag=\typebag_1\cup Cov(v)$. We can cover the vertices from $V(G_{<v})\cup \typecov_1$ as discussed above. To complete the proof of the fact that Definition~\ref{def:type}\ref{it:type-2} is satisfied in this case, we only need to show that for each vertex $w\in Cov(v)$, there exist  vertices $S^{\typpe}_v \cup D$ such that $w\in I(w_1,w_2)$. Since $w\in Cov(v)$, we have that $\exists A\subseteq \beta(v)\setminus \{x\}$ such that $\typext[A]=1$ and since $x\in \typebag$, $\typent[\{x\}]=1$. Finally, as $w\in A\cup \{x\}$ and $A\cap \{x\} =\emptyset$, $w$ is covered due to Lemma~\ref{lem:valid}(c). Therefore, Definition~\ref{def:type}\ref{it:type-2} is satisfied.

Consider $A\subseteq \beta(v)$ such that $\typent[A]=1$. If $A=\set{x}$, then, due to Definition~\ref{def:introduce}\ref{it:intro-f}, we have $x\in \typebag$ and therefore $x\in D$. Due to Definition~\ref{def:introduce}\ref{it:intro-g}, we have $A\subseteq \beta(u)$.  By Definition~\ref{def:introduce}\ref{it:intro-h}, $\typent_1[A]=1$ and therefore, $\solution{u}{\typpe_1}$ contains a vertex $w$ such that $w\in V(G_{\leq u})$ and $w$ is close to $A$ with respect to $\beta(u)$. Since $v$ is an introduce node with $\beta(v)=\beta(u)\cup \set{x}$, $w$ must be close to $A$ with respect to $\beta(v)$. Hence, Definition~\ref{def:type}\ref{it:type-3} is satisfied. Hence, $\solution{v}{\typpe}$ is a certificate of $(v,\typpe)$. Now, Lemma~\ref{lem:introduce-minimal} implies that $\solution{v}{\typpe}$ is minimum.

\item Now, assume that $v$ is a forget node. Let $u$ be the child of $v$ and $\beta(v) = \beta(u) \setminus \set{x}$. First, we show that $\solution{v}{\typpe}$ is a certificate of $(v,\typpe)$. Let $\typpe_1$ be a type associated with $u$ such that $\solution{v}{\typpe} = \solution{u}{\typpe_1}$ and consider the set $D = \solution{u}{\typpe_1} \cup \Svertices{v}{\typpe}$. By Definition~\ref{def:forget}\ref{it:forget-a}, we have that $\typebag = \typebag_1 \setminus \set{x}$. 
\sloppy Hence, $D\cap \beta(v) = \solution{v}{\typpe} \cap \beta(v) = \solution{u}{\typpe_1} \cap (\beta(u) \setminus \set{x}) = (\solution{u}{\typpe_1} \cap \beta(u)) \setminus \set{x} = \typebag_1 \setminus \set{x} = \typebag$. Hence $D$ satisfies Definition~\ref{def:type}\ref{it:type-1}. 

Now, consider any vertex $w\in (V(G_{\leq v}) \setminus \beta(v)) \cup \typecov$. Observe that $V(G_{\leq v}) = V(G_{\leq u})$, $\beta(v) \subset \beta(u)$. Moreover, due to Definitions~\ref{def:forget}\ref{it:forget-b} and~\ref{def:forget}\ref{it:forget-c}, we have that $\typext=\typext_1$. Therefore, for any $w\in (V(G_{\leq u}) \setminus \beta(u)) \cup \typecov_1$, there exist $w_1,w_2 \in \solution{u}{\typpe_1}=\solution{v}{\typpe}$ such that $w$ is covered by $w_1,w_2$ in $\auxGraph{v}{\typpe}$. Hence, Definition~\ref{def:type}\ref{it:type-2} is satisfied.

By Definition~\ref{def:forget}\ref{it:forget-d}, for $A\subsetneq \beta(v)$, $\typent[A] = 1$ if and only if $\typent_1[A] = 1$ or $\typent_1[A \cup \set{x}] = 1$. Indeed, if some $y \in G_{\leq v}$ is close to $A$ or $A \cup \set{x}$  with respect to $\beta(u)$, then it is close to $A$ with respect to $\beta(v)$. Conversely, if there exists some $y \in G_{\leq v}$ close to $A$ with respect to $\beta(v)$, then $A$ is $B \cap \beta(v)$  where $B$ is the set to which $y$ is close to with respect to $\beta(u)$. The only possibilities for $B$ are $A$ and $A \cup \set{x}$. By Definition~\ref{def:forget}\ref{it:forget-e}, $\typent[\beta(v)] = 1$ if and only if $\typent_1[\beta(u)] = 1$, $\typent_1[\beta(v)] = 1$, $x\in \typebag_1$, or $\typent_1[\set{x}] = 1$. Indeed, if some $y \in G_{\leq v}$ is close to $\beta(u)$, $\beta(v)$ or $\set{x}$  with respect to $\beta(u)$, then it is close to $\beta(v)$ with respect to $\beta(v)$. Conversely, if there exists some $y \in G_{\leq v}$ close to $\beta(v)$ with respect to $\beta(v)$, then $\beta(v)$ is included in $A$ or $\beta(u) \setminus A$, where $A$ is the set to which $y$ is close to with respect to $\beta(u)$. The only possibilities for $A$ are $\beta(u)$, $\beta(v)$ or $\set{x}$. Hence Definition~\ref{def:type}\ref{it:type-3} is satisfied. Hence, $\solution{v}{\typpe}$ is a certificate of $(v,\typpe)$. Finally, Lemma~\ref{lem:forget-minimal} implies that $\solution{v}{\typpe}$ is minimum.

\item Assume $v$ to be a join node. Let $u_1,u_2$ be the children of $v$. Let $\typpe_1,\typpe_2$ be types associated with $u_1,u_2$ such that $\solution{v}{\typpe} = \solution{u_1}{\typpe_1} \cup \solution{u_2}{\typpe_2} $. Consider the set $D=\solution{v}{\typpe} \cup \Svertices{v}{\typpe}$. Due to Definition~\ref{def:join}\ref{it:join-a} we have $\typebag = \typebag_1=\typebag_2$. This implies $\solution{v}{\typebag} \cap \beta(v) = (\solution{u}{\typpe_1} \cap \typebag_1) \cup (\solution{u_2}{\typpe_2} \cap \typebag_2) = \typebag$.

Consider $y \in (V(G_{\leq v}) \setminus \beta(v)) \cup \typecov$. If $y \in (V(G_{\leq u_1}) \setminus \beta(u_1)) \cup \typecov_1$, then $y$ is covered by a pair of vertices $y_1$ and $y_2$ in $\solution{u}{\typpe_1} \cup \Svertices{u_1}{\typpe_1}$. If $y_1,y_2\in \solution{u_1}{\typpe_1}$ then $y_1,y_2\in D$ and we are done. Otherwise, assume without loss of generality that $y_2 \in \Svertices{u_1}{\typpe_1} \setminus \Svertices{v}{\typpe}$. There must be a set $A\subseteq \beta(u_1)$ such that $\typext_1[A]=1$ and $y_2=A$. By Definition~\ref{def:join}\ref{it:join-b}, either $\typext[A]=1$ or $\typent_2[A]=1$. If the first case is true, then there exists a vertex $y'_2\in \Svertices{v}{\typpe}$ such that $y$ is covered by $y_1$ and $y'_2$ is in $G_{\leq v}$. If the second case is true, then there exists a vertex $y'_2\in \solution{u_2}{\typpe_2}$ such that $y'_2$ is close to $A$ with respect to $\beta(u_2)$. Due to Lemma~\ref{lem:cover-clique}, we have that $y$ is covered by $y_1$ and $y'_2$. The case where $y \in (V(G_{\leq u_2}) \setminus \beta(u_2)) \cup \typecov_2$ is symmetrical.
If $y \in Cov(u_1,u_2)$, by its definition in Definition~\ref{def:join}\ref{it:join-c} and Lemma~\ref{lem:cover-clique}, $y$ is covered by vertices in $\solution{v}{\typpe}$. Hence, Definition~\ref{def:type}\ref{it:type-2} is satisfied.

By Definitions~\ref{def:join}\ref{it:join-d} and~\ref{def:join}\ref{it:join-e}, we have that for any $A\subseteq \beta(v)$, $\typent[A] = 1$ if and only if $\typent_1[A] = 1$ or $\typent_2[A] = 1$. Therefore by induction Definition~\ref{def:type}\ref{it:type-3} is satisfied. The minimality follows from Lemma~\ref{lem:join-minimal}.

\item When $v$ is the root node, the statement follows easily from Definition~\ref{def:root} and Lemma~\ref{lem:root-minimal}.
\end{enumerate}
This completes the proof.
\end{proof}

\subsection{The case of interval graphs}

When the input graph $G$ is an interval graph, the nice tree decomposition of $G$ does not contain any join nodes  because we can choose as "root" the vertex associated with the leftmost interval. Moreover, the linear structure of interval graphs helps us to reduce the time complexity of the dynamic programming algorithm proposed in the previous section. Essentially, we establish that the number of different types associated with a node $v$ is at most $O(2^{\omega (G)})$. We shall use the following lemma.

\begin{lemma}
Let $X$ be a clique cutset of an interval graph $G$. There exists a collection $\mathcal A$ of subsets of $X$ of size $O(\abs{X})$ such that for each vertex $v \in V(G)$, if $v$ is close to $A$ with respect to $X$, then $A \in \mathcal A$.
\end{lemma}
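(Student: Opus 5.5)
We fix an interval representation of $G$ and use the linear order of the intervals of the clique $X$. Since $G[X]$ is a clique, all intervals of $X$ share a common point $p$. Let $v$ be a vertex close to some $A \subseteq X$. If $v\in X$, then $A=\{v\}$, which contributes $|X|$ sets. Otherwise $v$ lies in a component of $G-X$, and every interval of $G-X$ avoids $p$. Since components of $G-X$ are connected and avoid $p$, each component lies entirely to the left of $p$ or entirely to the right of $p$. We treat vertices on the left side; the right side is symmetric.

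\textbf{Step 1 (left vertices).} Let $v$ lie to the left of $p$. For $x\in X$, let $\ell(x)$ denote the left endpoint of the interval of $x$. We claim that $d(v,x)$ is a non-decreasing function of $\ell(x)$. Suppose $\ell(x)\le \ell(x')$ for $x,x'\in X$, and take a shortest path from $v$ to $x'$. Its last vertex before $x'$ intersects $x'$. That vertex either intersects $x$ as well, or is itself in $X$, in which case it is adjacent to $x$ anyway. More carefully: the intervals of the path cover the segment from $v$ to the point where it meets $x'$, which is at least $\ell(x')\ge \ell(x)$. Hence some vertex of the path before $x'$ reaches a point at least $\ell(x)$. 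That vertex lies left of $p$ or contains $p$, so its interval meets $[\ell(x),p]\subseteq x$. Thus $d(v,x)\le d(v,x')$.

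\textbf{Step 2 (shape of $A$).} Since $X$ is a clique, distances to $X$ take only the two values $d_v$ and $d_v+1$. By Step 1, the set $A$ of vertices at distance $d_v$ is therefore a prefix of $X$ ordered by $\ell$. Ties in $\ell$ are harmless: break them by a fixed ordering, and note that vertices with equal $\ell$ have equal distance by the monotonicity argument applied in both directions. So $A$ is one of at most $|X|$ prefixes.

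\textbf{Step 3 (count).} Symmetrically, for $v$ on the right, $A$ is a prefix of $X$ in decreasing order of right endpoints. Together with the singletons from Step 1's case $v\in X$, the family $\mathcal A$ has size at most $3|X|$, which is $O(|X|)$.

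\textbf{Main obstacle.} The delicate step is the monotonicity argument in Step 1. In particular, we must check that a shortest path from $v$ to $x'$ contains a vertex adjacent to $x$ strictly before $x'$, or else conclude that $x'$ is adjacent to the previous path vertex via a point to the left of $p$ and hence lying in $x$. I would argue via the covered segment as above, handling separately the case where the path's penultimate vertex is in $X$.
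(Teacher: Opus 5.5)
Your argument follows the paper's proof. Both use singletons for $v\in X$, sets of the form $\{w\in X : \min(w)\le \min(u)\}$ for vertices on the left (from the monotonicity of $d(v,\cdot)$ in the left endpoint), and the mirror-image sets for vertices on the right. Your covered-segment argument plays the role of the paper's observation that the neighbour of $u$ on a shortest path also meets $w$. Steps 2 and 3, including the handling of ties, are fine.

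There is one genuine gap: the claim that every interval of $G-X$ avoids the common point $p$ is false, and so is the claim that each component of $G-X$ lies on one side of $p$. For instance, take $X=\{[0,10],[5,15]\}$ and let $G-X$ consist of the three pairwise disjoint intervals $[-3,1]$, $[4,11]$, $[14,20]$. Then $X$ is a clique cutset, but every common point of $X$ lies in $[0,10]\cap[5,15]=[5,10]\subseteq[4,11]$. So no choice of $p$ works. As written, such vertices fall outside your case analysis. Moreover, the sentence ``that vertex lies left of $p$ or contains $p$'' in Step 1 was justified only through the component claim.

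The repair is short: classify each $v\notin X$ individually rather than by component.
\begin{itemize}
\item If the interval of $v$ contains $p$, then $v$ is adjacent to all of $X$. Hence $A=X$, which is already one of your prefixes.
\item Otherwise, say $v$ lies strictly left of $p$, and let $v=y_0,y_1,\dots,y_k=x'$ be a shortest path. Suppose some $y_j$ with $j<k$ has right endpoint at least $p$, and take the first such $j$. Then $j\ge 1$, and $y_j$ must contain $p$, since its predecessor ends before $p$. Hence $y_j$ is adjacent to $x'$, which forces $j=k-1$.
\end{itemize}
So every vertex before $x'$ lies left of $p$ or contains $p$, and your Step 1 goes through verbatim.

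The paper's own ``without loss of generality $\min(v)<\min(X)$'' overlooks the same kind of vertex, e.g.\ $[4,11]$ above. Your split at $p$, once the straddling case is added, handles it correctly.
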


\begin{proof}
If $v \in X$, then $A = \set{v}$. Without loss of generality, assume now that $\min(v) < \min(X)$ (where $\min(v)$ denotes the left endpoint of the interval associated to $v$, and $\min(X)$, the leftmost left endpoint of an interval of $X$). If $u \in X$ such that $d(v,u) = d$, then for every $w \in X$ such that $\min(w) \leq \min(u)$, $d(v,w) \leq d$. Indeed, take a shortest path from $v$ to $u$ and let $z$ be the neighbour of $u$ in this path. Then, $z$ is also a neighbour of $w$. This implies that $v$ is close to a set $A$ with respect to $X$ which belongs to one of the following sets: $\bigcup\limits_{u \in X} \set{\set{w \in X| \min(w) \leq \min(u)}}$. Hence, $$\mathcal A = \bigcup\limits_{u \in X} \set{\set{w \in X| \min(w) \leq \min(u)}, \set{w \in X| \max(w) \geq \max(u)}, \set{u}}$$
Observe that $\abs{\mathcal{A}}$ is $O(|X|)$. 
\end{proof}

The above lemma implies that for an interval graph, the set of $5$-tuples for a node $v$ can be chosen as a subset of $\set{0,1}^{\mathcal A} \times \set{0,1}^{\mathcal A} \times 2^{\beta(v)} \times 2^{\beta(v)} \times 2^{\beta(v)}$. Hence, there are $2^{O(\omega)}$ types for an interval graph. This proves the statement of Theorem~\ref{thm:fpt-chordal} regarding interval graphs.

\section{Hardness for interval graphs}\label{sec:interval-hard}

We now prove Theorem~\ref{thm:interval-hard}.
Let $F$ be an instance of \textsc{3-Sat} with variables $x_1, \dots, x_n$ and clauses $C_1,\dots,C_m$. We construct a set $D$ of intervals in polynomial time such that the geodetic number of the intersection graph of $D$ (denoted as $\mathcal{I}(D)$) is at most $\bound$ if and only if $F$ is a positive instance of \textsc{3-Sat}. 

The key intuition that explains why the problem is hard on interval graphs is that considering two solution vertices $x$ and $y$, the structure of the covered set $I(x,y)$ can be very complicated. Indeed, it can be that many vertices lying ``in between'' $x$ and $y$ in the interval representation, are not covered. This allows us to construct gadgets, by controlling which of these vertices get covered, and which do not. Moreover, we can easily force some vertices to be part of the solution by representing them by an interval of length~$0$ (then, they are simplicial vertices), which is very useful to design our reduction. Nevertheless, implementing this idea turns out to be far from trivial, and to this end we need the crucial idea of \emph{tracks}, which are shortest paths spanning a large part of the construction. Each track starts at a key interval called its \emph{root} (representing a literal, for example) and serves as a shortest path from the root to the rightmost end of the construction. In a way, each track ``carries the effect of the root'' being chosen in a solution to the rest of the graph. The tracks are shifted in a way that no shortcut can be used going from one track to another. We are then able to locally modify the tracks and place our gadgets so that the track of, say, a literal, enables the interval of that literal to cover an interval of a specific clause gadget (while the other tracks are of no use for this purpose).

\subsection{Overview of the reduction}

There are four main stages of our reduction. Figure~\ref{fig:overview}(a) shows a roadmap of the reduction. We initialise it by constructing a set of intervals which we call the \emph{start gadget} (denoted as $\StartGadget$). 

After creating the start gadget, we create the variable gadgets, which are placed consecutively, after the start gadget. For each variable $x_i$, with $1\leq i\leq n$, we create the variable gadget $\VarGadget_i$. Each variable gadget is composed of several \emph{implication gadgets}. An implication gadget $\Implication{\Interval{p}}{\Interval{q}}$ ensures that if $\Interval{p}$ is not chosen in a geodetic set of our constructed intervals, then $\Interval{q}$ must be chosen. These are used to encode the behaviour of the variables of the 3-SAT instance: there will be two possible solutions, corresponding to both truth values of $x_i$. 


After creating all the variable gadgets, we create the clause gadgets, also placed consecutively, after the variable gadgets. For each clause $C_j$ with $1\leq j\leq m$, we construct the clause gadget $\ClauseGadget_j$. Each clause gadget is composed of a \emph{covering gadget}, several implication gadgets, and several \emph{AND gadgets}. The covering gadget of a clause $C_i$ is denoted by $\CovGadget{i}$. For two intervals $\Interval{p}$ and $\Interval{q}$, the corresponding AND gadget is denoted by $\AND{\Interval{p}}{\Interval{q}}$. Together, these gadgets will ensure that all intervals of the clause gadget corresponding to the clause $C_i$ are covered by six intervals if and only if one of the intervals corresponding to the literals of $C_i$ is chosen in a geodetic set. This encodes the behaviour of the clauses of the 3-SAT instance.   

After creating all the clause gadgets, we conclude our construction by creating the \emph{end gadget} $\EndGadget$, placed after all clause gadgets. Figure~\ref{fig:overview}(b) shows the arrangement of the gadgets in the reduction.

\begin{figure}
\centering
\begin{tabular}{c}
      
     \begin{tikzpicture}[
  node distance=1cm
]

    \node (const) at (0.5,0) [draw,thick,minimum width=2cm,minimum height=1cm] {Reduction};

    \node  [draw,thick,minimum width=2cm,minimum height=1cm] (variable) [below=of const,xshift=-3
    cm] {\begin{tabular}{c}
         Variable \\
         Gadget \\ ($\VarGadget_i$)
    \end{tabular}};

    \node  [draw,thick,minimum width=2cm,minimum height=1cm] (clause) [below=of const] {\begin{tabular}{c}
         Clause \\
         Gadget \\ ($\ClauseGadget_j$)
    \end{tabular}};

    \node  [draw,thick,minimum width=2cm,minimum height=1cm] (imply) [below=of variable] {\begin{tabular}{c}
         Implication \\
         Gadget
    \end{tabular}};

    \node  [draw,thick,minimum width=2cm,minimum height=1cm] (start) [left=of imply,xshift=0.25cm] {\begin{tabular}{c}
         Start \\
         Gadget
    \end{tabular}};

    \node  [draw,thick,minimum width=2cm,minimum height=1cm] (cover) [below=of clause] {\begin{tabular}{c}
         Covering \\
         Gadget
    \end{tabular}};

    \node  [draw,thick,minimum width=2cm,minimum height=1cm] (and) [right=of clause] {\begin{tabular}{c}
         AND \\
         Gadget
    \end{tabular}};
    
    \node  [draw,thick,minimum width=2cm,minimum height=1cm] (insert) [below=of and, yshift=-0.25cm] {\begin{tabular}{c}
         Insert \\
         Gadget
    \end{tabular}};

    \node  [draw,thick,minimum width=2cm,minimum height=1cm] (end) [right=of const,xshift=0.08cm] {\begin{tabular}{c}
         End \\
         Gadget
    \end{tabular}};

    \draw[->,thick,densely dotted] (insert.north) -- (and.south);
    \draw[->,thick,densely dotted] (cover.north) -- (clause.south);
    \draw[->,thick,densely dotted] (imply.north) -- (variable.south);
    \draw[->,thick,densely dotted] (imply.east) -- ++ (0.25,0) |- (clause.west);

    \draw[->,thick,densely dotted] (and.west) -- (clause.east);

    \draw[->,thick,densely dotted] (clause.north) -- (const.south);
    \draw[->,thick,densely dotted] (variable.north) -- ++ (0,0.5) -- ++ (2.5,0) -- ++ (0,0.5) ;

    \draw[->,thick,densely dotted] (start.north) |- (const.west);
    \draw[->,thick,densely dotted] (end.west) -- (const.east);

    \end{tikzpicture} \\ 
    (a) \\ \\
    \begin{tikzpicture}[join=bevel,inner sep=0.5mm,line width=1.2pt, scale=0.4]

\foreach \x/\y in {3/4.8, 3/4.6, 7/4.4, 7/4.2, 11/4.0, 11/3.8, 12/3.6, 12.5/3.4, 13/3.2, 17/3.0, 17/2.8, 21/2.6, 21/2.4, 26/2.2, 26.5/2.0, 27/1.8, 31/1.6, 31/1.6}
{
\draw (\x,\y) -- (32.2,\y);
}

\node (pos) at (0,0) {};
\draw [fill=white] ($(pos)+(0,0)$) rectangle ($(pos)+(3,5)$);
\node () at  ($(pos)+(1.5,2.5)$) {$\StartGadget{}$};

\node (pos) at (4,0) {};
\draw [fill=white] ($(pos)+(0,0)$) rectangle ($(pos)+(3,5)$);
\node () at  ($(pos)+(1.5,2.5)$) {$\VarGadget_1$};

\node (pos) at (8,0) {};
\draw [fill=white] ($(pos)+(0,0)$) rectangle ($(pos)+(3,5)$);
\node () at  ($(pos)+(1.5,2.5)$) {$\VarGadget_2$};

\node () at  (12.5,2.5) {$\cdots$};

\node (pos) at (14,0) {};
\draw [fill=white] ($(pos)+(0,0)$) rectangle ($(pos)+(3,5)$);
\node () at  ($(pos)+(1.5,2.5)$) {$\VarGadget_n$};

\node (pos) at (18,0) {};
\draw [fill=white] ($(pos)+(0,0)$) rectangle ($(pos)+(3,5)$);
\node () at  ($(pos)+(1.5,2.5)$) {$\ClauseGadget_1$};

\node (pos) at (22,0) {};
\draw [fill=white] ($(pos)+(0,0)$) rectangle ($(pos)+(3,5)$);
\node () at  ($(pos)+(1.5,2.5)$) {$\ClauseGadget_2$};

\node () at  (26.5,1.4) {$\cdots$};

\node (pos) at (28,0) {};
\draw [fill=white] ($(pos)+(0,0)$) rectangle ($(pos)+(3,5)$);
\node () at  ($(pos)+(1.5,2.5)$) {$\ClauseGadget_m$};

\node (pos) at (32,0) {};
\draw [fill=white] ($(pos)+(0,0)$) rectangle ($(pos)+(3,5)$);
\node () at  ($(pos)+(1.5,2.5)$) {$\EndGadget$};

\end{tikzpicture} \\ 
     (b)
\end{tabular}

\caption{Overview of the reduction. (a) Roadmap of the reduction procedure for proving Theorem~\ref{thm:interval-hard}. (b) An illustration of the arrangements of the gadgets are shown. The box with label $\StartGadget$ represents the start gadget, the box labelled $\EndGadget$ represents the end gadget. A box labelled $\ClauseGadget_i$ represents the gadget for clause $C_i$ and a box labelled $\VarGadget_j$, the gadget for variable $x_j$. Lines between such gadgets represent the tracks.}\label{fig:overview}
\end{figure}

\medskip \noindent \textbf{Organisation of this section:} In Subsection~\ref{sec:notations-rev}, we introduce some notations to use them in the description of the reduction. In Subsection~\ref{sec:start-rev}, we describe the construction of the start gadget. In Subsection~\ref{sec:implication-rev}, we describe a generic construction of the implication gadget $\Implication{p}{q}$. In Subsection~\ref{sec:cover-gadget-rev}, we describe a generic construction of the cover gadget $\CovGadget{i}$. In Subsection~\ref{sec:insert-gadget-rev}, we describe a generic construction of the insert gadget $\INS{p}{q}$.  In Subsection~\ref{sec:and-gadget-rev}, we describe a generic construction of the AND-gadget $\AND{p}{q}$. Then, in Subsection~\ref{sec:variable-rev}, we describe the construction of the variable gadget. In Subsection~\ref{sec:clause-gadget-rev}, we describe the construction of the clause gadget. In Subsection~\ref{sec:end-gadget-rev}, we describe the construction of the end gadget.

\subsection{Notations}\label{sec:notations-rev}

We shall use the following notations. Let $S$ be a set of intervals. For a vertex $v\in V(\mathcal{I}(S))$, the corresponding interval will be denoted by $\Interval{v}$. The notations $min(\Interval{v}), max(\Interval{v})$ shall denote the left boundary and right boundary of $\Interval{v}$, respectively. The \emph{rightmost neighbour} of an interval $\Interval{v}$ is the interval intersecting $\Interval{v}$ that has the maximum right boundary. For a set $S$ of intervals, let $min(S)=\min\{ min(\Interval{v}) \colon \Interval{v}\in S \}$, $max(S)=\max\{ max(\Interval{v}) \colon \Interval{v}\in S \}$. For two intervals $\Interval{u},\Interval{v}$ we have $\Interval{u} < \Interval{v}$ if $max(\Interval{u}) < min(\Interval{v})$.

Let $S$ be a set of intervals and $\Interval{u},\Interval{v}\in S$. A shortest path between $\Interval{u},\Interval{v}$ is a shortest path between $u,v$ in $\mathcal{I}(S)$. The set $I(\Interval{u},\Interval{v})$ is the set of intervals that belong to some shortest path between $\Interval{u},\Interval{v}$. The geodetic set of $S$ is analogously defined. For a set of intervals $S$, the phrase ``$S$ is covered by $S'$'' means that $S'$ is a geodetic set of $S$.

A \emph{point interval} is an interval of the form $[a,a]$. A \emph{unit interval} is an interval of the form $[a,a+1]$. A set of intervals is \emph{good} if no two intervals contain each other. A set $T=\{\Interval{u}_1, \Interval{u_2}, \ldots, \Interval{u_t}\}$ of intervals is a \emph{track} if $\max(\Interval{u_i}) = \min\left(\Interval{u_{i+1}}\right)$ for all $1\leq i<t$. Observe that if $T$ is a track, then $\mathcal{I}(T)$ is a path. In our construction, each track $T$ will be associated with a set of intervals called its \emph{roots}, denoted by $\Root{T}$. Sometimes we shall use the sentence ``root $\Interval{v}$ of a track $T$'' to say $\Interval{v} \in \Root{T}$. 

\begin{definition}
Let $T$ and $T'$ be two tracks such that $T\cup T'$ is a good set of intervals. Then, $T < T'$ if $max(T) < max(T')$. 
\end{definition}

Let $\mathcal{T}$ be a set of tracks and $T\in \mathcal{T}$. The phrase ``\emph{the track just preceding $T$}'' shall refer to the track $T'$ such that $T'<T$ and there is no $T''$ such that $T'<T''<T$. The phrases ``\emph{the track just following $T$}'', ``\emph{maximal track of $\mathcal{T}$}'' and ``\emph{minimal track of $\mathcal{T}$}'' are analogously defined.

\subsection{Initiation and construction of start gadget $\StartGadget$}\label{sec:start-rev} 

Let $F$ be an instance of \textsc{3-Sat} with variables $x_1, \dots, x_n$ and clauses $C_1,\dots,C_m$. Let $\epsilon=\frac{1}{(n+m)^4}$. The start gadget $\StartGadget$ consists of four intervals which are defined as follows: the \emph{start interval} $\startInterval = \interval{1,1}$, $\Interval{u_\startVertex} = \interval{1,2}$, the \emph{true interval} $\trueInterval = \interval{1+\epsilon,1+\epsilon}$ and $\Interval{ u_{\trueVertex} } = \interval{1+\epsilon,2+\epsilon}$. Let $T_1=\{\Interval{u_\startVertex} \}$ and $T_2=\{\Interval{u_\trueVertex}\}$. Observe that $T_1,T_2$ are tracks and $T_1 < T_2$. See Figure~\ref{fig:start-gadget}.

\begin{figure}
\centering
\begin{tikzpicture}[join=bevel,inner sep=0.5mm,line width=1.2pt, scale=0.4]
\node[draw=none] (s1) at (0,1) {};
\node[left=0mm of s1] {$\startInterval$};
\TikzSimplicial{(s1)}
\node[draw=none] (s1) at (1,1) {};
\node[right=0mm of s1] {$\trueInterval$};
\TikzSimplicial{(s1)}
\node[draw=none] (q) at (5,0) {$\Interval{u_\startInterval}$};
\TikzInterval{(0,0)}{(4,0)}
\node[draw=none] (q) at (6,-1) {$\Interval{u_\trueInterval}$};
\TikzInterval{(1,-1)}{(5,-1)}
\end{tikzpicture}
\caption{The start gadget $\StartGadget$. For drawing purposes, the proportions of the intervals were changed. Nevertheless, the obtained interval graph is unchanged.}\label{fig:start-gadget}
\end{figure}

We initialize two more sets, the set $\mathcal{T}=\{T_1,T_2\}$, and the set $D = \StartGadget$. In what follows, $\mathcal{T}$ will contain all constructed tracks and $D$ will contain all constructed intervals. As we proceed with the construction, we shall insert more intervals in $T_1,T_2$ while maintaining that both of them are tracks. We shall also add more tracks in $\mathcal{T}$. Let  $\Root{T_1}=\{\startInterval\}$ and $\Root{T_2}=\{\trueInterval\}$. Recall that for a track $T$, $\Root{T}$ denotes the set of root intervals of $T$.

\subsection{Implication gadget of a root $\Interval{p}$}\label{sec:implication-rev}

In order to construct the variable gadgets and the clause gadgets, we need to define the implication gadget.
Below we describe a generic procedure to construct implication gadgets of a root $\Interval{p}$ which is different from $\startInterval$ of $\StartGadget$. Let $T_{\Interval{p}}\in \mathcal{T}$ be the track such that $\Interval{p} \in \Root{T_{\Interval{p}}}$. Since $\Interval{p} \neq \startInterval$, $T_{\Interval{p}}$ is not the minimal element in $\mathcal{T}$. Below we describe the three steps for constructing $\Implication{p}{q}$. See Figure~\ref{fig:implication} for a graphical representation of the intervals.

\begin{figure}
    \centering
    \begin{tikzpicture}[join=bevel,inner sep=0.5mm,line width=1.2pt, scale=0.4]
\node[draw=none] (s1) at (18.5,11.5) {};
\node[left=1mm of s1] {$\Interval{s_q}$};
\TikzSimplicial{(s1)}
\node[draw=none] (q) at (6.5,10) {$\Interval{q}$};
\TikzInterval{(4,9.5)}{(9.5,9.5)}
\node[draw=none] (r) at (11,11) {$\Interval{r_q}$};
\TikzInterval{(9.5,10.5)}{(18.5,10.5)}
\node (pos) at (-3,8.7) {};
\draw[dashed] ($(pos)+(0,0)$) -- ($(pos)+(25,0)$) -- ($(pos)+(27,-3.3)$) -- ($(pos)+(1.5,-3.3)$) -- ($(pos)+(0,0)$); 
\node[draw=none] (t) at ($(pos)+(1,-0.7)$) {};
\TikzInterval{(t.center)}{($(t.center)+(24,0)$)}
\TikzIntSep{($(t.center)+(5,0)$)}
\TikzIntSep{($(t.center)+(16,0)$)}
\node[draw=none] (t) at ($(pos)+(2,-2.7)$) {};
\TikzInterval{(t.center)}{($(t.center)+(24,0)$)}
\TikzIntSep{($(t.center)+(4.5,0)$)}
\TikzIntSep{($(t.center)+(15.5,0)$)}
\node[rotate=135] at ($(pos)+(14,-1.7)$) {$\cdots$};
\node () at ($(pos)+(0,-2.7)$) {$X$};
\node[draw=none] (tl) at (0,4) {};
\node[left=1mm of tl] {$T_{\Interval{p}}$};
\TikzInterval{(tl.center)}{(23.5,4)}
\TikzIntSep{(6,4)}
\TikzIntSep{(15.5,4)}
\node () at (3,4.5) {$\Interval{u_{T_{\Interval{p}}}}$};
\node () at (10.75,4.5) {$\Interval{v_{T_{\Interval{p}}}}$};
\node () at (19.5,4.5) {$\Interval{w_{T_{\Interval{p}}}}$};

\node[draw=none] (tnew1) at (18.5,2) {};
\node[left=1mm of tnew1] {$T_2$};
\node at (25.5,2) {$\Interval{t}$};
\TikzInterval{(tnew1.center)}{(24.5,2)}

\node[draw=none] (tnew2) at (9.5,3) {};
\node[left=1mm of tnew2] {$T_1$};
\node at (25,3) {$\Interval{t_2}$};
\node at (13.5,2.5) {$\Interval{t_1}$};
\TikzInterval{(tnew2.center)}{(24,3)}
\TikzIntSep{(17.5,3)}

\node (pos) at (0,1.2) {};
\draw[dashed] ($(pos)+(0,0)$) -- ($(pos)+(25,0)$) -- ($(pos)+(27,-3.3)$) -- ($(pos)+(1.5,-3.3)$) -- ($(pos)+(0,0)$); 
\node[draw=none] (t) at ($(pos)+(1,-0.7)$) {};
\TikzInterval{(t.center)}{($(t.center)+(24,0)$)}
\TikzIntSep{($(t.center)+(10.5,0)$)}
\TikzIntSep{($(t.center)+(18.5,0)$)}
\node[draw=none] (t) at ($(pos)+(2,-2.7)$) {};
\TikzInterval{(t.center)}{($(t.center)+(24,0)$)}
\TikzIntSep{($(t.center)+(10,0)$)}
\TikzIntSep{($(t.center)+(18,0)$)}
\node[rotate=135] at ($(pos)+(14,-1.7)$) {$\cdots$};
\node () at ($(pos)+(-1,-0.7)$) {$X'$};

\end{tikzpicture}
    \caption{The implication gadget $\Implication{p}{q}$. For drawing purposes, the proportions of the intervals were changed. Nevertheless, the obtained interval graph is unchanged.}
    \label{fig:implication}
\end{figure}

\begin{enumerate}

\item \sloppy \textbf{Extension of existing tracks}: For each track $T\in \mathcal{T}$, introduce three new intervals $\Interval{u_T}=\interval{max(T),max(T)+1}$, $\Interval{v_T}=\interval{max(T)+1,max(T)+2}$ and $\Interval{w_T}=\interval{max(T)+2,max(T)+3}$. Let $T_{new}=\{ \Interval{u_T}, \Interval{v_T}, \Interval{w_T} \}$. Observe that, for two tracks $T,T'\in \mathcal{T}$ with $T < T'$, we have $(T \cup T_{new}) < (T'\cup T'_{new})$. 

\item \textbf{Creation of new intervals}: Let $X$ and $X'$ be the tracks that precede and follow $T_{\Interval{p}}$ in $\mathcal{T}$, respectively. Note that $X$ always exists since $\Interval{p}\neq \Interval{o}$. When $X'$ exists, let $\theta = max(\Interval{u_{X'}})$ and $\theta'=max(\Interval{v_{X'}})$. Otherwise, $\theta=max(\Interval{u_{T_{\Interval{p}}}}) + \epsilon$ and $\theta'=max(\Interval{v_{T_{\Interval{p}}}}) + \epsilon$. 

Define $\Interval{q}=\interval{\frac{max(\Interval{u_X}) + max(\Interval{u_{T_{\Interval{p}}}})}{2}, \frac{max(\Interval{u_{T_{\Interval{p}}}}) + \theta }{2}}$, 
$\Interval{r_q}=\left[ max(\Interval{q}), \frac{max(\Interval{v_{T_{\Interval{p}}}}) + \theta'}{2} \right]$ and $\Interval{s_q}=[max(\Interval{r_q}),max(\Interval{r_q})]$.

\item \textbf{Creation of new tracks}: In this step, we shall create two new tracks.
We define three more intervals as follows: $\Interval{t}=\interval{max(\Interval{s_q}),max(\Interval{s_q})+1}$, $\Interval{t_1}=\interval{max(\Interval{q}),\frac{max(\Interval{v_{T_{\Interval{p}}}}) + min(\Interval{s_q})}{2}}$ and $\Interval{t_2}=\interval{max(\Interval{t_1}),max(\Interval{t_1})+1}$.
Now, let  $T_1=\{ \Interval{t_1},\Interval{t_2} \}$, $\Root{T_1}=\{ \Interval{q}\}$, $T_{2}=\{\Interval{t}\}$ and $\Root{T_{2}}=\{\Interval{r_q},\Interval{s_q}\}$. 
\end{enumerate} 

To complete the construction of the implication gadget, we define $\Implication{p}{q} = \{ \Interval{q}, \Interval{r_q}, \Interval{s_q} \} \cup T_1 \cup T_2 \cup \bigcup\limits_{T\in \mathcal{T}} \{\Interval{u_T}, \Interval{v_T}, \Interval{w_T} \} $. Let $D=D \cup \Implication{p}{q}$. For each $T\in \mathcal{T}$, let $T=T\cup T_{new}$ and $\mathcal{T} = \mathcal{T} \cup \{T_1,T_2\}$. Observe that the intersection graph of $D$ does not contain $K_{1,5}$ as induced subgraph.


\subsection{Construction of covering gadgets}\label{sec:cover-gadget-rev}

Below we describe the three steps for constructing the covering gadget of the clause $C_i$. See Figure~\ref{fig:covering}.

\begin{figure}
    \centering
    \begin{tikzpicture}[join=bevel,inner sep=0.5mm,line width=1.2pt, scale=0.35]
\node[draw=none] () at (3,1) {$\CGc{i}$};
\node[draw=none] () at (3,2) {$\CGb{i}$};
\node[draw=none] () at (3,3) {$\CGa{i}$};
\TikzInterval{(4,1)}{(11,1)}
\TikzInterval{(4,2)}{(10,2)}
\TikzInterval{(4,3)}{(9,3)}
\node[draw=none] (s1) at (4,4) {};
\node[left=1mm of s1] {$\CGd{i}$};
\TikzSimplicial{(s1)}
\node[draw=none] () at (26,1.5) {$\CGCov{i}$};
\TikzInterval{(24,1)}{(30,1)}
\node[draw=none] (s1) at (30,2) {};
\node[left=1mm of s1] {$\CGf{i}$};
\TikzSimplicial{(s1)}
\node (pos) at (0,0) {};
\draw[dashed] ($(pos)+(0,0)$) -- ($(pos)+(35,0)$) -- ($(pos)+(37,-3.3)$) -- ($(pos)+(1.5,-3.3)$) -- ($(pos)+(0,0)$); 
\node[draw=none] (t) at ($(pos)+(1,-0.7)$) {};
\TikzInterval{(t.center)}{($(t.center)+(34,0)$)}
\TikzIntSep{($(t.center)+(12,0)$)}
\TikzIntSep{($(t.center)+(18,0)$)}
\node[draw=none] (t) at ($(pos)+(2,-2.7)$) {};
\TikzInterval{(t.center)}{($(t.center)+(34,0)$)}
\TikzIntSep{($(t.center)+(12,0)$)}
\TikzIntSep{($(t.center)+(18,0)$)}
\node[rotate=135] at ($(pos)+(16,-1.7)$) {$\cdots$};

\node[draw=none] (t) at (3,-4.5) {};
\node[draw=none] () at (3,-4.5) {$T_{\CGd{i}}$};
\TikzInterval{(4,-4.5)}{($(t.center)+(34,0)$)}
\TikzIntSep{($(t.center)+(12,0)$)}
\TikzIntSep{($(t.center)+(18,0)$)}

\node[draw=none] (t) at (3.5,-5.5) {};
\node[draw=none] () at (8,-5.5) {$T_{\CGa{i}}$};
\TikzInterval{(9,-5.5)}{($(t.center)+(34,0)$)}
\TikzIntSep{($(t.center)+(12,0)$)}
\TikzIntSep{($(t.center)+(22,0)$)}

\node[draw=none] (t) at (4,-6.5) {};
\node[draw=none] () at (9,-6.5) {$T_{\CGb{i}}$};
\TikzInterval{(10,-6.5)}{($(t.center)+(34,0)$)}
\TikzIntSep{($(t.center)+(12,0)$)}
\TikzIntSep{($(t.center)+(22,0)$)}

\node[draw=none] (t) at (4.5,-7.5) {};
\node[draw=none] () at (10,-7.5) {$T_{\CGc{i}}$};
\TikzInterval{(11,-7.5)}{($(t.center)+(34,0)$)}
\TikzIntSep{($(t.center)+(12,0)$)}
\TikzIntSep{($(t.center)+(22,0)$)}

\node[draw=none] (t) at (5,-8.5) {};
\node[draw=none] () at (29,-8.5) {$T_{\CGf{i}}$};
\TikzInterval{(30,-8.5)}{($(t.center)+(34,0)$)}




\end{tikzpicture}
    \caption{The covering gadget \CovGadget{i}. For drawing purposes, the proportions of the intervals were changed. Nevertheless, the obtained interval graph is unchanged.}
    \label{fig:covering}
\end{figure}

\begin{enumerate}
\item \sloppy \textbf{Extension of existing tracks:} For each track $T\in \mathcal{T}$, introduce three new intervals $\Interval{u^i_T}=\interval{max(T),max(T)+1}$, $\Interval{v^i_T}=\interval{max(T)+1,max(T)+2}$ and $\Interval{w^i_T}=\interval{max(T)+2,max(T)+3}$. Let $T_{new}=\{ \Interval{u^i_T}, \Interval{v^i_T}, \Interval{w^i_T} \}$. Observe that, for two tracks $T,T'\in \mathcal{T}$ with $T < T'$, we have $(T \cup T_{new}) < (T'\cup T'_{new})$. 

\item \sloppy \textbf{Creation of new intervals:} Let $T$ be the maximal track in $\mathcal{T}$. Let $\theta = min(\Interval{u^i_T})+\epsilon$. We define 
$\CGa{i} = \interval{\theta, \theta+\epsilon}$, 
$\CGb{i} = \interval{\theta,\theta+2\epsilon}$, 
$\CGc{i} = \interval{\theta, \theta+3\epsilon}$ and
$\CGd{i} = \interval{\theta,\theta}$.
Also define $\CGCov{i} = \interval{max(\Interval{v^i_T}) + 4\epsilon, max(\Interval{v^i_T}) + 7\epsilon}$,  and $\CGf{i} = \interval{ max(\CGCov{i}), max(\CGCov{i})} $. 
 
\item \sloppy \textbf{Creation of new tracks:} Now we create five more tracks as follows. 
Let 
$T_{\CGa{i}} = \set{ \interval{max(\CGa{i})+k , max(\CGa{i})+k+1} |~k \in \set{0,1,2}}$, 
$T_{\CGb{i}} = \set{ \interval{max(\CGb{i})+k , max(\CGb{i})+k+1} |~k \in \set{0,1,2}}$ and
$T_{\CGc{i}} = \set{ \interval{max(\CGc{i})+k , max(\CGc{i})+k+1} |~k \in \set{0,1,2}}$. 
Observe that $T_{\CGa{i}},T_{\CGb{i}},T_{\CGc{i}}$ are tracks and define $\Root{T_{\CGa{i}}} = \set{\CGa{i}}, \Root{T_{\CGb{i}}} = \set{\CGb{i}}$ and $\Root{T_{\CGc{i}}} = \set{\CGc{i}}$. 

Also, define $T_{\CGd{i}} = \set{ \interval{max(\CGd{i})+k , max(\CGd{i})+k+1} |k \in \set{0,1,2}}$ where $\Root{T_{\CGd{i}}} = \set{\CGd{i}}$ and $T_{\CGf{i}} = \set {  \interval{max(\CGf{i}), max(\CGf{i})+1 } } $ where $\Root{T_{\CGf{i}}} = \set{\CGCov{i},\CGf{i}}$.
\end{enumerate}

\sloppy To complete the construction of the covering gadget of $C_i$, we define 
$\CovGadget{i} =  \set{ \CGa{i} , \CGb{i}, \CGc{i}, \CGCov{i}, \CGd{i}, \CGf{i} } \cup \{ \Interval{u^i_T}, \Interval{v^i_T}, \Interval{w^i_T} \}_{T\in \mathcal{T}} \cup \left(\displaystyle\bigcup\limits_{\Interval{y}\in \{\Interval{a},\Interval{b},\Interval{c},\Interval{d},\Interval{f}\}} T_{\CGy{i}}\right)$. For each $T\in \mathcal{T}$, let $T=T\cup T_{new}$ and $\mathcal{T} = \mathcal{T} \cup \set{T_{\CGa{i}},T_{\CGb{i}},T_{\CGc{i}},T_{\CGd{i}},T_{\CGf{i}}}$. We set $D = D \cup \CovGadget{i}$. Observe that the intersection graph of $D$ does not contain $K_{1,5}$ as induced subgraph.

\subsection{Construction of the insert gadget}\label{sec:insert-gadget-rev}

Let $T_{\Interval{p}}$ and $T_{\Interval{q}}$ be two tracks of $\mathcal{T}$ with roots $\Interval{p}$ and $\Interval{q}$, respectively. Without loss of generality, assume that $T_{\Interval{p}} < T_{\Interval{q}}$. Below we describe the three steps for constructing the insert gadget $\INS{p}{q}$. See Figure~\ref{fig:insert}.

\begin{figure}
    \centering
    \begin{tikzpicture}[join=bevel,inner sep=0.5mm,line width=1.2pt, scale=0.4]
\node[draw=none] (s1) at (9.5,10) {};
\node[left=1mm of s1] {$\InsertSigma{p}{q}$};
\TikzSimplicial{(s1)}
\node (pos) at (-3,8.7) {};
\draw[dashed] ($(pos)+(0,0)$) -- ($(pos)+(25,0)$) -- ($(pos)+(27,-3.3)$) -- ($(pos)+(1.5,-3.3)$) -- ($(pos)+(0,0)$); 
\node[draw=none] (t) at ($(pos)+(1,-0.7)$) {};
\TikzInterval{(t.center)}{($(t.center)+(24,0)$)}
\node[draw=none] (t) at ($(pos)+(2,-2.7)$) {};
\TikzInterval{(t.center)}{($(t.center)+(24,0)$)}
\node[rotate=135] at ($(pos)+(14,-1.7)$) {$\cdots$};
\node () at ($(pos)+(0,-2.7)$) {$T_{\Interval{p}}$};

\node[draw=none] (tnew2) at (9.5,4.5) {};
\node[left=1mm of tnew2] {$T_m$};
\TikzInterval{(tnew2.center)}{(24,4.5)}

\node (pos) at (0,3.2) {};
\draw[dashed] ($(pos)+(0,0)$) -- ($(pos)+(25,0)$) -- ($(pos)+(27,-3.3)$) -- ($(pos)+(1.5,-3.3)$) -- ($(pos)+(0,0)$); 
\node[draw=none] (t) at ($(pos)+(1,-0.7)$) {};
\TikzInterval{(t.center)}{($(t.center)+(24,0)$)}
\node[draw=none] (t) at ($(pos)+(2,-2.7)$) {};
\TikzInterval{(t.center)}{($(t.center)+(24,0)$)}
\node[draw=none] (t) at ($(pos)+(1.5,-1.7)$) {};
\TikzInterval{(t.center)}{($(t.center)+(24,0)$)}
\node () at ($(pos)+(-.5,-1.9)$) {$T_{\Interval{q}}$};
\node at ($(pos)+(14,-1.2)$) {$\cdots$};
\node at ($(pos)+(14,-2.2)$) {$\cdots$};
\node () at ($(pos)+(-1,-0.7)$) {$X$};

\end{tikzpicture}
    \caption{The insert gadget $\INS{p}{q}$. For drawing purposes, the proportions of the intervals were changed. Nevertheless, the obtained interval graph is unchanged.}
    \label{fig:insert}
\end{figure}

\begin{enumerate}
\item \sloppy \textbf{Extension of existing tracks:} For each track $T\in \mathcal{T}$, we introduce one new interval $\Interval{u_T}=\interval{max(T),max(T)+1}$. Let $T_{new}=\set{ \Interval{u_T} } $. Observe that, for two tracks $T,T'\in \mathcal{T}$ with $T < T'$, we have $(T \cup T_{new}) < (T'\cup T'_{new})$.

\item \sloppy \textbf{Creation of a new interval:} Let $X$ be the track that just follows $T_{\Interval{p}}$ in $\mathcal{T}$. Observe that $X$ always exists. Let $\InsertSigma{p}{q} = \interval{\frac{max(\Interval{u}_{T_{\Interval{p}}}) + max(\Interval{u}_{X}) }{2} , \frac{max(\Interval{u}_{T_{\Interval{p}}}) + max(\Interval{u}_{X}) }{2}}$.

\item \sloppy \textbf{Creation of a new track:} Let $T_m = \set{  \interval{max(\InsertSigma{p}{q}), max(\InsertSigma{p}{q})+1}}$ and $\Root{T_m} = \set{\InsertSigma{p}{q}}$. 
\end{enumerate}

To complete the construction, we define $\INS{p}{q} = \set{ \InsertSigma{p}{q}}\ \cup\ \set{ \Interval{u_T}, \Interval{v_T} }_{T\in \mathcal{T}}\ \cup\ T_m$. We set $D=D\ \cup\ \INS{p}{q}$. For each $T\in \mathcal{T}$, let $T=T\cup T_{new}$ and $\mathcal{T} = \mathcal{T} \cup \{ T_m\}$. Observe that $T_{\Interval{p}} < T_m < T_{\Interval{q}}$ in $\mathcal{T}$. Moreover, observe that the intersection graph of $D$ does not contain $K_{1,5}$ as induced subgraph.

\subsection{Construction of AND gadgets}\label{sec:and-gadget-rev}
Let $T_{\Interval{p}}$ and $T_{\Interval{q}}$ be two tracks of $\mathcal{T}$ with roots $\Interval{p}$ and $\Interval{q}$, respectively. Without loss of generality, assume that $T_{\Interval{p}} < T_{\Interval{q}}$. Below we describe the four steps of constructing $\AND{\Interval{p}}{\Interval{q}}$. See Figure~\ref{fig:and}.

\begin{figure}
    \centering
    \begin{tikzpicture}[join=bevel,inner sep=0.5mm,line width=1.2pt, scale=0.4]
\node[draw=none] (s1) at (9,11.5) {};
\node[left=1mm of s1] {$\betaAND{p}{q}$};
\TikzSimplicial{(s1)}
\node[draw=none] (q) at (7,10) {$\alphaAND{p}{q}$};
\TikzInterval{(4,9.5)}{(10,9.5)}
\node[draw=none] (q) at (22,10) {$\deltaAND{p}{q}$};
\TikzInterval{(19,9.5)}{(25,9.5)}
\node[draw=none] (r) at (14.5,11) {$\gammaAND{p}{q}$};
\TikzInterval{(10,10.5)}{(19,10.5)}
\node (pos) at (-3,8.7) {};
\draw[dashed] ($(pos)+(0,0)$) -- ($(pos)+(35,0)$) -- ($(pos)+(37,-3.3)$) -- ($(pos)+(1.5,-3.3)$) -- ($(pos)+(0,0)$); 
\node[draw=none] (t) at ($(pos)+(1,-0.7)$) {};
\TikzInterval{(t.center)}{($(t.center)+(34,0)$)}
\TikzIntSep{($(t.center)+(5,0)$)}
\node[draw=none] (t) at ($(pos)+(2,-2.7)$) {};
\TikzInterval{(t.center)}{($(t.center)+(34,0)$)}
\TikzIntSep{($(t.center)+(4.5,0)$)}
\node[rotate=135] at ($(pos)+(19,-1.7)$) {$\cdots$};
 \node () at ($(pos)+(0,-2.7)$) {$Y_1$};
\node[draw=none] (tl) at (0,4) {};
\node[left=1mm of tl] {$T_{\Interval{p}}$};
\TikzInterval{(tl.center)}{(33.5,4)}
\TikzIntSep{($(tl.center)+(6,0)$)}

\node[draw=none] (tl) at (9,3) {};
\node[left=1mm of tl] {$T_{1}$};
\TikzInterval{(tl.center)}{(34,3)}

\node[draw=none] (tl) at (0.5,2) {};
\node[left=1mm of tl] {$T_{m}$};
\TikzInterval{(tl.center)}{(34.5,2)}
\TikzIntSep{($(tl.center)+(13,0)$)}



\node (pos) at (0,1.2) {};
\draw[dashed] ($(pos)+(0,0)$) -- ($(pos)+(35,0)$) -- ($(pos)+(37,-3.3)$) -- ($(pos)+(1.5,-3.3)$) -- ($(pos)+(0,0)$); 
\node[draw=none] (t) at ($(pos)+(1,-0.7)$) {};
\TikzInterval{(t.center)}{($(t.center)+(34,0)$)}
\TikzIntSep{($(t.center)+(13.5,0)$)}
\node[draw=none] (t) at ($(pos)+(2,-2.7)$) {};
\TikzInterval{(t.center)}{($(t.center)+(34,0)$)}
\TikzIntSep{($(t.center)+(13,0)$)}
\node[rotate=135] at ($(pos)+(19,-1.7)$) {$\cdots$};
 \node () at ($(pos)+(0,-2.7)$) {$Y_2$};

\node[draw=none] (tl) at (17,-3) {};
\node[left=1mm of tl] {$T_{2}$};
\TikzInterval{(tl.center)}{(36.5,-3)}

\node[draw=none] (tl) at (3,-4) {};
\node[left=1mm of tl] {$T_{\Interval{q}}$};
\TikzInterval{(tl.center)}{(37,-4)}
\TikzIntSep{($(tl.center)+(19,0)$)}

\node[draw=none] (tl) at (24,-5) {};
\node[left=1mm of tl] {$T_{3}$};
\TikzInterval{(tl.center)}{(37.5,-5)}

\node (pos) at (3,-5.8) {};
\draw[dashed] ($(pos)+(0,0)$) -- ($(pos)+(35,0)$) -- ($(pos)+(37,-3.3)$) -- ($(pos)+(1.5,-3.3)$) -- ($(pos)+(0,0)$); 
\node[draw=none] (t) at ($(pos)+(1,-0.7)$) {};
\TikzInterval{(t.center)}{($(t.center)+(34,0)$)}
\TikzIntSep{($(t.center)+(26.5,0)$)}
\node[draw=none] (t) at ($(pos)+(2,-2.7)$) {};
\TikzInterval{(t.center)}{($(t.center)+(34,0)$)}
\TikzIntSep{($(t.center)+(26,0)$)}
\node[rotate=135] at ($(pos)+(19,-1.7)$) {$\cdots$};
 \node () at ($(pos)+(-1,-0.7)$) {$Y'_2$};

\end{tikzpicture}
    \caption{The AND gadget $\AND{p}{q}$. For drawing purposes, the proportions of the intervals were changed. Nevertheless, the obtained interval graph is unchanged.}
    \label{fig:and}
\end{figure}

\begin{enumerate}

\item \sloppy \textbf{Creation of insert gadget:} Create an insert gadget $\INS{\Interval{p}}{\Interval{q}}$. Recall that the interval named $\InsertSigma{p}{q}$ exists and it is the root of track, say, $T_m$. 

\item \sloppy \textbf{Extension of existing tracks:} For each track $T\in \mathcal{T}$, introduce two new intervals $\Interval{u}_T=[max(T),max(T)+1]$ and $\Interval{v}_T=[max(T)+1,max(T)+2]$. Let $T_{new}=\{ \Interval{u}_T, \Interval{v}_T \}$. Observe that, for two tracks $T,T'\in \mathcal{T}$ with $T < T'$, we have $(T \cup T_{new}) < (T'\cup T'_{new})$.

\item \sloppy \textbf{Creation of new intervals:} Let $Y_1$  be the track just preceding $T_{\Interval{p}}$. Recall that $T_m$ is the track just following $T_{\Interval{p}}$.
Define $\alphaAND{p}{q} = \interval{ \frac{ max(\Interval{u_{Y_1}}) + max(\Interval{u_{T_{\Interval{p}}}}) }{2} , \frac{ max(\Interval{u_{T_{\Interval{p}}}}) + max(\Interval{u_{T_m}}) }{2} } $ 
and $\betaAND{p}{q} = \interval{ \frac{max(\Interval{u_{T_{\Interval{p}}}}) + max(\alphaAND{p}{q})}{2} , \frac{max(\Interval{u}_{T_{\Interval{p}}}) +  max(\alphaAND{p}{q}) }{2} }$. 

Let $Y_2$ be the track just preceding $T_{\Interval{q}}$ in $\mathcal{T}$. Observe that either $Y_2 = T_m$ or $T_m < Y_2 < T_{\Interval{q}}$. Now define $\gammaAND{p}{q} = \interval{ max(\alphaAND{p}{q}) , \frac{ max(\Interval{u_{Y_2}}) + max(\Interval{u_{T_{\Interval{q}}}}) }{2}} $. Let $Y'_2$ be the track just following $T_{\Interval{q}}$ in $\mathcal{T}$. We define $h = max(\Interval{u_{Y'_2}})$ if $Y'_2$ exists and $h = max(\Interval{u_{T_{\Interval{q}}}})+\epsilon$ otherwise. Now we define $\deltaAND{p}{q} = \interval{max(\gammaAND{p}{q}), \frac{ max(\Interval{u_{T_{\Interval{q}}}}) + h }{2}}$.


\item \sloppy \textbf{Creation of new tracks:} 
We create $T_1 = \set{ \interval{max(\betaAND{p}{q}), max(\betaAND{p}{q} + 1} } $ where $\Root{T_1}=\set{\alphaAND{p}{q},\betaAND{p}{q}}$, $T_2 =  \set{\interval{max(\gammaAND{p}{q}), max(\gammaAND{p}{q}) + 1} }$ where $\Root{T_2} = \set{\gammaAND{p}{q}}$ and $T_3 = \set{\interval{\deltaAND{p}{q},\deltaAND{p}{q}+1}}$ where $\Root{T_3} = \set{\deltaAND{p}{q}}$. 
\end{enumerate}

To complete the construction, define $\AND{p}{q} = \set{ \alphaAND{p}{q} , \betaAND{p}{q}, \gammaAND{p}{q}, \deltaAND{p}{q}} \cup \set{ \Interval{u_T}, \Interval{v_T} }_{T\in \mathcal{T}} \cup \set{T_j}_{1\leq j\leq 3}$. For each $T\in \mathcal{T}$, let $T=T\cup T_{new}$. Let $D = D \cup \AND{p}{q}$ and $\mathcal{T}=\mathcal{T} \cup \{T_1, T_2,T_3\}$. Observe that the intersection graph of $D$ does not contain $K_{1,5}$ as induced subgraph.

\subsection{Construction of variable gadgets} \label{sec:variable-rev}

We construct the variable gadgets sequentially and connect each of them to the previous one ($\VarGadget_1$ is connected to the start gadget $\StartGadget$). Assuming that we have placed $\StartGadget,\VarGadget_1,\ldots,\VarGadget_{i-1}$, we construct $\VarGadget_i$ as follows. For variable $x_i$, the gadget $\VarGadget_i$ consists of two implication gadgets. Let $D$ and $\mathcal{T}$ be the set of intervals and tracks created so far. First, we construct $\Implication{\trueVertex}{x_i}$. Observe that the sets $D$ and $\mathcal{T}$ have been updated after the last operation. There is an interval $\Interval{x_i}$ in $D$ and there is a track $T\in \mathcal{T}$ whose root is $\Interval{x_i}$. Now we construct $\Implication{x_i}{\overbar{x_i}}$. Observe that after constructing all the variable gadgets, for each literal $\ell$, there is an interval named $\Interval{\ell}$ in $D$. See Figure~\ref{fig:var-gadget} for an illustration of $\VarGadget_i$ created corresponding to the variable $x_i$. 

\begin{figure}[t]
    \centering
    \begin{tikzpicture}
        \node (imp1) at (0.5,0) [draw,thick,minimum width=2cm,minimum height=4cm] {$\Implication{\trueVertex}{\Interval{x_i}}$};
        
        \node (imp2) at (3.3,0) [draw,thick,minimum width=2cm,minimum height=4cm] {$\Implication{\Interval{x_i}}{\Interval{\overline{x_i}}}$};

        \node at (0.5,2.2) {$\Interval{x_i}$, $\Interval{r_{x_i}}$, $\Interval{s_{x_i}}$ };

        \node at (3.2,2.2) {$\Interval{\overline{x_i}}$, $\Interval{r_{\overline{x_i}}}$, $\Interval{s_{\overline{x_i}}}$};
        
    \end{tikzpicture}
    \caption{The variable gadget $\VarGadget_i$ created corresponding to the variable $x_i$. The first implication gadget contains three special intervals named $\Interval{x_i}, \Interval{r_{x_i}}, \Interval{s_{x_i}}$. The second implication gadget contains three special intervals named $\Interval{\overline{x_i}}, \Interval{r_{\overline{x_i}}}, \Interval{s_{\overline{x_i}}}$.}
    \label{fig:var-gadget}
\end{figure}

\subsection{Construction of clause gadgets}\label{sec:clause-gadget-rev}



We shall complete our construction of clause gadget $\ClauseGadget_i$ corresponding to the clause $C_i = (\ell_i^1, \ell_i^2, \ell_i^3)$. First, we create the covering gadget $\CovGadget{i}$ and update $D,\mathcal{T}$ as described in Section~\ref{sec:cover-gadget-rev}. Recall from the construction of $\CovGadget{i}$ that the intervals named $ \CGa{i}$, $\CGb{i}$, $\CGc{i}$ exist. Also recall from the construction of variable gadgets (described in Section~\ref{sec:variable-rev}) that the intervals $\Interval{\ell_i^1}$, $\Interval{\ell_i^2}$, $\Interval{\ell_i^3}$ and $\Interval{\overbar{\ell_i^1}}$, $\Interval{\overbar{\ell_i^2}}$, $\Interval{\overbar{\ell_i^3}}$ exist. Now we create, in this order, $\Implication{ \CGa{i}}{\CGap{i}}$, 
$\AND{\CGa{i}}{\ell_i^1}$, $\AND{\CGap{i}}{\Interval{\overbar{\ell_i^1}}}$,
$\Implication{ \CGb{i}}{\CGbp{i}}$, 
$\AND{\CGb{i}}{\ell_i^2}$, 
$\AND{\CGbp{i}}{\Interval{\overbar{\ell_i^2}}}$,
$\Implication{ \CGc{i}}{\CGcp{i}}$, 
$\AND{\CGc{i}}{\ell_i^3}$, 
$\AND{\CGcp{i}}{\Interval{\overbar{\ell_i^3}}}$ where $\CGap{i}$, $\CGbp{i}$ and $\CGcp{i}$ are three new intervals constructed in the corresponding implication gadgets. This completes the construction of $\ClauseGadget_i$. See Figure~\ref{fig:clause-gadget} for an illustration of the clause gadget $\ClauseGadget_i$.

\begin{figure}
    \centering
    \scalebox{0.65}{
    \begin{tikzpicture}
        
        \node (cov) at (0.5,0) [draw,thick,minimum width=2cm,minimum height=4cm] {$\CovGadget{i}$};
        \node (imp1) at (2.95,0) [draw,thick,minimum width=2cm,minimum height=4cm] {$\Implication{\Interval{a_i}}{\Interval{a'_i}}$};
        \node (and1) at (5.55,0) [draw,thick,minimum width=2cm,minimum height=4cm] {$\AND{\Interval{a_i}}{\Interval{\ell_i^1}}$};
        \node (and2) at (7.87,0) [draw,thick,minimum width=2cm,minimum height=4cm] {$\AND{\Interval{a'_i}}{\Interval{\overline{\ell_i^1}}}$};
        \node (imp2) at (10.49,0) [draw,thick,minimum width=2cm,minimum height=4cm] {$\Implication{\Interval{b_i}}{\Interval{b'_i}}$};
        \node at (13,0) [draw,thick,minimum width=2cm,minimum height=4cm] {$\AND{\Interval{b_i}}{\Interval{\ell_i^2}}$};
        \node at (15.3,0) [draw,thick,minimum width=2cm,minimum height=4cm] {$\AND{\Interval{b'_i}}{\Interval{\overline{\ell_i^2}}}$};
        \node (imp3) at (17.9,0) [draw,thick,minimum width=2cm,minimum height=4cm] {$\Implication{\Interval{c_i}}{\Interval{c'_i}}$};
        \node at (20.49,0) [draw,thick,minimum width=2cm,minimum height=4cm] {$\AND{\Interval{c_i}}{\Interval{\ell_i^3}}$};
        \node at (22.8,0) [draw,thick,minimum width=2cm,minimum height=4cm] {$\AND{\Interval{c'_i}}{\Interval{\overline{\ell_i^3}}}$};

        \node at (0.5,4) {\Large \begin{tabular}{c}
          $\CGd{i}$ \\
          $\CGa{i}$ \\
          $\CGb{i}$ \\
          $\CGc{i}$ \\
          $\CGCov{i}$\\
          $\CGf{i}$
     \end{tabular}};

        \node at (2.95,3) {\Large \begin{tabular}{c}
          $\Interval{\CGap{i}}$ \\
          $\Interval{r_{\CGap{i}}}$ \\
          $\Interval{s_{\CGap{i}}}$
     \end{tabular}};
     
              \node at (5.55,4) {\Large \begin{tabular}{c}
          $\InsertSigma{\Interval{\CGa{i}}}{\Interval{\ell^i_1}}$ \\
          $\alphaAND{\Interval{\CGa{i}}}{\Interval{\ell^i_1}}$ \\
          $\betaAND{\Interval{\CGa{i}}}{\Interval{\ell^i_1}}$ \\
          $\gammaAND{\Interval{\CGa{i}}}{\Interval{\ell^i_1}}$ \\
          $\deltaAND{\Interval{\CGa{i}}}{\Interval{\ell^i_1}}$
        \end{tabular}};

                \node at (7.87,-4.5) {\Large \begin{tabular}{c}
          $\InsertSigma{\Interval{\CGap{i}}}{\Interval{\overline{\ell^i_1}}}$ \\
          $\alphaAND{\Interval{\CGap{i}}}{\Interval{\overline{\ell^i_1}}}$ \\
          $\betaAND{\Interval{\CGap{i}}}{\Interval{\overline{\ell^i_1}}}$ \\
          $\gammaAND{\Interval{\CGap{i}}}{\Interval{\overline{\ell^i_1}}}$ \\
          $\deltaAND{\Interval{\CGap{i}}}{\Interval{\overline{\ell^i_1}}}$
        \end{tabular}};
        
        \node at (10.49,3) {\Large \begin{tabular}{c}
          $\Interval{\CGbp{i}}$ \\
          $\Interval{r_{\CGbp{i}}}$ \\
          $\Interval{s_{\CGbp{i}}}$
     \end{tabular}};

        \node at (13,4) {\Large \begin{tabular}{c}
          $\InsertSigma{\Interval{\CGb{i}}}{\Interval{{\ell^i_2}}}$ \\
          $\alphaAND{\Interval{\CGb{i}}}{\Interval{{\ell^i_2}}}$ \\
          $\betaAND{\Interval{\CGb{i}}}{\Interval{{\ell^i_2}}}$ \\
          $\gammaAND{\Interval{\CGb{i}}}{\Interval{{\ell^i_2}}}$ \\
          $\deltaAND{\Interval{\CGb{i}}}{\Interval{{\ell^i_2}}}$
        \end{tabular}};
        
        \node at (15.3,-4.5) {\Large \begin{tabular}{c}
          $\InsertSigma{\Interval{\CGbp{i}}}{\overline{\Interval{\ell^i_2}}}$ \\
          $\alphaAND{\Interval{\CGbp{i}}}{\Interval{\overline{\ell^i_2}}}$ \\
          $\betaAND{\Interval{\CGbp{i}}}{\Interval{\overline{\ell^i_2}}}$ \\
          $\gammaAND{\Interval{\CGbp{i}}}{\Interval{\overline{\ell^i_2}}}$ \\
          $\deltaAND{\Interval{\CGbp{i}}}{\Interval{\overline{\ell^i_2}}}$
        \end{tabular}};
        
        \node at (17.9,3) {\Large \begin{tabular}{c}
              $\Interval{\CGcp{i}}$ \\
              $\Interval{r_{\CGcp{i}}}$ \\
              $\Interval{s_{\CGcp{i}}}$
        \end{tabular}};
        
        \node at (20.49,4) {\Large \begin{tabular}{c}
          $\InsertSigma{\Interval{\CGc{i}}}{\Interval{\ell^i_3}}$ \\
          $\alphaAND{\Interval{\CGc{i}}}{\Interval{\ell^i_3}}$ \\
          $\betaAND{\Interval{\CGc{i}}}{\Interval{\ell^i_3}}$ \\
          $\gammaAND{\Interval{\CGc{i}}}{\Interval{\ell^i_3}}$ \\
          $\deltaAND{\Interval{\CGc{i}}}{\Interval{\ell^i_3}}$
        \end{tabular}};

                \node at (22.8,-4.5) {\Large \begin{tabular}{c}
          $\InsertSigma{\Interval{\CGcp{i}}}{\Interval{\overline{\ell^i_3}}}$ \\
          $\alphaAND{\Interval{\CGcp{i}}}{\Interval{\overline{\ell^i_3}}}$ \\
          $\betaAND{\Interval{\CGcp{i}}}{\Interval{\overline{\ell^i_3}}}$ \\
          $\gammaAND{\Interval{\CGcp{i}}}{\Interval{\overline{\ell^i_3}}}$ \\
          $\deltaAND{\Interval{\CGcp{i}}}{\Interval{\overline{\ell^i_3}}}$
        \end{tabular}};
        
    \end{tikzpicture}}
    \caption{Illustration of the clause gadget $\ClauseGadget_i$ with literals $\ell_i^1,\ell_i^2,\ell_i^3$. The names of the special intervals created at each gadget are highlighted above or below the box illustrating the respective gadgets.}
    \label{fig:clause-gadget}
\end{figure}

\subsection{Construction of end gadget}\label{sec:end-gadget-rev}

\begin{sidewaysfigure}[!h]
    \centering
    \scalebox{0.5}{
    \begin{tikzpicture}

     \node at (-18.7,0) [draw,thick,minimum width=2cm,minimum height=4cm] {$\StartGadget$};

      \node at (-18.7,2.2) { \Large $\startInterval, \trueInterval$ };

    \node at (-16.3,0) [draw,thick,minimum width=2cm,minimum height=4cm] {$\Implication{\trueVertex}{\Interval{x_1}}$};

     \node at (-13.4,0) [draw,thick,minimum width=2cm,minimum height=4cm] {$\Implication{\Interval{x_1}}{\Interval{\overline{x_1}}}$};

     \node at (-16.3,3) {\Large \begin{tabular}{c}
          $\Interval{x_1}$  \\
          $\Interval{r_{x_1}}$ \\
          $\Interval{s_{x_1}}$
     \end{tabular}};

     \node at (-13.4,3) {\Large \begin{tabular}{c}
          $\Interval{\overline{x_1}}$ \\
          $\Interval{r_{\overline{x_1}}}$ \\
          $\Interval{s_{\overline{x_1}}}$
     \end{tabular}};
        
    \node at (-10.5,0) [draw,thick,minimum width=2cm,minimum height=4cm] {$\Implication{\trueVertex}{\Interval{x_2}}$};

     \node at (-7.65,0) [draw,thick,minimum width=2cm,minimum height=4cm] {$\Implication{\Interval{x_2}}{\Interval{\overline{x_2}}}$};

     \node at (-10.5,3) {\Large \begin{tabular}{c}
          $\Interval{x_2}$  \\
          $\Interval{r_{x_2}}$ \\
          $\Interval{s_{x_2}}$
     \end{tabular}};

     \node at (-7.65,3) {\Large \begin{tabular}{c}
          $\Interval{\overline{x_2}}$ \\
          $\Interval{r_{\overline{x_2}}}$ \\
          $\Interval{s_{\overline{x_2}}}$
     \end{tabular}};

    \node at (-4.8,0) [draw,thick,minimum width=2cm,minimum height=4cm] {$\Implication{\trueVertex}{\Interval{x_3}}$};

        \node at (-1.95,0) [draw,thick,minimum width=2cm,minimum height=4cm] {$\Implication{\Interval{x_3}}{\Interval{\overline{x_3}}}$};

            \node at (-4.8,3) {\Large \begin{tabular}{c}
          $\Interval{x_3}$  \\
          $\Interval{r_{x_3}}$ \\
          $\Interval{s_{x_3}}$
     \end{tabular}};

     \node at (-1.95,3) {\Large \begin{tabular}{c}
          $\Interval{\overline{x_3}}$ \\
          $\Interval{r_{\overline{x_3}}}$ \\
          $\Interval{s_{\overline{x_3}}}$
     \end{tabular}};
        
        \node (cov) at (0.5,0) [draw,thick,minimum width=2cm,minimum height=4cm] {$\CovGadget{1}$};

        \node (imp1) at (2.95,0) [draw,thick,minimum width=2cm,minimum height=4cm] {$\Implication{\Interval{a_1}}{\Interval{a'_1}}$};

        \node (and1) at (5.55,0) [draw,thick,minimum width=2cm,minimum height=4cm] {$\AND{\Interval{a_1}}{\Interval{x_1}}$};

        \node (and2) at (7.87,0) [draw,thick,minimum width=2cm,minimum height=4cm] {$\AND{\Interval{a'_1}}{\Interval{\overline{x^1}}}$};

        \node (imp2) at (10.49,0) [draw,thick,minimum width=2cm,minimum height=4cm] {$\Implication{\Interval{b_1}}{\Interval{b'_1}}$};

        \node at (13,0) [draw,thick,minimum width=2cm,minimum height=4cm] {$\AND{\Interval{b_1}}{\Interval{\overline{x_2}}}$};

        \node at (15.3,0) [draw,thick,minimum width=2cm,minimum height=4cm] {$\AND{\Interval{b'_1}}{\Interval{x_2}}$};

        \node (imp3) at (17.9,0) [draw,thick,minimum width=2cm,minimum height=4cm] {$\Implication{\Interval{c_1}}{\Interval{c'_1}}$};

        \node at (20.49,0) [draw,thick,minimum width=2cm,minimum height=4cm] {$\AND{\Interval{c_1}}{\Interval{x_3}}$};

        \node at (0.5,4) {\Large \begin{tabular}{c}
          $\CGd{1}$ \\
          $\CGa{1}$ \\
          $\CGb{1}$ \\
          $\CGc{1}$ \\
          $\CGCov{1}$\\
          $\CGf{1}$
     \end{tabular}};

        \node at (2.95,3) {\Large \begin{tabular}{c}
          $\Interval{\CGap{1}}$ \\
          $\Interval{r_{\CGap{1}}}$ \\
          $\Interval{s_{\CGap{1}}}$
     \end{tabular}};
     
              \node at (5.55,4) {\Large \begin{tabular}{c}
          $\InsertSigma{\Interval{\CGa{1}}}{\Interval{x_1}}$ \\
          $\alphaAND{\Interval{\CGa{1}}}{\Interval{x_1}}$ \\
          $\betaAND{\Interval{\CGa{1}}}{\Interval{x_1}}$ \\
          $\gammaAND{\Interval{\CGa{1}}}{\Interval{x_1}}$ \\
          $\deltaAND{\Interval{\CGa{1}}}{\Interval{x_1}}$
        \end{tabular}};

                \node at (7.87,-4) {\Large \begin{tabular}{c}
          $\InsertSigma{\Interval{\CGap{1}}}{\Interval{\overline{x_1}}}$ \\
          $\alphaAND{\Interval{\CGap{1}}}{\Interval{\overline{x_1}}}$ \\
          $\betaAND{\Interval{\CGap{1}}}{\Interval{\overline{x_1}}}$ \\
          $\gammaAND{\Interval{\CGap{1}}}{\Interval{\overline{x_1}}}$ \\
          $\deltaAND{\Interval{\CGap{1}}}{\Interval{\overline{x_1}}}$
        \end{tabular}};
        
        \node at (10.49,3) {\Large \begin{tabular}{c}
          $\Interval{\CGbp{1}}$ \\
          $\Interval{r_{\CGbp{1}}}$ \\
          $\Interval{s_{\CGbp{1}}}$
     \end{tabular}};

        \node at (13,4) {\Large \begin{tabular}{c}
          $\InsertSigma{\Interval{\CGb{1}}}{\Interval{\overline{x_2}}}$ \\
          $\alphaAND{\Interval{\CGb{1}}}{\Interval{\overline{x_2}}}$ \\
          $\betaAND{\Interval{\CGb{1}}}{\Interval{\overline{x_2}}}$ \\
          $\gammaAND{\Interval{\CGb{1}}}{\Interval{\overline{x_2}}}$ \\
          $\deltaAND{\Interval{\CGb{1}}}{\Interval{\overline{x_2}}}$
        \end{tabular}};
        
        \node at (15.3,-4) {\Large \begin{tabular}{c}
          $\InsertSigma{\Interval{\CGbp{1}}}{\Interval{x_2}}$ \\
          $\alphaAND{\Interval{\CGbp{1}}}{\Interval{x_2}}$ \\
          $\betaAND{\Interval{\CGbp{1}}}{\Interval{x_2}}$ \\
          $\gammaAND{\Interval{\CGbp{1}}}{\Interval{x_2}}$ \\
          $\deltaAND{\Interval{\CGbp{1}}}{\Interval{x_2}}$
        \end{tabular}};
        
        \node at (17.9,3) {\Large \begin{tabular}{c}
              $\Interval{\CGcp{1}}$ \\
              $\Interval{r_{\CGcp{1}}}$ \\
              $\Interval{s_{\CGcp{1}}}$
        \end{tabular}};
        
        \node at (20.49,4) {\Large \begin{tabular}{c}
          $\InsertSigma{\Interval{\CGc{1}}}{\Interval{x_3}}$ \\
          $\alphaAND{\Interval{\CGc{1}}}{\Interval{x_3}}$ \\
          $\betaAND{\Interval{\CGc{1}}}{\Interval{x_3}}$ \\
          $\gammaAND{\Interval{\CGc{1}}}{\Interval{x_3}}$ \\
          $\deltaAND{\Interval{\CGc{1}}}{\Interval{x_3}}$
        \end{tabular}};

                \node at (22.8,-4) {\Large \begin{tabular}{c}
          $\InsertSigma{\Interval{\CGcp{1}}}{\Interval{\overline{x_3}}}$ \\
          $\alphaAND{\Interval{\CGcp{1}}}{\Interval{\overline{x_3}}}$ \\
          $\betaAND{\Interval{\CGcp{1}}}{\Interval{\overline{x_3}}}$ \\
          $\gammaAND{\Interval{\CGcp{1}}}{\Interval{\overline{x_3}}}$ \\
          $\deltaAND{\Interval{\CGcp{1}}}{\Interval{\overline{x_3}}}$
        \end{tabular}};
        
        \node at (22.8,0) [draw,thick,minimum width=2cm,minimum height=4cm] {$\AND{\Interval{c'_1}}{\Interval{\overline{x_3}}}$};

        \node at (25,0) [draw,thick,minimum width=2cm,minimum height=4cm] {$\EndGadget$};

        \node at (25,3) {\Large $\set{\Tail{T} \colon {T\in \mathcal{T}} }$};
    \end{tikzpicture}}
    \caption{Illustration of the gadgets for the 3-SAT instance $(x_1 \lor \overline{x_2} \lor x_3)$. The names of the special intervals created at each gadget are highlighted above or below the box illustrating the respective gadgets.}
    \label{fig:complete-gadget}
\end{sidewaysfigure}

For each $T\in \mathcal{T}$, we introduce two new intervals $\Interval{u}_T=[max(T), max(T) + 1], \Tail{T} = [max(\Interval{u}_T),max(\Interval{u}_T)]$ and define $T=T\cup \{ \Interval{u}_T \}, D=D\cup \{ \Interval{u}_T, \Interval{e}_T \}_{T\in \mathcal{T}}$. For each $T\in \mathcal{T}$, let $\Tail{T}$ be the \emph{tail} of $T$. The end gadget $\EndGadget$ consists of all the new intervals created above.  See Figure~\ref{fig:complete-gadget} for an illustration of all the gadgets created corresponding to the 3-SAT instance $F=(x_1 \lor \overline{x_2} \lor x_3)$. Observe that the intersection graph of $D$ does not contain $K_{1,5}$ as induced subgraph.

\subsection{Proofs}

\begin{figure}[!h]
    \centering
    \scalebox{0.7}{
    \begin{tikzpicture}
        
    \node (thm3) at (-0.5,0) [draw,thick,minimum width=2cm,minimum height=1cm] {Theorem~\ref{thm:interval-hard}};

    \node [below=of thm3, draw,thick,minimum width=2cm,minimum height=1cm] (lem37) {Lemma~\ref{lem:uper-bound-interval}};

    \node [left=of lem37, xshift=-4cm, draw,thick,minimum width=2cm,minimum height=1cm] (lem44) {Lemma~\ref{lem:sat-opt}};

    \node [right=of lem37, xshift=4cm, draw,thick,minimum width=2cm,minimum height=1cm] (lem51) {Lemma~\ref{lem:opt-sat}};

    \node [below=of lem51, xshift=-2cm, draw,thick,minimum width=2cm,minimum height=1cm] (lem50) {Lemma~\ref{lem:clause-select-1}};

    \node [below=of lem51, xshift=2cm, draw,thick,minimum width=2cm,minimum height=1cm] (lem49) {Lemma~\ref{lem:clause-select}};

    \node [below=of lem50, draw,thick,minimum width=2cm,minimum height=1cm] (lem47) {Lemma~\ref{lem:implication-select}};
    
    \node [below=of lem49, draw,thick,minimum width=2cm,minimum height=1cm] (lem48) {Lemma~\ref{lem:AND-select}};

    \node [below=of lem51, yshift=-4cm, draw,thick,minimum width=2cm,minimum height=1cm] (lem45) {Lemma~\ref{lem:good-geodetic-set}};

    \node [below=of lem45, draw,thick,minimum width=2cm,minimum height=1cm] (lem39) {Lemma~\ref{lem:semi-good}};

        \node [below=of lem44, xshift =-5cm, yshift=-6cm, draw,thick,minimum width=2cm,minimum height=1cm] (lem40) {Lemma~\ref{lem:track-covered-rev}};

        \node [below=of lem44, xshift =-2cm, yshift=-6cm, draw,thick,minimum width=2cm,minimum height=1cm] (lem41) {Lemma~\ref{lem:imply-cover-rev}};

        \node [below=of lem44, xshift =2cm, yshift=-6cm, draw,thick,minimum width=2cm,minimum height=1cm] (lem42) {Lemma~\ref{lem:cover-gadget-1-rev}};

        \node [below=of lem44, xshift =5cm, yshift=-6cm, draw,thick,minimum width=2cm,minimum height=1cm] (lem43) {Lemma~\ref{lem:AND-cover-1-rev}};

        \node [below=of lem44, yshift=-8cm, draw,thick,minimum width=2cm,minimum height=1cm] (lem38) {Lemma~\ref{lem:shortest}};

        \node [below=of lem37, yshift=-10cm, draw,thick,minimum width=2cm,minimum height=1cm] (construct) {Construction of $D$};      

        \node [below=of lem44, rounded corners, fill=gray, opacity=0.1, yshift=-5.5cm, draw,thick,minimum width=13cm,minimum height=2cm] (repre) {};
        
        \draw[->,thick,densely dotted] (construct.north) -- (lem37.south);
        \draw[->,thick,densely dotted] (construct.west) -| (lem40.south);
        \draw[->,thick, dotted] (lem38.south) ++ (0,-1.5) -- (lem38.south);
        \draw[->,thick, dotted] (lem39.west) ++ (-3.5,0) -- (lem39.west);
        \draw[->,thick,densely dotted] (lem41.south) ++ (0,-3.5) -- (lem41.south);
        \draw[->,thick,densely dotted] (lem42.south) ++ (0,-3.5) -- (lem42.south);
        \draw[->,thick,densely dotted] (lem43.south) ++ (0,-3.5) -- (lem43.south);         
        \draw[->,thick,densely dotted] (construct.east) -- ++ (1,0) |- (lem45.west);

        \draw[->,thick,densely dashed] (lem38.west) -- ++ (-3.5,0) -- ++ (0,1.5); \draw[->,thick,densely dashed] (lem38.east) -- ++ (3.5,0) -- ++ (0,1.5);\draw[->,thick,densely dashed] (lem38.east) -- ++ (1.5,0) -- ++ (0,1.5);\draw[->,thick,densely dashed] (lem38.west) -- ++ (-1.5,0) -- ++ (0,1.5);
        
        \draw[->,thick,densely dotted] (repre.north) -- (lem44.south);
        \draw[->,thick,densely dashed] (repre.north) ++ (0.75,0)  -- ++(0,1.25) -- ++ (12.25,0);

        \draw[->,thick,densely dotted] (lem39.north) -- (lem45.south);
        \draw[->,thick,densely dotted] (lem45.north) -- ++ (0,0.5) -| (lem47.south); 
        \draw[->,thick,densely dotted] (lem45.north) -- ++ (0,0.5) -| (lem48.south);
        \draw[->,thick,densely dotted] (lem47.north) -- (lem50.south);
        \draw[->,thick,densely dotted] (lem48.north) -- (lem49.south);
        \draw[->,thick,densely dotted] (lem48.west) -- ++ (-0.5,0) -- ++ (0,1.75) -- ++ (-1.5,0);
        \draw[->,thick,densely dotted] (lem49.west) -- (lem50.east);
        \draw[->,thick,densely dotted] (lem50.north) |- (lem51.west);
        \draw[->,thick,densely dotted] (lem49.north) |- (lem51.east);

        \draw[->,thick,densely dotted] (lem37.north) -- (thm3.south);
        
        \draw[->,thick,densely dotted] (lem44.north) |- (thm3.west);

        \draw[->,thick,densely dotted] (lem51.north) |- (thm3.east);

    \end{tikzpicture}}
    \caption{Roadmap for proof of Theorem~\ref{thm:interval-hard}.}
    \label{fig:roadmap-interval}
\end{figure}

In this section, we shall show that the 3-SAT instance is satisfiable if and only if the constructed graph has a geodetic set with a certain cardinality. For the entirety of this section, $F$ shall denote a 3-SAT instance with variables $x_1,x_2,\ldots, x_n$ and clauses $C_1,C_2,\ldots,C_m$. The set $D$ will denote the set of constructed intervals, $\mathcal{T}$ will denote the tracks. For each $i\in [n]$, $\VarGadget_i$ shall denote the constructed variable gadget and for each $j\in [m]$, $\ClauseGadget_j$ shall denote the set of constructed gadget. The reader may use Figure~\ref{fig:roadmap-interval} to navigate the proof of Theorem~\ref{thm:interval-hard}. 

We begin by showing that the number of vertices in the constructed graph is polynomial in the number of variables and clauses, which implies that the construction procedure takes polynomial time.

\begin{lemma}\label{lem:uper-bound-interval}
There are $\ntracks$ tracks in $\mathcal{T}$ and $\nbSimplicial$ point intervals in $D$. The total number of intervals in $D$ is $O((n+m)^2)$.
\end{lemma}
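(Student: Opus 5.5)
The plan is a direct bookkeeping argument: go through the construction gadget by gadget and record, for each gadget, how many new tracks it adds to $\mathcal{T}$ and how many point intervals it adds to $D$. Then sum over the gadgets that appear (one start gadget, $n$ variable gadgets, $m$ clause gadgets, one end gadget). The only subtlety is that an AND gadget first builds an insert gadget, so its counts must include those of $\INS{p}{q}$. One also has to note which intervals are point intervals: those of the form $[a,a]$.

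\textbf{Tracks.} The start gadget $\StartGadget$ creates $2$ tracks.
\begin{itemize}
\item An implication gadget creates $2$ new tracks.
\item A covering gadget creates $5$ tracks: $T_{\CGa{i}},T_{\CGb{i}},T_{\CGc{i}},T_{\CGd{i}},T_{\CGf{i}}$.
\item An insert gadget creates $1$ track, so an AND gadget creates $1+3=4$ tracks.
\item A variable gadget consists of two implication gadgets, hence adds $4$ tracks.
\item A clause gadget consists of one covering gadget, three implication gadgets and six AND gadgets, hence adds $5+6+24=35$ tracks.
\item The end gadget only extends existing tracks.
\end{itemize}
This gives $2+4n+35m$ tracks, as claimed.

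\textbf{Point intervals.} The start gadget contributes $\startInterval,\trueInterval$, which is $2$.
\begin{itemize}
\item An implication gadget contributes only $\Interval{s_q}$, which is $1$.
\item A covering gadget contributes $\CGd{i},\CGf{i}$, which is $2$.
\item An insert gadget contributes $\InsertSigma{p}{q}$, which is $1$. An AND gadget therefore contributes $\InsertSigma{p}{q}$ and $\betaAND{p}{q}$, which is $2$.
\end{itemize}
Hence each variable gadget contributes $2$ point intervals and each clause gadget contributes $2+3+12=17$. The end gadget contributes one tail $\Tail{T}$ per track, i.e.\ $2+4n+35m$. I will check that all other intervals have positive length: track intervals are unit intervals, and $\Interval{q},\Interval{r_q},\CGa{i},\CGb{i},\CGc{i},\CGCov{i},\alphaAND{p}{q},\gammaAND{p}{q},\deltaAND{p}{q}$ are nondegenerate. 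The total is then $2+2n+17m+2+4n+35m=4+6n+52m$.

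\textbf{Total size.} Each gadget adds $O(1)$ new intervals of its own, plus at most three intervals per existing track when extending tracks. The number of tracks is always at most $2+4n+35m=O(n+m)$. The number of gadgets is $O(n+m)$: namely $2n$ implication gadgets for the variables, and $O(1)$ gadgets per clause. So the total is $O((n+m)\cdot(n+m))=O((n+m)^2)$.

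The main obstacle is purely the care needed not to miss or double count intervals. In particular, I must remember the nested insert gadget inside each AND gadget, and confirm that the extension steps add no point intervals except in the end gadget.
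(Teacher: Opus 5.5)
Your proof is correct and uses the same bookkeeping as the paper. The paper counts per gadget type ($2n+3m$ implication gadgets, $6m$ AND gadgets each including its insert gadget, $m$ covering gadgets, plus one tail per track in the end gadget), while you group the same counts per variable and clause gadget to reach $2+4n+35m$ tracks and $4+6n+52m$ point intervals; your explicit argument for the $O((n+m)^2)$ total ($O(n+m)$ gadgets, each extending $O(n+m)$ tracks by $O(1)$ intervals) spells out a step the paper only asserts.
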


\begin{proof}
    Recall that the construction of the start gadget consists of creating two point intervals and two tracks. The construction of one implication gadget consists of creating one point interval and two new tracks. The construction of one AND gadget consists of creating two point intervals (one of which is due to the creation of the insert gadget) and four new tracks. The construction of one covering gadget consists of creating two point intervals and five new tracks. Recall that each variable gadget consists of two implication gadgets, and one clause gadget contains three implication gadgets. Thus, the total number of implication gadgets created is $2n+3m$. Each clause gadget contains six AND gadgets. Hence, the total number of AND gadgets created is $6m$. The total number of covering gadgets created is $m$. So, the total number of tracks created is $\ntracks$. The end gadget contains one distinct point interval for each track. Hence, the total number of point intervals is $\nbSimplicial$. Thus, the total number of intervals in $D$ is $O((n+m)^2)$.  
\end{proof}

In the next few lemmas, we prove that if an induced path of an interval graph has a certain structure, then it is also a shortest path between its end-vertices.

\begin{lemma}\label{lem:shortest}
Let $\Interval{u}$ and $\Interval{v}$ be two intervals of $D$ such that $\min(\Interval{u}) < \min(\Interval{v})$. Let $P=\Interval{u_0} \Interval{u_1} \dots \Interval{u_k} \Interval{v}$ be an induced path such that $\Interval{u_0}=\Interval{u}$ and $\Interval{u_{i+1}}$ is the rightmost neighbour of $\Interval{u_i}$ for $0\leq i\leq k-1$, and $\Interval{u_{k-1}} \notin N(\Interval{v})$, while $\Interval{u_{k}} \in N(\Interval{v})$. Then, $P$ is a shortest path.
\end{lemma}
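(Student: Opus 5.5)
This is the standard greedy argument for interval graphs: always jumping to the rightmost neighbour reaches as far right as possible at every step. We compare $P$ against an arbitrary shortest path $Q=\Interval{u}=\Interval{q_0}\Interval{q_1}\dots\Interval{q_\ell}=\Interval{v}$ and show $\ell\geq k+1$, which is the length of $P$.

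\textbf{Step 1 (reduction).} Since $\Interval{u_{k-1}}\notin N(\Interval{v})$, every interval on $P$ up to $\Interval{u_{k-1}}$ lies entirely to the left of $\Interval{v}$. Indeed, the maxima $\max(\Interval{u_i})$ are nondecreasing, because each $\Interval{u_{i+1}}$ is a rightmost neighbour of $\Interval{u_i}$. Also $\min(\Interval{u})<\min(\Interval{v})$. Together these give $\max(\Interval{u_{k-1}})<\min(\Interval{v})$. We may also assume $\Interval{u}$ and $\Interval{v}$ are not adjacent, otherwise the claim is trivial.

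\textbf{Step 2 (induction).} Define the reach after $i$ steps from $\Interval{u}$ as $R_i=\max\{\max(\Interval{w}) : d(\Interval{u},\Interval{w})\le i\}$. We prove by induction on $i$ that $\max(\Interval{u_i})=R_i$ for $i\le k$.

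The base case $i=0$ is trivial: $R_0=\max(\Interval{u})$.

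For the inductive step, take any $\Interval{w}$ with $d(\Interval{u},\Interval{w})=i+1$. Its predecessor $\Interval{w'}$ on a shortest path satisfies $\max(\Interval{w'})\le R_i=\max(\Interval{u_i})$. We must show $\Interval{w}$ intersects $\Interval{u_i}$.
\begin{itemize}
\item The bound $\min(\Interval{w})\le\max(\Interval{w'})\le\max(\Interval{u_i})$ is immediate.
\item The bound $\max(\Interval{w})\ge\min(\Interval{u_i})$ is where care is needed. The union of $\Interval{u_0},\dots,\Interval{u_i}$ is a connected segment from $\min(\Interval{u})$ to $\max(\Interval{u_i})$. The union of the intervals on any path from $\Interval{u}$ to $\Interval{w'}$ is likewise connected. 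If $\Interval{w}$ lay entirely to the left of $\Interval{u_i}$ but beyond reach $R_i$, this would contradict $R_i$ being a maximum. If it lies to the left and within reach, then it does not improve the reach, and we can ignore it.
\end{itemize}
Hence any interval improving the reach intersects $\Interval{u_i}$. Since $\Interval{u_{i+1}}$ is the rightmost neighbour of $\Interval{u_i}$, we get $\max(\Interval{u_{i+1}})\ge R_{i+1}$, and the reverse inequality holds trivially.

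\textbf{Step 3 (conclusion).} Since $\Interval{v}$ lies to the right of every interval at distance at most $k-1$ (as $R_{k-1}=\max(\Interval{u_{k-1}})<\min(\Interval{v})$), no interval within distance $k-1$ of $\Interval{u}$ meets $\Interval{v}$. Therefore $d(\Interval{u},\Interval{v})\ge k+1$, which equals the length of $P$, so $P$ is a shortest path.

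\textbf{Main obstacle.} The delicate point is the inductive step. One must argue that intervals reached by other paths, possibly lying left of $\Interval{u_i}$ or containing it, never beat the greedy reach. The cleanest route is the connectivity-of-unions observation from Step 2. The distances themselves are computed in the whole graph $\mathcal{I}(D)$, so nothing specific to the construction is needed beyond $D$ being a set of intervals.
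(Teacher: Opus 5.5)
Your argument is correct in substance, but it follows a different route from the paper's. The paper uses an exchange argument. It takes a shortest $\Interval{u}$--$\Interval{v}$ path $P'$ that shares the longest possible prefix $\Interval{u_0}\dots\Interval{u_i}$ with $P$. If $i<k$, it replaces the next vertex $\Interval{z}$ of $P'$ by $\Interval{u_{i+1}}$, which ends no earlier than $\Interval{z}$ since both are neighbours of $\Interval{u_i}$. This gives a shortest path with a longer common prefix, a contradiction. You instead prove an invariant over BFS layers: $\max(\Interval{u_i})$ is the farthest right endpoint reachable within $i$ steps. From this you read off $d(\Interval{u},\Interval{v})\geq k+1$ directly. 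Your route avoids path surgery; the paper's swap tacitly requires checking that $\Interval{u_{i+1}}$ is adjacent to the successor of $\Interval{z}$ on $P'$. It also yields a stronger, reusable statement about greedy reach. The paper's argument is shorter.

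Three justifications should be tightened.

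\textbf{Monotonicity.} The inequality $\max(\Interval{u_{i+1}})\geq\max(\Interval{u_i})$ does not follow merely from $\Interval{u_{i+1}}$ being the rightmost neighbour. In a general interval graph, all neighbours of $\Interval{u_i}$ may end inside $\Interval{u_i}$. You need this inequality in the case $R_{i+1}=R_i$. It does hold here. Suppose $R_i<\min(\Interval{v})$ and no interval at distance $i+1$ ends beyond $R_i$. Then an interval at distance $i+2$ ending beyond $R_i$ would contain $\max(\Interval{u_i})$, so it would meet $\Interval{u_i}$, which is impossible. Inductively, no reachable interval ends beyond $R_i$, so $\Interval{v}$ would be unreachable. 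Hence some neighbour of $\Interval{u_i}$ ends beyond $\max(\Interval{u_i})$.

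\textbf{Step 1.} The bound $\max(\Interval{u_{k-1}})<\min(\Interval{v})$ uses that $P$ is induced, not just monotonicity and $\min(\Interval{u})<\min(\Interval{v})$. Inductively, for each $j\leq k-1$, combine $\min(\Interval{u_j})\leq\max(\Interval{u_{j-1}})<\min(\Interval{v})$ with $\Interval{u_j}\notin N(\Interval{v})$. This gives $\max(\Interval{u_j})<\min(\Interval{v})$.

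\textbf{Step 2.} Your discussion of intervals ``entirely to the left of $\Interval{u_i}$ but beyond reach'' is confused, since no such interval can exist. It is also unnecessary. If $\max(\Interval{w})>R_i=\max(\Interval{u_i})$, then $\max(\Interval{w})\geq\min(\Interval{u_i})$ trivially. So the step you flag as the main obstacle is immediate.
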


\begin{proof} 
Let $P = \Interval{u_0} \Interval{u_1} \dots \Interval{u_k} \Interval{v}$ and $P'$ be a shortest path from $\Interval{u}$ to $\Interval{v}$ which starts by the longest common subpath with $P$. Hence $P'= \Interval{u_0} \Interval{u_1} \dots \Interval{u_i} \Interval{z} \dots \Interval{v}$ with $\Interval{z} \notin P$. If $i=k$ then $P$ and $P'$ have the same length. Otherwise, replace $\Interval{z}$ in $P'$ by $\Interval{u_{i+1}}$ to obtain a path $P''$. It is indeed a path as $\max(\Interval{z}) < \max(\Interval{u_{i+1}})$. Moreover $P'$ and $P''$ have the same length, a contradiction with the definition of $P'$.
\end{proof}

\begin{definition}
    Let $P=\Interval{u_1}\ldots \Interval{u_k}$ be a path such that $\Interval{u_{i+1}}$ is the rightmost neighbour of $\Interval{u_i}$ for all $i\in 1\leq i\leq k-2$ and $\Interval{u_{k-2}}$ is not adjacent to $\Interval{u_{k}}$. The path $P$ is a shortest path between $\Interval{u_1}$ and $\Interval{u_k}$ by Lemma~\ref{lem:shortest}. We say that such a path is a ``\emph{good shortest path}''.
\end{definition}

\begin{definition}
    Let $\Interval{u},\Interval{v},\Interval{w}$ be three intervals of $D$ such that $\max(\Interval{u}) < \min(\Interval{w})$ and $\Interval{v} \in I(\Interval{u},\Interval{w})$. A shortest path $P=\Interval{u_0} \Interval{u_1} \ldots \Interval{u_k} \Interval{v} \Interval{u_{k+1}} \ldots \Interval{w}$ between $\Interval{u_0}=\Interval{u}$ and $\Interval{w}$ is ``\emph{semi-good}'' if for each $0\leq i\leq k-1$, $\Interval{u_{i+1}}$ is the rightmost interval of $\Interval{u_i}$. 
\end{definition}

\begin{lemma}\label{lem:semi-good}
    Let $\Interval{u},\Interval{v},\Interval{w}$ be two intervals of $D$ such that $\max(\Interval{u}) < \min(\Interval{w})$ and $\Interval{v} \in I(\Interval{u},\Interval{w})$. Then, there exists a semi-good shortest path between $\Interval{u}$ and $\Interval{w}$ containing $\Interval{v}$.
\end{lemma}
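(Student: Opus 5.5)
Since $\Interval{v} \in I(\Interval{u},\Interval{w})$, there is a shortest path $Q$ from $\Interval{u}$ to $\Interval{w}$ passing through $\Interval{v}$, and $d(\Interval{u},\Interval{w}) = d(\Interval{u},\Interval{v}) + d(\Interval{v},\Interval{w})$. It therefore suffices to build a shortest path $P_1$ from $\Interval{u}$ to $\Interval{v}$ whose vertices before $\Interval{v}$ follow the rightmost-neighbour rule. We then concatenate $P_1$ with the suffix of $Q$ from $\Interval{v}$ to $\Interval{w}$. The concatenation has length $d(\Interval{u},\Interval{v}) + d(\Interval{v},\Interval{w}) = d(\Interval{u},\Interval{w})$, so it is a walk of minimum length, hence a path and a shortest one. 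It contains $\Interval{v}$ and has the semi-good form.

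To build $P_1$, let $k+1 = d(\Interval{u},\Interval{v})$. If $\Interval{v}$ is adjacent to or equal to $\Interval{u}$, the statement is trivial, since $k$ can be taken so that there are no greedy steps. So assume $d(\Interval{u},\Interval{v}) \geq 2$.

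First we note that $\Interval{v}$ is not entirely to the left of $\Interval{u}$. If it were, then since $\max(\Interval{u}) < \min(\Interval{w})$, any path from $\Interval{v}$ to $\Interval{w}$ would pass through a neighbour of $\Interval{u}$ by the interval (Helly/continuity) property. This would make $d(\Interval{v},\Interval{w}) \geq d(\Interval{u},\Interval{w})$, contradicting $d(\Interval{u},\Interval{v}) \ge 1$ and $\Interval{v}$ lying on a shortest $\Interval{u}$–$\Interval{w}$ path. So $\min(\Interval{v}) > \max(\Interval{u})$, or the intervals overlap, which is the adjacent case.

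Define $\Interval{u_0}=\Interval{u}$ and let $\Interval{u_{i+1}}$ be the rightmost neighbour of $\Interval{u_i}$. The standard greedy argument for interval graphs gives, by induction on $i$, that $\max(\Interval{u_i})$ is the largest right endpoint reachable from $\Interval{u}$ within $i$ steps. The exchange argument is the same as in Lemma~\ref{lem:shortest}: any interval at distance $i$ from $\Interval{u}$ with a larger right endpoint would yield a neighbour of $\Interval{u_{i-1}}$ extending further right. Hence $\Interval{u_i}$ is adjacent to $\Interval{v}$ precisely once $i \geq d(\Interval{u},\Interval{v})-1$. Take $k$ minimal with $\Interval{u_k} \in N(\Interval{v})$. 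Then $\Interval{u_{k-1}}\notin N(\Interval{v})$, and Lemma~\ref{lem:shortest} applies directly to show that $\Interval{u_0}\cdots\Interval{u_k}\Interval{v}$ is a shortest path. This path is $P_1$.

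The main obstacle is the greedy-reachability claim: that the rightmost-neighbour sequence reaches a neighbour of $\Interval{v}$ after exactly $d(\Interval{u},\Interval{v})-1$ steps. It needs care because of point intervals and shared endpoints in $D$. I would handle it with the exchange argument from the proof of Lemma~\ref{lem:shortest}, applied to a shortest $\Interval{u}$–$\Interval{v}$ path: replace its vertices one by one by the greedy ones, checking that each replacement remains adjacent to the next vertex because its right endpoint is at least as large and its left endpoint lies to the left.
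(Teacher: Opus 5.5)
Your proposal is correct and follows essentially the paper's route: the paper also takes a shortest $\Interval{u}$–$\Interval{w}$ path through $\Interval{v}$ and replaces its prefix before $\Interval{v}$ by the rightmost-neighbour sequence of the same length, and your distance-splitting and concatenation argument is just a cleaner way of saying that the result is still a shortest path. You justify more than the paper does, since it simply asserts that the $k$-th greedy interval meets $\Interval{v}$, and you also rule out $\Interval{v}$ lying to the left of $\Interval{u}$; the one detail to make explicit is that, for $k=d(\Interval{u},\Interval{v})-1$, we have $\max(\Interval{u_{k-1}})<\min(\Interval{v})$ (no earlier overshoot, by your connectivity argument), which is what gives $\min(\Interval{u_k})\le\max(\Interval{v})$ and hence that $\Interval{u_k}$ actually intersects $\Interval{v}$.
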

\begin{proof}
Let $P=\Interval{u_0} \ldots \Interval{u_k} \Interval{v} \Interval{u_{k+1}} \ldots \Interval{w}$ be a shortest path between $\Interval{u_0} = \Interval{u}$ and $\Interval{w}$ that contains $\Interval{v}$.  Now consider the path $Q = \Interval{u'_0} \ldots \Interval{u'_k}$ such that $\Interval{u'_0}=\Interval{u}$ and for $0\leq i \leq k-1$, $\Interval{u'_{i+1}}$ is the rightmost neighbour of $\Interval{u'_i}$. Observe that $\Interval{u'_k}$ intersects $\Interval{v}$ and $\Interval{u'_k}$ does not  intersect $\Interval{u_{k+1}}$ (else, it contradicts the fact that $P$ is a shortest path). Therefore, $\Interval{u'_0} \ldots \Interval{u'_k} \Interval{v} \Interval{u_{k+1}} \ldots \Interval{w}$ is a semi-good shortest path containing $v$.
\end{proof}

In Subsection~\ref{sec:sat-opt}, we shall show that if the 3-SAT instance is satisfiable, then the constructed interval graph has a geodetic set of certain cardinality. In Section~\ref{sec:opt-sat}, we shall show that if the constructed interval graph has a geodetic set of certain cardinality, then the 3-SAT instance is satisfiable.

\subsubsection{Satisfiability implies optimality} \label{sec:sat-opt}

Recall that each track $T$ is associated with a set of intervals called its roots, denoted by $\Root{T}$. Also from our construction, it is clear that an interval is the root of at most one track. Hence we define the following notation: for an interval $\Interval{z}$, let $T(\Interval{z})$ denote the track $T$ such that $\Interval{z} \in \Root{T(\Interval{z})}$.

\begin{lemma}\label{lem:track-covered-rev}
If $T$ is a track in $\mathcal{T}$, then $T \subseteq I(\startInterval, \Tail{T})$.
\end{lemma}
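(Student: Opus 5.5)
We plan to exhibit an explicit shortest path from $\startInterval$ to $\Tail{T}$ that contains every interval of $T$. Write $T=\{\Interval{t_1},\dots,\Interval{t_k}\}$ in left-to-right order, so that $\max(\Interval{t_j})=\min(\Interval{t_{j+1}})$ and $\Interval{t_k}$ is the interval $\Interval{u}_T$ added by the end gadget; hence $\Tail{T}$ is the point $\max(\Interval{t_k})$.

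\textbf{Step 1: reach the root.} We first choose a path $P_0$ from $\startInterval$ to $\Interval{t_1}$ along earlier tracks. By construction, every track other than $T_1$ of the start gadget begins at, or just after, a root created inside some gadget. That root intersects an interval of the preceding tracks at the gadget's position. Take $P_0$ to be the greedy path of Lemma~\ref{lem:shortest} from $\startInterval$ up to $\Interval{t_1}$: repeatedly move to the rightmost neighbour, then step onto $\Interval{t_1}$. Lemma~\ref{lem:shortest} then makes $P_0$ a shortest path.

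\textbf{Step 2: follow the track.} We then continue along $\Interval{t_1}\Interval{t_2}\cdots\Interval{t_k}\Tail{T}$. The intended proof is a counting argument on distances rather than a greedy-path argument. Every interval in $D$ apart from roots and gadget-internal intervals has length essentially $1$, since gadget intervals have lengths below $1+O(\epsilon)$ or close to that. Tracks are shifted by less than $1$ relative to one another, and the construction keeps $\sum$ of shifts $<1$ because $\epsilon=(n+m)^{-4}$. From these facts we aim to show $d(\startInterval,\Tail{T})\ge d(\startInterval,\Interval{t_1})+k$.

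\textbf{Step 3: the lower bound (main obstacle).} The lower bound rests on a lemma saying that, for every interval $\Interval{z}$ at distance $d$ from $\startInterval$, $\max(\Interval{z})\le \max(\text{frontier at step } d)$. The greedy frontier at step $d$ is the farthest right endpoint reachable in $d$ steps. We then compare this frontier with the endpoints of $T$.

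\begin{itemize}
\item If $T$ itself is the greedy path, we must show that no interval lying beyond the tracks reaches farther. Tracks are always extended by unit steps, and the gadget intervals ($\Interval{q}$, $\Interval{r_q}$, $\alphaAND{p}{q}$, \dots) are shorter than the span between consecutive track endpoints. So the frontier after passing $\Interval{t_1}$ advances by exactly $1$ per step, tracking the maximal track.
\item Otherwise, if $T$ is not the maximal track, we instead show that the greedy frontier equals $\max(\Interval{t_j})+\delta$ with $0\le\delta<1$ for all $j$. In that case $\Interval{t_{j+1}}$ still fits in a path of the same length, because its right end lies within the frontier at the next step. Hence $\Interval{t_j}\in I(\startInterval,\Tail{T})$, since $\Tail{T}$ is a point that only $\Interval{t_k}$ touches.
\end{itemize}

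Verifying this invariant that the frontier stays within $[\max(\Interval{t_j}),\max(\Interval{t_j})+1)$ through each gadget type is the main obstacle. We would carry it out by induction over the gadgets in construction order, checking the extension step and the newly created intervals of each gadget separately.
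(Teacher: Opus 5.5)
Your reduction is fine as far as it goes, but there is a genuine gap. With $P_0$ a shortest $\startInterval$--$\Interval{t_1}$ path, the lemma is equivalent to $d(\startInterval,\Tail{T})\ge d(\startInterval,\Interval{t_1})+k$. Your invariant at $j=k-1$ would give exactly that, since $\max(\Interval{t_k})=\max(\Interval{t_{k-1}})+1$. But that invariant is the entire content of the lemma, and you do not prove it; you defer it to an unspecified induction over all gadgets.

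The picture you give in its support is also wrong in two places. First, the greedy (rightmost-neighbour) path from $\startInterval$ does not track the maximal track. It runs along $T(\startInterval)$, the \emph{minimal} track. From an interval of $T(\startInterval)$, the next-column intervals of all other tracks start strictly to its right. So moving onto a right-shifted track costs a step and gains less than one unit of reach. Your case split (``$T$ is the greedy path'' versus ``$T$ is not the maximal track'') therefore uses the wrong dichotomy, and in fact no split is needed. Second, $\Tail{T}$ is not touched only by $\Interval{t_k}$: every end-gadget interval $\Interval{u}_{T'}$ with $T'>T$ also contains that point. The conclusion does not need this claim, so drop it.

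The missing step is short, and it is exactly the paper's argument. Write $T(\startInterval)=\{\Interval{b_1},\dots,\Interval{b_{k'}}\}$. Apply Lemma~\ref{lem:shortest} directly to the greedy path from $\startInterval$ to $\Tail{T}$, namely $\startInterval,\Interval{b_1},\dots,\Interval{b_{k'}},\Interval{z},\Tail{T}$, where $\Interval{z}$ is the rightmost neighbour of $\Interval{b_{k'}}$. This pins down $d(\startInterval,\Tail{T})$ exactly, with no frontier induction at all.

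Since all tracks are extended in lockstep, the $i$-th interval from the end of $T$ meets the $i$-th interval from the end of $T(\startInterval)$; this is the backward induction in the paper's proof. Hence $\startInterval,\Interval{b_1},\dots,\Interval{b_{k'-k+1}},\Interval{t_1},\dots,\Interval{t_k},\Tail{T}$ is a path with the same number of vertices, $k'+3$, as the greedy one. So it is a shortest path containing $T$.

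This is in fact your own path. Because the greedy path from $\startInterval$ runs along $T(\startInterval)$, your $P_0$ is an initial segment of $T(\startInterval)$ followed by $\Interval{t_1}$. The only difference is how the matching lower bound is obtained, and Lemma~\ref{lem:shortest}, applied to the path ending at $\Tail{T}$, already supplies it.
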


\begin{proof} 
Let $T = \set{\Interval{a_1},\dots,\Interval{a_j}}$ be the set of intervals of $T$ such that $max(\Interval{a_i})=min(\Interval{a_{i+1}})$ for $i\in [j-1]$. Let $T(\startInterval) = \set{\Interval{b_1},\dots,\Interval{b_k}}$ such that $max(\Interval{b_i})=min(\Interval{b_{i+1}})$ for $i\in [k-1]$. By Lemma~\ref{lem:shortest}, the path induced by the set $\{\startInterval, \Interval{b_1},\dots,\Interval{b_k},\Interval{z},\Tail{T}\}$ is a good shortest path from $\startInterval$ to $\Tail{T}$ where $z$ is the rightmost neighbour of $\Interval{b_k}$. Hence the distance between $\startInterval$ and $\Tail{T}$ is $k+3$. 

Observe that both $\Interval{b_k}$ and $\Interval{a_j}$ belong to the end gadget and therefore they are adjacent with $\max(\Interval{b_k}) < \max(\Interval{a_j})$. From the construction it implies that $\min(\Interval{b_{k}}) < \min(\Interval{a_{j}})$. Since $\max(\Interval{b_{k-1}}) = \min(\Interval{b_{k}})$, it follows that $\Interval{b_{k-1}}$ is not adjacent with $\Interval{a_j}$. Hence the path induced by the set $\{\startInterval, \Interval{b_1},\dots,\Interval{b_k},\Interval{a_j},\Tail{T}\}$ has length $k+3$ and therefore is a shortest path. 

Now, using an inductive argument on $i$ (from $i=j$ to $i=1$), it follows that $\Interval{a_i}$ and $\Interval{b_{k-j+i}}$ are neighbours. Hence, the path induced by the set $\{\startInterval, \Interval{b_1},\dots,\Interval{b_{k-j+1}},\Interval{a_1},\dots,\Interval{a_j},\Tail{T}\}$ induces a shortest path $P'$ from $\startInterval$ to $\Tail{T}$ such that $V(T)\subseteq V(P')$. Hence, $T \subseteq I(\startInterval, \Tail{T})$.  
\end{proof}

\begin{lemma}\label{lem:imply-cover-rev}
Consider an implication gadget $\Implication{p}{q}$. We have that $\Interval{q} \in I(\Interval{p}, \Interval{s_q})$ and $\Interval{r_q} \in I(\startInterval, \Interval{s_q})$.
\end{lemma}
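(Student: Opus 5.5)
We prove both memberships by exhibiting explicit shortest paths. We locate the relevant intervals through the construction of $\Implication{p}{q}$, and certify that our paths are shortest via Lemma~\ref{lem:shortest} and Lemma~\ref{lem:semi-good}. The key fact about positions is the following. With $X<T_{\Interval{p}}<X'$ the neighbouring tracks, $\Interval{q}$ starts strictly between $max(\Interval{u_X})$ and $max(\Interval{u_{T_{\Interval{p}}}})$, and ends strictly between $max(\Interval{u_{T_{\Interval{p}}}})$ and $\theta$. Also, $\Interval{r_q}$ ends strictly between $max(\Interval{v_{T_{\Interval{p}}}})$ and $\theta'$, and $\Interval{s_q}$ is the point at $max(\Interval{r_q})$.

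\textbf{First claim, $\Interval{q}\in I(\Interval{p},\Interval{s_q})$.} Let $T_{\Interval{p}}$ consist, before the extension step, of $\Interval{a_1},\dots,\Interval{a_k}$, rooted at $\Interval{p}$, with $\Interval{p}$ adjacent to $\Interval{a_1}$ (as is the case for every root created so far). We consider the path
\[ \Interval{p}\,\Interval{a_1}\cdots\Interval{a_k}\,\Interval{u_{T_{\Interval{p}}}}\,\Interval{q}\,\Interval{r_q}\,\Interval{s_q}. \]
We compare it with the good shortest path from $\Interval{p}$ to $\Interval{s_q}$ obtained by repeatedly taking rightmost neighbours. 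Because the tracks are shifted and each track interval has length~$1$, the rightmost neighbour of each $\Interval{a_i}$ (and of $\Interval{p}$) is the next interval of the track just following $T_{\Interval{p}}$, or of $T_{\Interval{p}}$ itself. So the good path advances by exactly one unit per step, and after the same number of steps it reaches an interval ending near $max(\Interval{u_{T_{\Interval{p}}}})$ or $\theta$. That interval meets $\Interval{r_q}$ but not $\Interval{s_q}$, since $\Interval{s_q}$ lies beyond $max(\Interval{v_{T_{\Interval{p}}}})$. Hence $d(\Interval{p},\Interval{s_q})=k+4$, which matches the length of our path, so $\Interval{q}$ lies on a shortest path. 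The essential inputs are that $\Interval{u_{T_{\Interval{p}}}}$ meets $\Interval{q}$ and that $\Interval{q}$ meets $\Interval{r_q}$, both immediate from the coordinates. We must also rule out a shortcut: no interval reachable in $k+1$ steps from $\Interval{p}$ reaches past $\theta'$ well enough to touch $\Interval{s_q}$ in two steps.

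\textbf{Second claim, $\Interval{r_q}\in I(\startInterval,\Interval{s_q})$.} Here we use the track $T(\startInterval)=T_1$, the minimal track. By Lemma~\ref{lem:track-covered-rev}-type reasoning, the good path along rightmost neighbours from $\startInterval$ follows the leftmost tracks one unit per step. Alternatively, by Lemma~\ref{lem:semi-good}, the distance from $\startInterval$ to an interval ending at coordinate $x$ is governed by $\lceil x-1\rceil$ up to the shifts. We show that the rightmost-neighbour path from $\startInterval$ reaches an interval of $\Interval{v}$-type in the gadget that meets $\Interval{r_q}$ but not $\Interval{s_q}$. Every interval at that distance from $\startInterval$ ends before $max(\Interval{r_q})$, because all track shifts are much smaller than~$1$ (multiples of $\epsilon$ or halves of gaps between consecutive tracks) and $\Interval{s_q}$ sits strictly after $max(\Interval{v_{T_{\Interval{p}}}})$. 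Then, by Lemma~\ref{lem:shortest}, the resulting path, ending with $\Interval{r_q}\,\Interval{s_q}$, is a good shortest path, so $\Interval{r_q}\in I(\startInterval,\Interval{s_q})$.

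\textbf{Main obstacle.} The hard part is the distance bookkeeping. We must verify that no interval from the earlier gadgets, such as long intervals like $\Interval{r}$, $\Interval{\gamma}$, $\Interval{cov}$, or $\Interval{q}$ of earlier implication gadgets, provides a shortcut that makes $\Interval{s_q}$ closer to $\Interval{p}$ or to $\startInterval$ than the track-based paths. Such a shortcut would make our paths non-shortest. I would handle this by an invariant proved along the construction: every non-track interval spans less than two units and sits between consecutive tracks, so any path advances at most about one unit of coordinate per step, exactly as the tracks do. This makes the track paths optimal, and the offsets between $\theta$, $\theta'$ and the tracks then decide which endpoints are reached.
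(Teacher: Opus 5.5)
You follow the paper's route. Walk along $T(\Interval{p})$ to $\Interval{u_{T_{\Interval{p}}}}$ and finish with $\Interval{q}\,\Interval{r_q}\,\Interval{s_q}$. Walk along $T(\startInterval)$ and finish with $\Interval{r_q}\,\Interval{s_q}$. Certify both paths by comparing with the rightmost-neighbour walk of Lemma~\ref{lem:shortest}.

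For the second claim this works: a walk from $\startInterval$ cannot outrun the integer-aligned track $T(\startInterval)$, because every interval of $D$ has length at most $1$.

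The gap is in the first claim. You say that after $k+1$ steps the rightmost-neighbour walk from $\Interval{p}$ ``reaches an interval ending near $max(\Interval{u_{T_{\Interval{p}}}})$ or $\theta$'', that this interval ``meets $\Interval{r_q}$ but not $\Interval{s_q}$'', and hence $d(\Interval{p},\Interval{s_q})=k+4$. The two alternatives behave oppositely:
\begin{itemize}
\item $\Interval{u_{T_{\Interval{p}}}}$ ends before $min(\Interval{r_q})=max(\Interval{q})$, so it does not meet $\Interval{r_q}$.
\item $\Interval{u_{X'}}$ ends at $\theta>max(\Interval{q})$, so it does meet $\Interval{r_q}$. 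Then $d(\Interval{p},\Interval{s_q})\le k+3$, and your path of length $k+4$ through $\Interval{q}$ is not a shortest path.
\end{itemize}
So the argument needs the rightmost-neighbour walk from $\Interval{p}$ to stay on $T(\Interval{p})$, and you never establish this. Your ``about one unit per step'' invariant cannot decide it, since the two cases differ only by $O(\epsilon)$. The paper's proof has the same hole: it takes $\Interval{v_1}$ merely adjacent to $\Interval{p}$, whereas Lemma~\ref{lem:shortest} needs $\Interval{v_1}$ to be the rightmost neighbour of $\Interval{p}$.

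Moreover, the bad case really occurs, so the first claim is false as stated. Suppose $\Interval{p}$ is itself the $\Interval{q}$-interval of an earlier implication gadget. Its rightmost neighbour is then $\Interval{r_p}$, which ends at $max(\Interval{s_p})>max(\Interval{t_1})$. The rightmost neighbour of $\Interval{r_p}$ is the first interval of $T(\Interval{r_p})$, the track just after $T(\Interval{p})$.

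Concretely, $\Implication{\trueVertex}{x_1}$ produces:
\begin{itemize}
\item $\Interval{x_1}=[3+\epsilon/2,\,3+3\epsilon/2]$ and $\Interval{r_{x_1}}=[3+3\epsilon/2,\,4+3\epsilon/2]$;
\item $T(\Interval{x_1})=\{[3+3\epsilon/2,\,4+5\epsilon/4],[4+5\epsilon/4,\,5+5\epsilon/4]\}$;
\item $T(\Interval{r_{x_1}})=\{[4+3\epsilon/2,\,5+3\epsilon/2]\}$, before the next extension.
\end{itemize}
In $\Implication{x_1}{\overline{x_1}}$ we then have $X'=T(\Interval{r_{x_1}})$, $\theta=6+3\epsilon/2$ and $\theta'=7+3\epsilon/2$. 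Hence $\Interval{\overline{x_1}}=[6+9\epsilon/8,\,6+11\epsilon/8]$, and $\Interval{s_{\overline{x_1}}}$ is the point $7+11\epsilon/8$.

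The path $\Interval{x_1},\Interval{r_{x_1}},[4+3\epsilon/2,\,5+3\epsilon/2],[5+3\epsilon/2,\,6+3\epsilon/2],[6+3\epsilon/2,\,7+3\epsilon/2],\Interval{s_{\overline{x_1}}}$ gives $d(\Interval{x_1},\Interval{s_{\overline{x_1}}})\le 5$. On the other hand, all intervals have length at most $1$, so every interval within distance $2$ of $\Interval{x_1}$ ends by $5+3\epsilon/2<min(\Interval{\overline{x_1}})$. Therefore $d(\Interval{x_1},\Interval{\overline{x_1}})+d(\Interval{\overline{x_1}},\Interval{s_{\overline{x_1}}})\ge 4+2>5$, i.e.\ $\Interval{\overline{x_1}}\notin I(\Interval{x_1},\Interval{s_{\overline{x_1}}})$. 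The same happens in every $\Implication{x_i}{\overline{x_i}}$.

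Hence no completion of your argument can give the first claim in general. It holds only under the extra hypothesis that the rightmost-neighbour walk from $\Interval{p}$ follows $T(\Interval{p})$, as for $\Interval{p}=\trueInterval$. Otherwise the construction itself has to be changed.
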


\begin{proof}
Consider the track $T(\Interval{p})$. Observe from the construction of $\Implication{\Interval{p}}{\Interval{q}}$ that there exists a subset  $ \set{\Interval{v_1}, \Interval{v_2}, \ldots, \Interval{v_{k+2}}} \subseteq T(\Interval{p})$ such that $\Interval{p}$ and $\Interval{v_1}$ are adjacent, $\Interval{v_{i+1}}$ is the rightmost neighbour of $\Interval{v_{i}}$ for $1 \leq i \leq k+1$ and $\Interval{v_k}$ is the leftmost neighbour of $\Interval{q}$. By Lemma~\ref{lem:shortest}, $\set{\Interval{p},\Interval{v_1},\Interval{v_2}, \ldots, \Interval{v_k}, \Interval{v_{k+1}}, \Interval{v_{k+2}}, \Interval{s_q}}$ induces a good shortest path between $\Interval{p}$ and $\Interval{s_q}$. Hence, the path induced by $\set{\Interval{p},\Interval{v_1},\Interval{v_2}, \ldots, \Interval{v_k}, \Interval{q}, \Interval{r_q}, \Interval{s_q}}$ is also a shortest path between $\Interval{p}$ and $\Interval{s_q}$.

Now consider the track $T(\startInterval)$. Observe from the construction of 
$\Implication{\Interval{p}}{\Interval{q}}$ that there exists a subset  $ \set{\Interval{w_1}, \Interval{w_2}, \ldots, \Interval{w_{t}}} \subseteq T(\startInterval)$ such that $\Interval{w_1}$ is the rightmost neighbour of $\startInterval$, $\Interval{w_{i+1}}$ is the leftmost neighbour of $\Interval{v_{i}}$ for $1 \leq i \leq t-1$, $\Interval{s_q}\cap \Interval{w}_{t-1}=\emptyset$, and $\Interval{s_q}\subset \Interval{w}_t$. By Lemma~\ref{lem:shortest}, the set $\set{\startInterval,\Interval{w_1},\Interval{w_2}, \ldots, \Interval{w_{t-1}}, \Interval{w_t}, \Interval{s_q}}$ is a shortest path between $\startInterval$ and $\Interval{s_q}$ of length $t+1$. Also observe from the construction that $\Interval{w}_{t-1}$ is adjacent to $\Interval{r}_q$. Hence, the path induced by the set $\set{\startInterval,\Interval{w_1},\Interval{w_2}, \ldots, \Interval{w_{t-1}}, \Interval{r_q}, \Interval{s_q}}$ is a shortest path between $\startInterval$ and $\Interval{s_q}$.
\end{proof}

\begin{lemma}\label{lem:cover-gadget-1-rev}
Consider the cover gadget $\CovGadget{i}$ and let $\Interval{z} \in \set{\CGa{i},\CGb{i},\CGc{i}}$. Then, $\Interval{z}\in I(\CGd{i}, \Tail{T(\Interval{z})})$ and $\CGCov{i}\in I(\Interval{z},\CGf{i})$.
\end{lemma}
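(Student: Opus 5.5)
We follow the pattern of Lemmas~\ref{lem:track-covered-rev} and~\ref{lem:imply-cover-rev}: first compute a distance with a good shortest path (Lemma~\ref{lem:shortest}), then exhibit a path of the same length through the target interval.

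\textbf{First claim, $\Interval{z}\in I(\CGd{i},\Tail{T(\Interval{z})})$.} The point interval $\CGd{i}=[\theta,\theta]$ is contained in each of $\CGa{i},\CGb{i},\CGc{i}$ and in the first interval of $T_{\CGd{i}}$. So $\CGd{i}$ is adjacent to $\Interval{z}$, and $\Interval{z}$ is adjacent to the first interval of $T(\Interval{z})$, since they meet at $max(\Interval{z})$. The tracks $T(\Interval{z})$ and $T_{\CGd{i}}$ are shifted copies of each other from this point on: every later gadget extends all tracks by the same unit intervals, and the shift is at most $3\epsilon$. The rightmost-neighbour path from $\CGd{i}$ runs along $T_{\CGd{i}}$, since its intervals start at $\theta$ and have maximal right endpoint among the intervals around them. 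This path stays one step behind $T(\Interval{z})$ only after a certain index.

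The plan is to argue, as in Lemma~\ref{lem:track-covered-rev}, by downward induction from the tail. The last interval of $T_{\CGd{i}}$ is adjacent to the last interval of $T(\Interval{z})$, which contains $\Tail{T(\Interval{z})}$. This gives $d(\CGd{i},\Tail{T(\Interval{z})})=k+2$, where $k=|T_{\CGd{i}}|$. Since the two tracks have the same number of intervals after the roots, the path $\CGd{i},\Interval{z},T(\Interval{z}),\Tail{T(\Interval{z})}$ has length $1+k+1=k+2$. So it is a shortest path, and it contains $\Interval{z}$.

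The point to check is that the rightmost-neighbour path really uses the track intervals and not gadget intervals that later overlap. Gadget intervals such as $\Interval{q}$, $\Interval{r_q}$ and the $\alpha,\gamma,\delta$ intervals extend at most to midpoints between consecutive tracks. Whenever they exceed a track interval, that excess is compensated by a later track interval. We expect to need a short argument that good shortest paths along any track have the same length as along the track itself, i.e.\ that no shortcut exists. This is the main obstacle, and we would handle it by the distance formula of Lemma~\ref{lem:track-covered-rev}.

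\textbf{Second claim, $\CGCov{i}\in I(\Interval{z},\CGf{i})$.} Take the path from $\Interval{z}$ along $T(\Interval{z})$. $T(\Interval{z})$ starts at $max(\Interval{z})\le\theta+3\epsilon$ and has intervals $[\cdot,\cdot+1]$. Its second interval ends at roughly $\theta+2+j\epsilon$, and the preceding maximal-track intervals $\Interval{u^i_T},\Interval{v^i_T}$ give, at the same step count, right endpoints $max(\Interval{v^i_T})=\theta-\epsilon+2$.

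$\CGf{i}$ is the point $max(\Interval{v^i_T})+7\epsilon$, which lies in the second interval of $T(\Interval{z})$ when $j\ge1$ (roughly, since $\theta+2+j\epsilon$ versus $\theta+2+6\epsilon$ needs care), or in the third interval. So the plan is to compute explicitly: the rightmost-neighbour path from $\Interval{z}$ has intervals $T(\Interval{z})_1,T(\Interval{z})_2,\dots$, and $\CGf{i}$ is reached at step $t$. Then we check that the step-$(t-1)$ interval intersects $\CGCov{i}=[max(\Interval{v^i_T})+4\epsilon,\,max(\Interval{v^i_T})+7\epsilon]$, and that the step-$(t-2)$ interval does not reach $\CGCov{i}$. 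Replacing the final interval by $\CGCov{i}$, which contains $\CGf{i}$, gives a path of equal length, so $\CGCov{i}\in I(\Interval{z},\CGf{i})$ by Lemma~\ref{lem:shortest}.

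The delicate step is the $\epsilon$-bookkeeping: it must ensure that $\CGCov{i}$ meets the right track interval for all three choices of $\Interval{z}$ simultaneously. We would carry it out by writing every endpoint as $\theta+\text{integer}+c\epsilon$.
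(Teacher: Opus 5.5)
\textbf{The gap is in the second claim, and it cannot be closed.} Your treatment of the first claim is the paper's argument: distance $k+2$ from $\CGd{i}$ along $T_{\CGd{i}}$, then the equal-length path through $\Interval{z}$ and $T(\Interval{z})$. That part is fine. The trouble is the step you postpone in the second claim: the $\epsilon$-bookkeeping does not work out, and with the constants as written it cannot.

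In your notation, $\max(\Interval{z})=\theta+j\epsilon$ with $j=1,2,3$ for $\CGa{i},\CGb{i},\CGc{i}$. The $s$-th interval of $T(\Interval{z})$ is $[\theta+j\epsilon+s-1,\theta+j\epsilon+s]$, $\CGCov{i}=[\theta+2+3\epsilon,\theta+2+6\epsilon]$, and $\CGf{i}$ is the point $\theta+2+6\epsilon$. Two consequences follow.
\begin{itemize}
\item $\CGf{i}$ always lies in the \emph{third} interval of $T(\Interval{z})$, never the second as you suggest.
\item The second interval of $T(\Interval{z})$ meets $\CGCov{i}$ only when $j=3$.
\end{itemize}

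Now take $\Interval{z}=\CGa{i}$. A unit interval starting at or before $R$ ends at or before $R+1$, and the only non-unit intervals in this region have length at most $3\epsilon$. So every interval within distance $2$ of $\CGa{i}$ ends by $\theta+2+\epsilon<\min(\CGCov{i})$, which gives $d(\CGa{i},\CGCov{i})\geq 4$. On the other hand, $d(\CGa{i},\CGf{i})\leq 4$ via the first three intervals of $T_{\CGa{i}}$. Hence $d(\CGa{i},\CGCov{i})+d(\CGCov{i},\CGf{i})>d(\CGa{i},\CGf{i})$, so $\CGCov{i}\notin I(\CGa{i},\CGf{i})$. The same computation, with reach $\theta+2+2\epsilon$, excludes $\CGb{i}$. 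Your check that ``the step-$(t-1)$ interval intersects $\CGCov{i}$'' therefore succeeds only for $\CGc{i}$.

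The paper's proof has exactly the same defect. It asserts that $\Interval{z},\Interval{u_1},\Interval{u_2},\CGCov{i},\CGf{i}$ is a shortest path, but for $\Interval{z}\in\{\CGa{i},\CGb{i}\}$ this is not even a path, because $\Interval{u_2}$ ends before $\CGCov{i}$ begins.

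The intended configuration, shown in Figure~\ref{fig:covering}, has $\CGCov{i}$ meeting the second interval of all three tracks. This holds if one takes $\min(\CGCov{i})\in(\theta+2,\theta+2+\epsilon]$, that is, in $(\max(\Interval{v^i_T})+\epsilon,\max(\Interval{v^i_T})+2\epsilon]$. With that choice:
\begin{itemize}
\item the second interval of $T_{\CGd{i}}$ (ending at $\theta+2$) and all $\Interval{v^i_{T'}}$ still miss $\CGCov{i}$;
\item the second interval of each $T(\Interval{z})$ reaches $\CGCov{i}$;
\item $\CGf{i}$ remains at distance exactly $4$ from $\Interval{z}$.
\end{itemize}
Under this correction your plan goes through verbatim with $t=3$. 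As the construction is literally stated, however, the second claim is false for two of the three choices of $\Interval{z}$, and no amount of bookkeeping will close it.
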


\begin{proof}
Let $T$ denote the track $T(\Interval{z})$ and let $T=\set{\Interval{u_1},\Interval{u_2},\ldots,\Interval{u_{k}}}$ such that $\Interval{u_{i+1}}$ is the rightmost neighbour of $\Interval{u_{i}}$ for all $1\leq j\leq k-1$. Observe from the construction that the path induced by the set $\set{\CGd{i},\Interval{z},\Interval{u_1},\Interval{u_2},\ldots,\Interval{u_{k}},\Tail{T}}$ is a shortest path between $\CGd{i}$ and $\Tail{T}$. This proves the first part of the proposition. For the second part, consider the path $P$ induced by the set $\set{\Interval{z},\Interval{u_1},\Interval{u_2},\CGCov{i},\CGf{i}}$. Clearly, $P$ is a shortest path between $\Interval{z}$ and $\CGf{i}$.
\end{proof}

\begin{lemma}\label{lem:AND-cover-1-rev}
Consider an AND gadget $\AND{p}{q}$ and the insert gadget $\INS{p}{q}$. Let $T_1 = T(\alphaAND{p}{q})$, $T_2 = T(\gammaAND{p}{q})$ and $T_3=T(\deltaAND{p}{q})$. Then

\begin{enumerate}
\renewcommand{\theenumi}{(\alph{enumi})}
\renewcommand{\labelenumi}{\theenumi}

\item\label{it:and-cover-1} $\gammaAND{p}{q} \in I(\InsertSigma{p}{q}, \Tail{T_2})$,

\item\label{it:and-cover-2} $\alphaAND{p}{q} \in I(\Interval{p}, \Tail{T_1})$, $\deltaAND{p}{q} \in I(\Interval{q}, \Tail{T_3})$, and

\item\label{it:and-cover-3} $\alphaAND{p}{q} \in I(\betaAND{p}{q} , \gammaAND{p}{q})$ and $\deltaAND{p}{q} \in I(\gammaAND{p}{q},\Tail{T_3})$.
\end{enumerate}
\end{lemma}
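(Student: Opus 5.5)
We follow the pattern of Lemmas~\ref{lem:imply-cover-rev} and~\ref{lem:cover-gadget-1-rev}. For each pair $(\Interval{x},\Interval{y})$ in the statement, we first exhibit a good shortest path between $\Interval{x}$ and $\Interval{y}$ via Lemma~\ref{lem:shortest}, running along the relevant track by rightmost neighbours. This fixes $d(\Interval{x},\Interval{y})$. We then exhibit a second path of the same length that passes through the target interval. The second path is obtained from the first by a local detour: at the point where the good path is about to enter the AND gadget, we replace one (or two) consecutive track intervals by the gadget interval(s) $\alphaAND{p}{q}$, $\gammaAND{p}{q}$, or $\deltaAND{p}{q}$. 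Equal length then gives membership in $I(\cdot,\cdot)$.

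\textbf{Parts (a) and (b).} For (a), $\InsertSigma{p}{q}$ is the root of $T_m$. Its good path runs along $T_m$ into the AND gadget, where $T_m$ contributes $\Interval{u_{T_m}},\Interval{v_{T_m}}$. By construction, $max(\gammaAND{p}{q})$ lies strictly between the right ends of $\Interval{u_{Y_2}}$ and $\Interval{u_{T_{\Interval{q}}}}$, and $T_2$ starts at $max(\gammaAND{p}{q})$. We check that $\gammaAND{p}{q}$ meets the $T_m$-interval preceding $\Interval{u_{T_m}}$; its left end is $max(\alphaAND{p}{q})$, which is at most $max(\Interval{u_{T_m}})$. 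We also check that it meets the first interval of $T_2$. Then we compare lengths: the path from $\InsertSigma{p}{q}$ along $T_m$, then $\gammaAND{p}{q}$, then along $T_2$ to $\Tail{T_2}$, against the good path to $\Tail{T_2}$. The shifted-track structure should make the counts equal, because every track interval has length $1$, $\gammaAND{p}{q}$ spans roughly one unit, and the shifts are $O(\epsilon)$. For (b), the same argument applies to $\Interval{p}$ with $T(\Interval{p})$, $\alphaAND{p}{q}$, and $T_1$. It applies identically to $\Interval{q}$ with $T(\Interval{q})$, $\deltaAND{p}{q}$, and $T_3$. In each case the gadget interval's left end lies at a midpoint adjacent to the root's track, and its right end is the start of the new track.

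\textbf{Part (c).} The point interval $\betaAND{p}{q}$ lies strictly inside $\alphaAND{p}{q}$. Its neighbours are exactly $\alphaAND{p}{q}$, the first interval of $T_1$ (which starts at $max(\betaAND{p}{q})$), and the track intervals of $T_{\Interval{p}}$ containing that point. Since $\gammaAND{p}{q}$ starts at $max(\alphaAND{p}{q})$, the path $\betaAND{p}{q}\,\alphaAND{p}{q}\,\gammaAND{p}{q}$ has length $2$. We argue that the distance is exactly $2$, because $\betaAND{p}{q}$ does not intersect $\gammaAND{p}{q}$ (as $max(\betaAND{p}{q}) < max(\alphaAND{p}{q})$). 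Hence $\alphaAND{p}{q}\in I(\betaAND{p}{q},\gammaAND{p}{q})$. For $\deltaAND{p}{q}\in I(\gammaAND{p}{q},\Tail{T_3})$, the path $\gammaAND{p}{q}\,\deltaAND{p}{q}$ followed by $T_3$ and then $\Tail{T_3}$ is a good path. Indeed, $\deltaAND{p}{q}$ is the rightmost neighbour of $\gammaAND{p}{q}$: its right end lies beyond that of every other neighbour, namely $\Interval{u}$ of tracks up to $T_{\Interval{q}}$ and the start of $T_2$. So Lemma~\ref{lem:shortest} applies directly.

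\textbf{Main obstacle.} The hard part is the length comparison in (a) and (b). We must rule out a shortcut through a neighbouring track that would make the detour strictly longer. We would handle it by the argument of Lemma~\ref{lem:track-covered-rev}: compare with the good path from the root by rightmost neighbours, and use that consecutive tracks are shifted by amounts less than $1$, so that the $i$-th intervals of adjacent tracks are adjacent. With this, an inductive matching of indices shows that the gadget detour costs exactly the same number of steps as staying on the old track.
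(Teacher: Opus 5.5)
Your skeleton is the paper's: a good shortest path along the root's track from Lemma~\ref{lem:shortest} fixes the distance, and an equally long path through the gadget interval switches to the new track. However, the length comparison you flag as the main obstacle rests on a misreading of the construction.

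The gadget intervals are not of unit length, contrary to your ``$\gammaAND{p}{q}$ spans roughly one unit''. The interval $\gammaAND{p}{q}=\bigl[\max(\alphaAND{p}{q}),\tfrac12(\max(\Interval{u_{Y_2}})+\max(\Interval{u_{T_{\Interval{q}}}}))\bigr]$ has both endpoints in the narrow band of right endpoints of the intervals $\Interval{u_T}$. So its length is much smaller than~$1$, and likewise for $\alphaAND{p}{q}$ and $\deltaAND{p}{q}$. Hence $\gammaAND{p}{q}$ meets $\Interval{u_{T_m}}$ and $\Interval{v_{T_m}}$, but not the $T_m$-interval preceding $\Interval{u_{T_m}}$, which ends at $\max(\Interval{u_{T_m}})-1<\max(\alphaAND{p}{q})$. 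On the path $\InsertSigma{p}{q},\dots,\Interval{u_{T_m}},\gammaAND{p}{q},T_2,\Tail{T_2}$ the gadget interval is therefore an \emph{extra} vertex, not a substitute for a track interval. $T_2$ runs parallel to the part $P'_a$ of $T_m$ after $\Interval{u_{T_m}}$, with $|T_2|=|P'_a|$, so this path is one vertex behind the good path at every aligned position.

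The two lengths agree only because of a compensation at the far end. This is the key observation in the paper's proof, and it is absent from yours. Since $\max(\gammaAND{p}{q})>\max(\Interval{u_{T_m}})$, the track $T_2$ ends strictly to the right of $T_m$. So $\Tail{T_2}$ is not adjacent to the last interval of $T_m$, and the good path must pass through one more vertex $\Interval{z_a}$ (the rightmost neighbour of that last interval) before reaching $\Tail{T_2}$. Neither a local-replacement count nor an index matching asserting that the detour costs ``the same number of steps as staying on the old track'' produces this. The same correction is needed in (b), for $\alphaAND{p}{q}$ with $T_1$ and for $\deltaAND{p}{q}$ with $T_3$. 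Note also that $T_1$ starts at $\betaAND{p}{q}$, inside $\alphaAND{p}{q}$, not at its right end.

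The second half of (c) contains a false step: $\deltaAND{p}{q}$ is not the rightmost neighbour of $\gammaAND{p}{q}$. The first interval of $T_2$, namely $[\max(\gammaAND{p}{q}),\max(\gammaAND{p}{q})+1]$, which you list among the neighbours, ends about one unit beyond $\max(\deltaAND{p}{q})$. So do the intervals $\Interval{v_T}$ of all tracks up to $Y_2$. Hence $\gammaAND{p}{q}\,\deltaAND{p}{q}\,T_3\,\Tail{T_3}$ is not a good path, and Lemma~\ref{lem:shortest} does not apply to it.

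The claim still holds by the same end compensation. The good path from $\gammaAND{p}{q}$ follows $T_2$ and needs one extra vertex to reach $\Tail{T_3}$, because $T_3$ ends to the right of $T_2$, while $|T_2|=|T_3|$. Alternatively, argue as the paper does from the good shortest path starting at $\deltaAND{p}{q}$, using that no neighbour of $\gammaAND{p}{q}$ meets the second interval of $T_3$. Your argument for $\alphaAND{p}{q}\in I(\betaAND{p}{q},\gammaAND{p}{q})$ is fine.
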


\begin{proof}
First, we prove (a). Consider the track $T(\InsertSigma{p}{q})$. Observe from the construction that $T(\InsertSigma{p}{q})$ can be partitioned into two sets  $ P_a=\set{\Interval{u_1},\Interval{u_2},\ldots,\Interval{u_j}}$ and $P'_a = \set{\Interval{u_{j+1}},\Interval{u_{j+2}},\ldots,\Interval{u_k}}$ where $\Interval{u_1}$ is the rightmost neighbour of $\InsertSigma{p}{q}$, $\Interval{u_{i+1}}$ is the rightmost neighbour of $\Interval{u_i}$ for all $1\leq i\leq k-1$, and $\Interval{u_j}$ is the interval of $T(\InsertSigma{p}{q})$ with minimal index which intersects $\Interval{\gammaAND{p}{q}}$. Let $\Interval{z_a}$ be the rightmost neighbour of $\Interval{u_k}$. 
Since $\Interval{u_k}\in \EndGadget$, either $\Interval{z_a} = \Interval{u_{T_2}}$ or $ \max(\Interval{z_a})\geq \max(\Tail{T_2})$. Therefore $\Tail{T_2}$ is adjacent to $\Interval{z_a}$. Hence, by Lemma~\ref{lem:shortest}, the path induced by $Q_a=\set{\InsertSigma{p}{q}} \cup P_a \cup P'_a \cup \set{\Interval{z_a},\Tail{T_2}}$ is a good shortest path between $\InsertSigma{p}{q}$ and $\Tail{T_2}$. Observe from the construction that $|T_2| = |P'|$. Therefore, the set $Q'_a=\set{\InsertSigma{p}{q}} \cup P_a \cup \set{\gammaAND{p}{q}} \cup T_2 \cup \{ \Tail{T_2}\}$ has the same cardinality as $Q_a$. Moreover,  $Q'_a$ induces a path. This implies $Q'_a$ induces a shortest path between $\InsertSigma{p}{q}$ and $\Tail{T_2}$.

Second, we prove (b). Consider the track $T(\Interval{p})$. Observe from the construction that $T(\Interval{p})$ can be partitioned into two sets $ P_b=\set{\Interval{v_1},\Interval{v_2},\ldots,\Interval{v_{j'}}}$ and $P'_b = \set{\Interval{v_{j'+1}},\Interval{v_{j'+2}},\ldots,\Interval{v_{k'}}}$ where $\Interval{v_1}$ is the rightmost neighbour of $\Interval{p}$, $\Interval{v_{i+1}}$ is the rightmost neighbour of $\Interval{v_i}$ for all $1\leq i\leq k'-1$, and $\Interval{v_{j'}}$ is the interval of $T(\Interval{p})$ with minimal index which intersects $\Interval{\alphaAND{p}{q}}$. Let $\Interval{z_b}$ be the rightmost neighbour of $\Interval{v_{k'}}$. 
Since $\Interval{v_{}k'}\in \EndGadget$, either $\Interval{z_b} = \Interval{u_{T_1}}$ or $ \max(\Interval{z_b})\geq \max(\Tail{T_1})$. Therefore $\Tail{T_1}$ is adjacent to $\Interval{z_b}$. 
Hence, by Lemma~\ref{lem:shortest}, the path induced by $Q_b=\set{\Interval{p}} \cup P_b \cup P'_b \cup \set{\Interval{z_b},\Tail{T_1}}$ is a good shortest path between $\Interval{p}$ and $\Tail{T_1}$. Observe from the construction that $|T_1| = |P'_b|$. Therefore, the set $Q'_b=\set{\Interval{p}} \cup P_b \cup \set{\alphaAND{p}{q}} \cup T_1 \cup \{ \Tail{T_1}\}$ has the same cardinality as $Q_b$. Moreover,  $Q'_b$ induces a path. This implies $Q'_b$ induces a shortest path between $\Interval{p}{q}$ and $\Tail{T_1}$. Using similar arguments, we can show that $\deltaAND{p}{q} \in I(\Interval{q}, \Tail{T_3})$.

Finally, we prove $(c)$. Observe from the construction that the distance between $\betaAND{p}{q}$ and $\deltaAND{p}{q}$ is two and therefore $\alphaAND{p}{q} \in I(\betaAND{p}{q} , \gammaAND{p}{q})$. Let $T_3 = \set{\Interval{w_1},\Interval{w_2},\ldots,\Interval{w_{k''}}}$ such that $\Interval{w_1}$ is the rightmost neighbour of $\deltaAND{p}{q}$, $\Interval{w_{i+1}}$ is the rightmost neighbour of $\Interval{w_i}$ for all $1\leq i\leq k''-1$. Hence, by Lemma~\ref{lem:shortest}, the path induced by $\set{\deltaAND{p}{q}, \Interval{w_1},\Interval{w_2},\ldots,\Interval{w_{k''}}, \Tail{T_3}}$ is a good shortest path. Now observe from the construction that the distance between $\Interval{w_1}$ and $\gammaAND{p}{q}$ is two and no neighbour of $\gammaAND{p}{q}$ is adjacent to $\Interval{w_i}$ for $2\leq i \leq k''$. This implies that the path induced by the set $\set{ \gammaAND{p}{q}, \deltaAND{p}{q}, \Interval{w_1},\Interval{w_2},\ldots,\Interval{w_{k''}}, \Tail{T_3}}$ is a shortest path between $\gammaAND{p}{q}$ and $\Tail{T_3}$. This completes the proof.
\end{proof}

\begin{lemma}\label{lem:sat-opt}
    If the 3-SAT instance $F$ is satisfiable then $\mathcal{I}(D)$ has a geodetic set of cardinality $\bound$.
\end{lemma}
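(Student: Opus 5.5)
Given a satisfying assignment $\phi$ of $F$, I will write down an explicit set $S$ and check that it covers every interval and has exactly $\bound$ elements. The mandatory part of $S$ is the set of all point intervals, which are simplicial and so lie in every geodetic set. By Lemma~\ref{lem:uper-bound-interval} there are $\nbSimplicial$ of them: $\startInterval$, $\trueInterval$, every $\Interval{s_q}$, every $\CGd{i}$ and $\CGf{i}$, every $\betaAND{p}{q}$ and $\InsertSigma{p}{q}$, and every tail $\Tail{T}$. The remaining $n+6m=\boundWithoutSimplicial$ elements encode $\phi$.

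For each variable $x_i$, I add $\Interval{x_i}$ if $\phi(x_i)$ is false and $\Interval{\overline{x_i}}$ if it is true; this is one interval per variable, $n$ in total. With this rule, for each implication gadget $\Implication{p}{q}$ inside $\VarGadget_i$, either $\Interval{q}$ itself is chosen or its root $\Interval{p}$ is chosen ($\trueInterval$ is always in $S$). In the second case, Lemma~\ref{lem:imply-cover-rev} gives $\Interval{q}\in I(\Interval{p},\Interval{s_q})$. In both cases $\Interval{r_q}\in I(\startInterval,\Interval{s_q})$ by the same lemma.

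For each clause $C_i$, fix a literal $\ell_i^k$ true under $\phi$, say $k=1$. I add the interval $\CGa{i}$ that corresponds to this literal, and from the other two pairs the primed intervals $\CGbp{i}$ and $\CGcp{i}$. In each of the six AND gadgets $\AND{p}{q}$ of $\ClauseGadget_i$, I then choose one extra interval so that each of $\alphaAND{p}{q}$ and $\deltaAND{p}{q}$ is covered, either through Lemma~\ref{lem:AND-cover-1-rev}\ref{it:and-cover-2} from a chosen $\Interval{p}$ or $\Interval{q}$, or through part~\ref{it:and-cover-3} after adding $\gammaAND{p}{q}$. The count must come out to $6m$ across the clause gadgets: three for the $a/b/c$ choices and three for the AND gadgets that lack a chosen endpoint. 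The key case is the AND gadget $\AND{\CGa{i}}{\ell_i^1}$. Here $\CGa{i}\in S$, and $\Interval{\ell_i^1}$ is either in $S$ or covered via its implication. If $\Interval{\ell_i^1}$ is not in $S$, I add $\gammaAND{p}{q}$ instead. So the real target of the bookkeeping is that each clause spends exactly six non-simplicial intervals.

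Next I verify coverage of everything else. Every track interval lies in $I(\startInterval,\Tail{T})$ by Lemma~\ref{lem:track-covered-rev}. This also covers all the $\Interval{u_T},\Interval{v_T},\Interval{w_T}$ intervals and the insert tracks. For the covering gadget, $\CGCov{i}\in I(\CGa{i},\CGf{i})$ by Lemma~\ref{lem:cover-gadget-1-rev}. The unchosen intervals among $\CGa{i},\CGb{i},\CGc{i}$ are covered via $I(\CGd{i},\Tail{T(\Interval{z})})$, again by Lemma~\ref{lem:cover-gadget-1-rev}. The intervals $\CGap{i}$, $\CGbp{i}$, $\CGcp{i}$, together with their $\Interval{r}$ partners, are handled as in the variable case. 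Finally, $\gammaAND{p}{q}$ is always covered via $I(\InsertSigma{p}{q},\Tail{T_2})$ by Lemma~\ref{lem:AND-cover-1-rev}\ref{it:and-cover-1}.

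The main obstacle is the exact accounting inside the clause gadget. I must check that the satisfied literal makes $\alphaAND{\cdot}{\cdot}$ and $\deltaAND{\cdot}{\cdot}$ coverable in all six AND gadgets using exactly six extra intervals per clause. My approach is a case analysis on which endpoint of each AND gadget is in $S$ versus only covered. In the gadgets where an endpoint root is missing, I add $\gammaAND{p}{q}$, which covers both $\alphaAND{p}{q}$ (through $\betaAND{p}{q}$) and $\deltaAND{p}{q}$ (through $\Tail{T_3}$). I expect the count to close because exactly one of $\Interval{\ell}$ and $\Interval{\overline{\ell}}$ is chosen per variable, and exactly one of $\Interval{z}$ and $\Interval{z'}$ per pair in the clause. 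Summing the $\nbSimplicial$ point intervals and the $n+6m$ chosen ones gives $\bound$.
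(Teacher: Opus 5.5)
\textbf{There is a genuine gap in the clause accounting: your set has more than $\bound$ elements.} The cause is that your variable rule is reversed.

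In your own scheme, an AND gadget $\AND{p}{q}$ is free only when both $\Interval{p}$ and $\Interval{q}$ are in $S$. Lemma~\ref{lem:AND-cover-1-rev}\ref{it:and-cover-2} uses them as endpoints of the geodesics through $\alphaAND{p}{q}$ and $\deltaAND{p}{q}$. So the fact that $\Interval{\ell_i^1}$ is covered by its implication gadget does not help. Whenever an endpoint is missing, you pay for a $\gamma$.

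You put $\Interval{x_i}$ in $S$ when $\phi(x_i)$ is false, so a literal's interval is in $S$ exactly when that literal is false. Consider the true literal $\ell_i^1$ you fix:
\begin{itemize}
\item $\Interval{\ell_i^1}\notin S$, so $\AND{\CGa{i}}{\ell_i^1}$ needs a $\gamma$.
\item $\AND{\CGap{i}}{\overline{\ell_i^1}}$, $\AND{\CGb{i}}{\ell_i^2}$ and $\AND{\CGc{i}}{\ell_i^3}$ each lack their clause-side endpoint, so each needs a $\gamma$.
\item Together with $\CGa{i},\CGbp{i},\CGcp{i}$, this is already $7$ intervals for the clause.
\item $\AND{\CGbp{i}}{\overline{\ell_i^2}}$ costs one more whenever $\ell_i^2$ is false, since then $\overline{\ell_i^2}$ is true and its interval is not in $S$. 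The same holds for $\ell_i^3$.
\end{itemize}
Your tally of ``three AND gadgets lacking a chosen endpoint'' counted only the clause-side endpoints. Choosing exactly one of $\Interval{\ell},\Interval{\overline{\ell}}$ and exactly one of $\Interval{z},\Interval{z'}$ is not enough; the two choices must be aligned.

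\textbf{Flipping the literal rule alone does not fix this.} Suppose you choose the unprimed interval for one true literal and primed intervals for the other two. This fails when a second literal, say $\ell_i^2$, is also true. Then $\Interval{\overline{\ell_i^2}}\notin S$, and picking $\CGbp{i}$ forces a $\gamma$ in both $\AND{\CGb{i}}{\ell_i^2}$ and $\AND{\CGbp{i}}{\overline{\ell_i^2}}$. That pair then costs three intervals instead of two.

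\textbf{The correct choice, which is the paper's, is the following.}
\begin{itemize}
\item Put the intervals of the true literals in $S$: $\Interval{x_i}$ if $\phi(x_i)=1$, and $\Interval{\overline{x_i}}$ otherwise.
\item Treat every literal position independently. If $\ell_i^1$ is true, add $\CGa{i}$ and $\gammaAND{\CGap{i}}{\Interval{\overbar{\ell_i^1}}}$; otherwise add $\CGap{i}$ and $\gammaAND{\CGa{i}}{\Interval{\ell_i^1}}$. Do the same for $b$ and $c$.
\end{itemize}
Then every AND gadget either has both endpoints in $S$ or contains a chosen $\gamma$, and each position costs exactly two. Satisfiability of $C_i$ puts some unprimed interval in $S$, which is what covers $\CGCov{i}$.

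The rest of your coverage checks are fine: the tracks, the $\Interval{r_q}$, the unchosen $\CGa{i},\CGb{i},\CGc{i}$ via $\CGd{i}$, and $\gamma$ via $\InsertSigma{p}{q}$.
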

\begin{proof}
    We shall show that if $F$ is satisfiable, then $D$ has a geodetic set of cardinality $\bound = n_{point} + \boundWithoutSimplicial$. Let $S_p$ denote the set of point intervals in $D$. Note that the point intervals are the only simplicial vertices in $D$. Hence, they must all belong to any geodetic set of $D$. Let $\phi \colon \{ x_1,x_2,\ldots,x_n \} \rightarrow \{1,0\}$ be a satisfying assignment of $F$. Now, define the following sets. Let $S_1=\{ \Interval{x_i} \colon \phi(x_i)=1 \} \cup \{ \Interval{\overbar{x_i}} \colon \phi(x_i)=0 \}  $. Let $S_2=\emptyset$. Now, for each clause $C_i=(\ell_i^1, \ell_i^2, \ell_i^3)$, do the following. 

\begin{enumerate}
\item If $\phi(\ell_i^1) = 1$, then put $S_2=S_2 \cup \set{\CGa{i},\gammaAND{\CGap{i}}{\Interval{\overbar{\ell_i^1}}}}$ and if $\phi(\ell_i^1) = 0$ then put $S_2=S_2 \cup \set{\CGap{i},\gammaAND{\CGa{i}}{\Interval{\ell_i^1}}}$.

\item If $\phi(\ell_i^2) = 1$, then put $S_2=S_2 \cup \set{\CGb{i},\gammaAND{\CGbp{i}}{\Interval{\overbar{\ell_i^2}}}}$ and if $\phi(\ell_i^2) = 0$ then put $S_2=S_2 \cup \set{\CGbp{i},\gammaAND{\CGb{i}}{\Interval{\ell_i^2}}}$.

\item If $\phi(\ell_i^3) = 1$, then put $S_2=S_2 \cup \set{\CGc{i},\gammaAND{\CGcp{i}}{\Interval{\overbar{\ell_i^3}}}}$ and if $\phi(\ell_i^3) = 0$ then put $S_2=S_2 \cup \set{\CGcp{i},\gammaAND{\CGc{i}}{\Interval{\ell_i^3}}}$.
\end{enumerate}


Let $S=S_1\cup S_2 \cup S_p$. We shall show that $S$ is a geodetic set of $D$. Due to Lemma~\ref{lem:uper-bound-interval}, we have that $|S_1 \cup S_2 \cup S_p| = \bound$.

As $S_p \subseteq S$, observe that $\startInterval\in S$ and for each track $T\in \mathcal{T}$, the point interval $\Tail{T}\in S$. Hence, due to Lemma~\ref{lem:track-covered-rev}, we have that each track $T\in \mathcal{T}$ is covered by $S$. 

Consider any variable gadget $\VarGadget_i$ corresponding to the variable $x_i$. Recall that from the construction, $\VarGadget_i = \Implication{\trueVertex}{x_i} \cup \Implication{x_i}{\overbar{x_i}}$. Due to Lemma~\ref{lem:imply-cover-rev}, we have that $\Interval{x_i} \in I(\trueInterval,\Interval{s_{x_i}})$ and $\Interval{r_{x_i}} \in I(\startInterval,\Interval{s_{x_i}})$. Hence, $\Implication{\trueVertex}{x_i} \subseteq I(S)$. Now, consider the implication gadget $\Implication{x_i}{\overbar{x_i}}$. Due to Lemma~\ref{lem:imply-cover-rev}, we have that $\overline{x_i} \in I(\startInterval,\Interval{s_{\overline{x_i}}})$. Since $\phi$ is a satisfying assignment, then either $\Interval{\overline{x_i}} \in S$ or $\Interval{x_i} \in S$. In the latter case, due to Lemma~\ref{lem:imply-cover-rev}, $\Interval{r_{\overline{x_i}}} \in I(\Interval{x_i}, \Interval{s_{\overline{x_i}}})$. Hence, $\Implication{x_i}{\overbar{x_i}} \subseteq S$, and therefore $\VarGadget_i \subseteq I(S)$.


Now, consider any clause $C_i = (\ell_i^1,\ell_i^2,\ell_i^3 )$ and recall the construction of $\ClauseGadget_i$. Since $\phi$ is a satisfying assignment, observe that, there exists at least one interval $\Interval{z} \in \set{\CGa{i},\CGb{i},\CGc{i}} \cap S$. Now due to Lemma~\ref{lem:cover-gadget-1-rev}, $\Interval{z}\in I(\CGd{i}, \Tail{T(\Interval{z})})$ and $\CGCov{i}\in I(\Interval{z},\CGf{i})$. Hence $\CovGadget{i} \subseteq I(S)$. Now, consider the implication gadget $\Implication{\CGa{i}}{\CGap{i}}$. From our definition of $S_2$, it follows that $\set{\CGa{i},\CGap{i}}\cap S \neq \emptyset$. Now due to Lemma~\ref{lem:imply-cover-rev}, $\Interval{\CGap{i}} \in I(S)$ and $\Interval{r_{\CGap{i}}} \in I(S)$. Hence, $\Implication{\CGa{i}}{\CGap{i}} \subseteq I(S)$. Repeating the above arguments for $\Implication{\CGb{i}}{\CGbp{i}}$ and $\Implication{\CGc{i}}{\CGcp{i}}$, we infer that $$ \Implication{\CGa{i}}{\CGap{i}} \cup \Implication{\CGb{i}}{\CGbp{i}} \cup \Implication{\CGc{i}}{\CGcp{i}} \subseteq I(S)$$ 

Now, consider the AND gadgets $\AND{\CGa{i}}{\ell_i^1}$ and $\AND{\CGap{i}}{\overline{\ell_i^1}}$. From our definition of $S_2$, it follows that either $\set{\CGa{i},\gammaAND{\CGap{i}}{\Interval{\overbar{\ell_i^1}}}} \subseteq S$ or $\set{\CGap{i},\gammaAND{\CGa{i}}{\Interval{\ell_i^1}}} \subseteq S$.
First consider the case when $\set{\CGa{i},\gammaAND{\CGap{i}}{\Interval{\overbar{\ell_i^1}}}} \subseteq S$. This means $\phi(\ell_i^1)=1$ and therefore $\Interval{\ell_i^1} \in S$. Now invoking Lemma~\ref{lem:AND-cover-1-rev}\ref{it:and-cover-1} and \ref{it:and-cover-2} (with $\Interval{p}=\Interval{\CGa{i}}$ and $\Interval{q}=\Interval{\ell_i^1}$), we have that $\AND{\CGa{i}}{\Interval{\ell_i^1}} \subseteq I(S)$. Also,  invoking Lemma~\ref{lem:AND-cover-1-rev}\ref{it:and-cover-1} and \ref{it:and-cover-3} ($\Interval{p}=\CGap{i}$ and $\Interval{q} = \Interval{\overline{\ell_i^1}}$) we have that $\AND{\CGap{i}}{\Interval{\overline{\ell_i^1}}} \subseteq I(S)$. The above arguments imply that when $\set{\CGa{i},\gammaAND{\CGap{i}}{\Interval{\overbar{\ell_i^1}}}} \subseteq S$, we have $$\AND{\CGa{i}}{\ell_i^1} \cup \AND{\CGap{i}}{\overline{\ell_i^1}} \subseteq I(S)$$

Similarly when  $\set{\CGap{i},\gammaAND{\CGa{i}}{\Interval{\ell_i^1}}} \subseteq S$ using similar arguments we have $\left(\AND{\CGa{i}}{\ell_i^1} \cup \AND{\CGap{i}}{\overline{\ell_i^1}}\right) \subseteq I(S)$. Now arguing similarly for $\left(\AND{\CGb{i}}{\ell_i^2} \cup \AND{\CGbp{i}}{\overline{\ell_i^2}} \right)$ and $\left(\AND{\CGc{i}}{\ell_i^3} \cup \AND{\CGcp{i}}{\overline{\ell_i^3}}\right)$, we have that  $\ClauseGadget_i\subseteq I(S)$. This completes the proof.
\end{proof}

\subsubsection{Optimality implies satisfiability}\label{sec:opt-sat}

Now, we shall show that if the geodetic number of $D$ is at most $\bound$, then $F$ is satisfiable. 

\begin{lemma}\label{lem:good-geodetic-set}
Let $S$ be a geodetic set of $D$. There exists a geodetic set $S^*$ of $D$ with $|S^*|\leq |S|$ such that for any track $T\in \mathcal{T}$ we have $S\cap T = \emptyset$.
\end{lemma}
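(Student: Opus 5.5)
The statement presumably intends $S^*\cap T=\emptyset$ for every track $T\in\mathcal{T}$, and that is the version I will prove. The approach is an exchange argument applied one interval at a time. Let $\Interval{y}\in S\cap T$ for some track $T$. I will replace $\Interval{y}$ by the root $\Interval{z}\in\Root{T}$ that lies leftmost. In most gadgets this is the interval $\Interval{q}$, $\CGa{i}$, $\InsertSigma{p}{q}$, $\alphaAND{p}{q}$, and so on, or the point interval when $\Root{T}$ contains one. If the root is already a point interval, it belongs to $S$ anyway, and we simply delete $\Interval{y}$.

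Each exchange does not increase $|S|$. It strictly decreases the number of solution intervals lying on tracks, because roots are never track intervals. So iterating the exchange terminates, and it only remains to show that each exchange keeps $S$ a geodetic set.

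The key claim is the following. For every interval $\Interval{w}\in S$, we have $I(\Interval{y},\Interval{w})\subseteq I(\Interval{z},\Interval{w})\cup I(\startInterval,\Tail{T})\cup I(\Interval{z},\Tail{T})$. The interval $\Interval{y}$ itself is covered by Lemma~\ref{lem:track-covered-rev}, since $\startInterval$ and $\Tail{T}$ are point intervals and hence lie in $S$.

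To prove the claim, I split into two cases according to whether $\Interval{w}$ lies to the right or to the left of $\Interval{y}$.

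\textbf{Case $\Interval{w}$ to the right.} By Lemma~\ref{lem:semi-good} and Lemma~\ref{lem:shortest}, any interval on a $\Interval{y}$–$\Interval{w}$ geodesic also lies on a geodesic from $\Interval{z}$. The reason is that the good shortest path from $\Interval{z}$ runs along $T$ and passes through (or adjacent to) $\Interval{y}$. The track construction (shifted tracks, no shortcuts between tracks) ensures $d(\Interval{z},\Interval{w})=d(\Interval{z},\Interval{y})+d(\Interval{y},\Interval{w})$, possibly up to a neighbour of $\Interval{y}$.

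\textbf{Case $\Interval{w}$ to the left.} Symmetrically, geodesics from $\Interval{w}$ to $\Interval{y}$ extend along $T$ to $\Tail{T}$. The tail path is a good shortest path, so $I(\Interval{w},\Interval{y})\subseteq I(\Interval{w},\Tail{T})$.

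The main obstacle is the right-hand case. There I must verify that the distance from $\Interval{z}$ really is additive through $\Interval{y}$, or more precisely through the track interval that $\Interval{y}$ replaces. I also need that the geodesics from $\Interval{y}$ do not cover intervals that are invisible from $\Interval{z}$, such as intervals in the gap between $T$ and the adjacent tracks $X$ and $X'$.

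To handle this, I will use the fact that the tracks are ordered with shifts of order $\epsilon$. This implies that, after the root, every interval of $T$ is the rightmost neighbour of its predecessor. Hence the good path from $\Interval{z}$ along $T$ is a shortest path by Lemma~\ref{lem:shortest}. Any interval covered from $\Interval{y}$ then lies on a semi-good path that can be prefixed by this track path.

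If this prefixing fails near gadget boundaries, I would fall back on a gadget-by-gadget case analysis. That analysis would check that the only intervals near $\Interval{y}$ are other track intervals, which are already covered by Lemma~\ref{lem:track-covered-rev}, or gadget intervals whose covering pairs are supplied by Lemmas~\ref{lem:imply-cover-rev}, \ref{lem:cover-gadget-1-rev} and \ref{lem:AND-cover-1-rev} with $\Interval{z}$ in place of $\Interval{y}$.
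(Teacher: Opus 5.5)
The step that fails is the right-hand case, which you rightly flag as the main obstacle. The additivity $d(\Interval{z},\Interval{w})=d(\Interval{z},\Interval{y})+d(\Interval{y},\Interval{w})$ through the root is false for every track $T(\Interval{q})$ rooted at an implication interval $\Interval{q}$, and this is not a boundary effect. In $\Implication{p}{q}$, both $\Interval{t_1}$ and $\Interval{r_q}$ start at $\max(\Interval{q})$, but $\Interval{r_q}$ ends at $\min(\Interval{s_q})>\max(\Interval{t_1})$. So the rightmost neighbour of $\Interval{q}$ is $\Interval{r_q}$. The BFS frontier from $\Interval{q}$ (the largest right endpoint within distance $j$) then advances along the track rooted at $\set{\Interval{r_q},\Interval{s_q}}$, which is shifted strictly ahead of $T(\Interval{q})$. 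Already $d(\Interval{q},\Interval{s_q})=2$, while $d(\Interval{q},\Interval{t_1})+d(\Interval{t_1},\Interval{s_q})=3$.

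Worse, a non-track interval can lose its cover. In $\Implication{x_1}{\overbar{x_1}}$ the track just following $T(\Interval{x_1})$ is the one rooted at $\set{\Interval{r_{x_1}},\Interval{s_{x_1}}}$. With the coordinates forced by the construction, take $\Interval{y}=\Interval{t_2}\in T(\Interval{x_1})$. One gets $d(\Interval{y},\Interval{\overline{x_1}})=2$, $d(\Interval{\overline{x_1}},\Interval{s_{\overline{x_1}}})=2$ and $d(\Interval{y},\Interval{s_{\overline{x_1}}})=4$. On the other hand, $d(\Interval{x_1},\Interval{\overline{x_1}})=4$ and $d(\Interval{x_1},\Interval{s_{\overline{x_1}}})=5$. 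Thus $\Interval{\overline{x_1}}\in I(\Interval{y},\Interval{s_{\overline{x_1}}})$ but $\Interval{\overline{x_1}}\notin I(\Interval{x_1},\Interval{s_{\overline{x_1}}})$. Your fallback, applying Lemma~\ref{lem:imply-cover-rev} with $\Interval{z}=\Interval{x_1}$ in place of $\Interval{y}$, is exactly this false inclusion. So ``replace by the root'' is not a valid exchange for tracks of type $T(\Interval{q})$. A correct proof must replace such intervals differently, justified by an analysis of what an interval of $T(\Interval{q})$ can cover that $\Interval{q}$ cannot.

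The paper makes the same exchange: it swaps the leftmost track interval of $S$ for a root of its track. For solution intervals to the right it asserts the same concatenation without further argument, so it does not close this gap either. You genuinely differ on the left-hand side. The paper's leftmost choice ensures every solution interval to the left is a non-track interval, and it checks gadget by gadget that $\Interval{q}$, $\CGCov{i}$, $\alphaAND{p}{q}$ and $\deltaAND{p}{q}$ stay covered.

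Your tail argument is a uniform alternative and is in fact correct, but not for the reason you give. That the $T$-path from $\Interval{y}$ to $\Tail{T}$ is shortest does not make its concatenation with a $\Interval{w}$--$\Interval{y}$ geodesic shortest. What works is that all intervals have length at most $1$. If $k=d(\Interval{w},\Interval{y})$ and $\Interval{y}$ has unit length, then every interval within distance $k-1$ of $\Interval{w}$ ends before $\max(\Interval{y})$. The frontier from $\Interval{w}$ gains at most one unit per step, and $\Tail{T}$ sits on the grid of $T$. This gives $d(\Interval{w},\Tail{T})\ge d(\Interval{w},\Interval{y})+d(\Interval{y},\Tail{T})$. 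The shorter interval $\Interval{t_1}$ needs a separate check.
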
 

 \begin{proof}

Let $U$ consists of all intervals contained in some track. In other words, $U=\displaystyle\bigcup\limits_{T\in \mathcal{T}} T$. Let $S_p$ denote the set of point intervals in $D$. Observe that $S_p \subseteq S$. Let $\Interval{u} \in S \cap U$ be the interval with $min(S\cap U)=min(\Interval{u})$ such that $\Interval{u}$ belongs to a track with a root $\Interval{y}$. Let $A_1$ and $A_2$ be the sets of all $\gammaAND{p}{q}$ intervals and $\Interval{r}_q$ intervals in $D$, respectively. Let $A_3 = \set{\Interval{z}\in D \colon  \Interval{z}\in \set{\CGa{i},\CGb{i},\CGc{i}}, 1\leq i\leq m}$  and $A = A_1 \cup A_2 \cup A_3$ and $\overline{A}=D \setminus (A \cup U \cup S_p)$. Due to Lemma~\ref{lem:track-covered-rev}, \ref{lem:imply-cover-rev}, \ref{lem:cover-gadget-1-rev}, and \ref{lem:AND-cover-1-rev}, observe that, $A\cup U \subseteq I(S_p)$. Therefore, we will be done by proving the following claim.

 \begin{claim}
 Let $\Interval{v}$ be any interval in $S$. We have $I(\Interval{u},\Interval{v}) \cap \overline{A} \subseteq I(\Interval{y},\Interval{v}) \cap \overline{A}$.
\end{claim}

\medskip \noindent To prove the claim, define $S_1=\set{ \Interval{w}\in S \colon min(\Interval{w}) < min(\Interval{u}) }$ and $S_2=\set{\Interval{w}\in S \colon min(\Interval{u}) < min(\Interval{w})}$. First, assume that $\Interval{v} \in S_1$ and $\Interval{z}$ be an interval in $I(\Interval{u},\Interval{v})\cap \overline{A}$. In this case, $\Interval{v}$ must be a root of some track $T$ (by definition of $\Interval{u}$). Now, we have the following cases.
 \begin{enumerate}
     \item Assume $\Interval{z}=\Interval{q}$ for some implication gadget $\Implication{\Interval{p}}{\Interval{q}}$. Then by Lemma~\ref{lem:semi-good}, there exists a semi-good shortest path $Q$ between $\Interval{v}$ and $\Interval{u}$ containing $\Interval{q}$. From the construction of $\Implication{p}{q}$, this is only possible if $\Interval{v} = \Interval{p}$. By Lemma~\ref{lem:imply-cover-rev}, $\Interval{q}\in I(\Interval{p},\Interval{s_q})$. 
    
     \item Assume $\Interval{z} = \CGCov{i}$ for some $1\leq i \leq m$. Using similar arguments as above and Lemma~\ref{lem:cover-gadget-1-rev}, we can show that $\Interval{v} \in \set{\CGa{i},\CGb{i},\CGc{i}}$ and therefore $\Interval{z}\in I(\Interval{v}, \CGf{i})$.
    
    \item Assume $\Interval{z} \in \set{\alphaAND{p}{q} ,\deltaAND{p}{q}}$ for some AND gadget $\AND{p}{q}$. Using arguments as in Case $1$ and Lemma~\ref{lem:AND-cover-1-rev}, we can show that $\Interval{v} \in \set{\Interval{p},\Interval{q}}$ and therefore $\Interval{z}\in I(\Interval{v} \cup S_p)$. 
\end{enumerate}

The above cases imply that when $\Interval{v} < \Interval{u}$, then $I(\Interval{v},\Interval{u}) \subset I(\Interval{v},\Interval{u'})$, where $\Interval{u'}\in S \setminus \set{\Interval{u}}$.  Now assume that $\Interval{v} \in S_2$ and $\Interval{z} \in I(\Interval{u},\Interval{v})\cap \overline{A}$. Observe that there exists a shortest path from $\Interval{u}$ to $\Interval{v}$ that contains $\Interval{z}$. Consider now the good shortest path between $\Interval{y}$ and $\Interval{u}$ concatenated with the shortest path between $\Interval{u}$ and $\Interval{v}$ covering $\Interval{z}$. This is a shortest path between $\Interval{y}$ and $\Interval{w}$ covering $\Interval{z}$. This completes the proof of the claim.

\medskip \noindent The above arguments imply that $S^*=(S\setminus \set{\Interval{u}})\cup \set{\Interval{y}}$ is also a geodetic set of $D$. Arguing similarly for all intervals in $S\cap U$, we have the lemma.
\end{proof}

A \emph{good geodetic set} of $D$ is a geodetic set of minimum cardinality which does not contain any interval belonging to a track. By Lemma~\ref{lem:good-geodetic-set}, a good geodetic set of $D$ always exists. Now we shall prove some further properties of good geodetic sets of $D$.

\begin{lemma}\label{lem:implication-select}
Let $S^*$ be a good geodetic set of $D$ and $\Implication{p}{q}$ be an implication gadget where $\Interval{p}$ is the only root of $T(\Interval{p})$. Then, either $\Interval{p} \in S^*$ or $\Interval{q}\in S^*$.
\end{lemma}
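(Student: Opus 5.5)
We argue by contradiction: suppose $S^*$ is a good geodetic set with $\Interval{p}\notin S^*$ and $\Interval{q}\notin S^*$. We then show that some interval of $\Implication{p}{q}$ is not covered, namely $\Interval{q}$ itself. Since $\Interval{q}$ is not simplicial, it must lie on a shortest path between two intervals $\Interval{v},\Interval{w}\in S^*$. We may take $\min(\Interval{v})<\min(\Interval{w})$, with $\Interval{v}$ to the left of $\Interval{q}$ and $\Interval{w}$ to its right; if both lay on one side, no shortest path between them could leave that side to reach $\Interval{q}$. By goodness, no interval of a track lies in $S^*$. So every candidate $\Interval{v}$ is a root or a point interval created before $\Interval{q}$, and every candidate $\Interval{w}$ is a root, a point interval, or a tail created after.

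The main step applies Lemma~\ref{lem:semi-good}. If $\Interval{q}\in I(\Interval{v},\Interval{w})$, there is a semi-good shortest path from $\Interval{v}$ that follows rightmost neighbours until it reaches an interval adjacent to $\Interval{q}$. Starting from a root $\Interval{v}$, the rightmost-neighbour walk runs along the track $T(\Interval{v})$; the tracks are shifted, so this walk stays on its own track. By construction, $\Interval{q}$ has left endpoint strictly between $\max(\Interval{u_X})$ and $\max(\Interval{u_{T_{\Interval{p}}}})$. The semi-good path meets $\Interval{q}$ through an interval adjacent to it, and this interval must be the last interval on the path, i.e.\ the one whose rightmost neighbour would overshoot $\Interval{q}$.

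Next we compare distances. Let $T$ be a track with $T\neq T(\Interval{p})$. If $T<T(\Interval{p})$, the walk along $T$ reaches the region strictly before $\Interval{q}$'s leftmost neighbour, so taking $\Interval{q}$ is not on any shortest path: the rightmost neighbour on the own track already goes further. If $T>T(\Interval{p})$, the track is already past $\min(\Interval{q})$, so $\Interval{q}$ would be a detour going backwards. Only the walk along $T(\Interval{p})$ hits $\Interval{q}$ in a way that ties with continuing on the track; this is the argument already used informally in case~1 of the claim in Lemma~\ref{lem:good-geodetic-set}. Since $\Interval{p}$ is the only root of $T(\Interval{p})$, we get $\Interval{v}=\Interval{p}$, a contradiction. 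We must also exclude $\Interval{v}$ being a point interval ($\startInterval$, $\trueInterval$, or an $\Interval{s}$-, $\Interval{\beta}$-, $\Interval{\sigma}$-type or $\CGd{}$ interval) or a non-point root. Such roots ($\Interval{r}$, $\Interval{\gamma}$, etc.) share their track with a point root, so the same distance argument applies.

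Symmetrically, we must rule out $\Interval{v}$ lying inside the gadget to the right side, with $\Interval{w}$ left. This case is essentially $\Interval{r_q},\Interval{s_q}$ with a left partner. For $\Interval{s_q}$ with any left partner $\Interval{v}\ne\Interval{p}$, Lemma~\ref{lem:imply-cover-rev}'s construction shows that the shortest path into $\Interval{s_q}$ enters via $\Interval{r_q}$ from the track preceding, or via the track intervals containing $\Interval{s_q}$, never via $\Interval{q}$. The main obstacle is making this distance comparison rigorous: showing that for every track other than $T(\Interval{p})$, the path through $\Interval{q}$ is strictly longer. I would handle it by computing, for each track position, the index of the first track interval adjacent to $\Interval{q}$ versus to $\Interval{r_q}$ and $\Interval{t_1}$, using the exact endpoint choices (midpoints with $\theta,\theta'$) in the construction.
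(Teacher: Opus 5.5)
Your plan is the paper's: take a semi-good shortest path through $\Interval{q}$, show that the interval just before $\Interval{q}$ lies on $T(\Interval{p})$, and use the fact that $S^*$ avoids tracks to pin the left endpoint to $\Interval{p}$. The gap is in how you link the left endpoint $\Interval{v}$ to $T(\Interval{p})$. You deduce $T(\Interval{v})=T(\Interval{p})$ from the claim that the rightmost-neighbour walk from a root runs along its own track. That claim is false for roots created as the $\Interval{q}$ of an implication gadget. There $\Interval{t_1}$ and $\Interval{r_q}$ both start at $\max(\Interval{q})$, but $\max(\Interval{t_1})=\frac{\max(\Interval{v_{T_{\Interval{p}}}})+\min(\Interval{s_q})}{2}<\min(\Interval{s_q})=\max(\Interval{r_q})$, so $\Interval{t_1}\subsetneq\Interval{r_q}$. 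Hence the rightmost neighbour of $\Interval{q}$ is $\Interval{r_q}$, and the walk continues $\Interval{q}\,\Interval{r_q}\,\Interval{t}\cdots$ along the track rooted at $\set{\Interval{r_q},\Interval{s_q}}$, never using $T(\Interval{q})$.

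This is exactly the situation of $\Implication{x_i}{\overbar{x_i}}$, to which the lemma is applied with $\Interval{p}=\Interval{x_i}$. There no interval outside the tracks, not even $\Interval{x_i}$, has its walk on $T(\Interval{x_i})$, so no admissible $\Interval{v}$ exists and $\Interval{\overbar{x_i}}\in S^*$ is simply forced. The lemma survives, but not by the mechanism you describe, and your argument as written does not rule out a root of another track walking onto $T(\Interval{p})$. (Your aside that non-point roots such as $\gammaAND{p}{q}$ share their track with a point root is also false, though you do not need it.)

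The repair is to argue locally and backwards, which is what the paper's short proof does. A shortest path through $\Interval{q}$ enters and leaves it through two non-adjacent neighbours of $\Interval{q}$. Every neighbour of $\Interval{q}$ other than $\Interval{u_{T_{\Interval{p}}}}$, $\Interval{v_{T_{\Interval{p}}}}$, $\Interval{r_q}$, $\Interval{t_1}$ contains $\Interval{q}$; these are the $\Interval{v_Y}$ for tracks $Y$ before $T(\Interval{p})$ and the $\Interval{u_Z}$ for tracks $Z$ after it. Moreover $\Interval{v_{T_{\Interval{p}}}}$ meets the other three. So the only non-adjacent pairs are $\Interval{u_{T_{\Interval{p}}}}$ with $\Interval{r_q}$ or $\Interval{t_1}$, and the predecessor of $\Interval{q}$ must be $\Interval{u_{T_{\Interval{p}}}}$. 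No distance or index computation is needed, so the ``main obstacle'' you identify disappears.

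On the semi-good path this predecessor is reached by the rightmost-neighbour walk from $\Interval{v}$; now trace that walk backwards. In this construction, an interval of a track other than its first is the rightmost neighbour only of its predecessor on that track. One checks from the construction that the first interval of $T(\Interval{p})$ is the rightmost neighbour of no interval except possibly a root of $T(\Interval{p})$. Since $S^*$ contains no track interval, $\Interval{v}$ must be the unique root $\Interval{p}$, or no such $\Interval{v}$ exists, as for $\Interval{p}=\Interval{x_i}$. In either case $\Interval{p}\in S^*$ or $\Interval{q}\in S^*$.
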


\begin{proof}
 Suppose $\Interval{q}\notin S^*$. Then, Lemma~\ref{lem:semi-good} implies that there exists a semi-good shortest path $P$ containing $\Interval{q}$ whose end-vertices lie in $S^*$. Then, there must exist two intervals $\Interval{u_1},\Interval{v_1}$ in $P$ such that both $\Interval{u_1},\Interval{v_1}$ intersect $\Interval{q}$ and $\max{(\Interval{u_1})}<\min{(\Interval{v_1})}$. From the construction of $\Implication{p}{q}$, it follows that $\Interval{u_1}\in T(\Interval{p})$. Since $S^*$ is a good geodetic set, our construction implies that $\Interval{p}\in S^*$. 
\end{proof}

\begin{lemma}\label{lem:AND-select}
Let $S^*$ be a good geodetic set of $D$ and $\AND{p}{q}$ be an AND gadget where $\Interval{p}$ is the only root of $T(\Interval{p})$ and $\Interval{q}$ is the only root of $T(\Interval{q})$. Then, either $\set{\Interval{p},\Interval{q}} \subseteq S^*$, or $S^*$ contains at least one interval among $\set{\alphaAND{p}{q},\gammaAND{p}{q},\deltaAND{p}{q}}$.
\end{lemma}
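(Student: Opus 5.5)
We argue by contraposition, following the pattern of the proof of Lemma~\ref{lem:implication-select}. Suppose that $S^*$ contains none of $\alphaAND{p}{q},\gammaAND{p}{q},\deltaAND{p}{q}$. We then show that both $\Interval{p}$ and $\Interval{q}$ lie in $S^*$. The key intervals of the gadget are $\alphaAND{p}{q}$ and $\deltaAND{p}{q}$, which must still be covered. By Lemma~\ref{lem:semi-good}, each of them lies on a semi-good shortest path $P$ whose end-vertices $\Interval{w_1},\Interval{w_2}$ belong to $S^*$ and satisfy $\max(\Interval{w_1})<\min(\Interval{w_2})$. The plan is to read off, from the geometry of the construction, which intervals can play the role of the predecessor of $\alphaAND{p}{q}$ (resp. $\deltaAND{p}{q}$) on such a path, and then trace back to the left end-vertex.

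\textbf{Covering $\alphaAND{p}{q}$.} First I would show that $\alphaAND{p}{q}\in P$ forces $P$ to enter $\alphaAND{p}{q}$ from an interval of $T(\Interval{p})$. The left endpoint of $\alphaAND{p}{q}$ lies strictly between $\max(\Interval{u_{Y_1}})$ and $\max(\Interval{u_{T_{\Interval{p}}}})$, so any left neighbour of it that is not itself covered by a shorter route comes from $T(\Interval{p})$. Intervals of $Y_1$ or of earlier tracks would reach the rightmost neighbour of $\alphaAND{p}{q}$ directly, which would give a shortcut, and $\betaAND{p}{q}$ is a point interval nested inside. The path cannot start at $\betaAND{p}{q}$ and continue rightward through $\alphaAND{p}{q}$ unless its right end is $\gammaAND{p}{q}$ or lies beyond it. Here I use the track-shift property: the only way to leave $\alphaAND{p}{q}$ rightward is $\gammaAND{p}{q}$ or intervals of $T_1$/$T_m$. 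I would then check that a shortest path from $\betaAND{p}{q}$ to a far-right vertex prefers $T_1$ directly, which excludes $\alphaAND{p}{q}$ unless the path ends at $\gammaAND{p}{q}\in S^*$, and this case is excluded by assumption.

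Once $P$ uses an interval of $T(\Interval{p})$ just before $\alphaAND{p}{q}$, the semi-good structure means that the left part of $P$ consists of rightmost-neighbour steps. Since $S^*$ is good, the left end-vertex is not a track interval. As in Lemma~\ref{lem:implication-select}, the unique non-track interval from which such a rightmost-neighbour chain feeds into $T(\Interval{p})$ at that position is its root. Since $\Interval{p}$ is the only root of $T(\Interval{p})$, we get $\Interval{p}\in S^*$.

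\textbf{Covering $\deltaAND{p}{q}$.} The symmetric argument applies here. Its left endpoint $\max(\gammaAND{p}{q})$ lies between $\max(\Interval{u_{T_{\Interval{q}}}})$ and the start of $Y'_2$. So a semi-good shortest path through $\deltaAND{p}{q}$ either comes from $\gammaAND{p}{q}$ (which requires $\gammaAND{p}{q}$ to be an endpoint, since otherwise a track shortcut exists), or comes from $T(\Interval{q})$, which gives $\Interval{q}\in S^*$.

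The main obstacle is the \emph{exclusion step}. We must verify rigorously, from the precise coordinates (the midpoint choices and the $\epsilon$-offsets), that no other track can serve as the predecessor on a shortest path through $\alphaAND{p}{q}$ or $\deltaAND{p}{q}$. In particular, $Y_1$, $T_m$, $Y_2$, $Y'_2$ and the new tracks $T_1,T_2,T_3$ must not be able to do so. I would attack this by comparing distances to the tails: any such alternative route reaches the common right region one step earlier via the shifted track, so it is not shortest. This is exactly the phenomenon used in Lemmas~\ref{lem:implication-select} and~\ref{lem:AND-cover-1-rev}.
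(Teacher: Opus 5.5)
Your proposal is correct and follows the paper's proof. Assuming none of $\alphaAND{p}{q},\gammaAND{p}{q},\deltaAND{p}{q}$ lies in $S^*$, a semi-good shortest path through $\alphaAND{p}{q}$ must enter it from $T(\Interval{p})$, which forces $\Interval{p}\in S^*$ by goodness. A path through $\deltaAND{p}{q}$ must enter it from $T(\Interval{q})$, because entering from $\gammaAND{p}{q}$ fails unless $\gammaAND{p}{q}$ is an endpoint, which forces $\Interval{q}\in S^*$. You also treat the point interval $\betaAND{p}{q}$ explicitly as a possible predecessor of $\alphaAND{p}{q}$ (which would force $\gammaAND{p}{q}$ to be the other endpoint); the paper passes over this case silently, so your version is slightly more careful there.
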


\begin{proof}
Suppose that $S^*\cap \set{\alphaAND{p}{q},\gammaAND{p}{q},\deltaAND{p}{q}}=\emptyset$. Then, Lemma~\ref{lem:semi-good} implies that there exists a shortest path $P$ containing $\alphaAND{p}{q}$ whose end-vertices lie in $S^*$. Then, there must exist two intervals $\Interval{u_1},\Interval{v_1}$ in $P$ such that both $\Interval{u_1},\Interval{v_1}$ intersect $\alphaAND{p}{q}$ and $\max{(\Interval{u_1})}<\min{(\Interval{v_1})}$. From the construction of $\AND{p}{q}$ it follows that $\Interval{u_1}\in T(\Interval{p})$. Since $S^*$ is a good geodetic set, our construction implies that $\Interval{p}\in S^*$. 

Now, Lemma~\ref{lem:semi-good} implies that there exists a semi-good shortest path $Q$ containing $\deltaAND{p}{q}$, whose end-vertices lies in $S^*$. Then, there must exist two intervals $\Interval{u_2},\Interval{v_2}$ in $Q$ such that both $\Interval{u_2},\Interval{v_2}$ intersect $\deltaAND{p}{q}$ and $\max{\Interval{u_2}}<\min{\Interval{v_2}}$. From the construction of $\AND{p}{q}$, it follows that either $\Interval{u_2}\in T(\Interval{q})$ or $\Interval{u_1}=\gammaAND{p}{q}$. 

Consider the case when $\Interval{u_1}=\gammaAND{p}{q}$. Since $\gammaAND{p}{q} \notin S^*$, there must exist an interval $\Interval{w_2}$ in $Q$ such that $\gammaAND{p}{q}$ intersects $\Interval{w_2}$ and the distance between $\Interval{w_2}$ and $\Interval{v_2}$ is exactly three. But again from the construction of $\AND{p}{q}$, it follow that no such $\Interval{w_2}$ exists, leading to a contradiction. Hence, $\Interval{u_2}\in T(\Interval{q})$. Since $S^*$ is a good geodetic set, our construction implies that $\Interval{q}\in S^*$. 
\end{proof}

\begin{lemma}\label{lem:clause-select}
Let $S^*$ be a good geodetic set of $D$ and let $C_i=(\ell_i^1,\ell_i^2,\ell^3_i)$ be a clause. Then we have 

\begin{enumerate}
\renewcommand{\theenumi}{(\alph{enumi})}
\renewcommand{\labelenumi}{\theenumi}
\item $\abs{ S^* \cap \set{\CGa{i},\CGap{i}, \alphaAND{\CGa{i}}{\ell^1_i},\gammaAND{\CGa{i}}{\ell^1_i},\deltaAND{\CGa{i}}{\ell^1_i}, \alphaAND{\CGap{i}}{\overbar{\ell^1_i}},\gammaAND{\CGap{i}}{\overbar{\ell^1_i}},\deltaAND{\CGap{i}}{\overbar{\ell^1_i}}}}  \geq 2$,

\item $\abs{ S^* \cap \set{\CGb{i},\CGbp{i}, \alphaAND{\CGb{i}}{\ell^2_i},\gammaAND{\CGb{i}}{\ell^2_i},\deltaAND{\CGb{i}}{\ell^2_i}, \alphaAND{\CGbp{i}}{\overbar{\ell^2_i}},\gammaAND{\CGbp{i}}{\overbar{\ell^2_i}},\deltaAND{\CGbp{i}}{\overbar{\ell^2_i}}}}  \geq 2$, and

\item $\abs{ S^* \cap \set{\CGc{i},\CGcp{i}, \alphaAND{\CGc{i}}{\ell^3_i},\gammaAND{\CGc{i}}{\ell^3_i},\deltaAND{\CGc{i}}{\ell^3_i}, \alphaAND{\CGcp{i}}{\overbar{\ell^3_i}},\gammaAND{\CGcp{i}}{\overbar{\ell^3_i}},\deltaAND{\CGcp{i}}{\overbar{\ell^3_i}}}}  \geq 2$.
\end{enumerate}  
\end{lemma}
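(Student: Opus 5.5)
By symmetry it suffices to prove (a); parts (b) and (c) follow verbatim after replacing $\CGa{i},\CGap{i},\ell_i^1$ by $\CGb{i},\CGbp{i},\ell_i^2$ and by $\CGc{i},\CGcp{i},\ell_i^3$. Write $W$ for the eight-element set in (a), and let $\Gamma(\Interval{x},\Interval{y})=\set{\alphaAND{x}{y},\gammaAND{x}{y},\deltaAND{x}{y}}$ for an AND gadget $\AND{x}{y}$. The plan is to combine three constraints: Lemma~\ref{lem:implication-select} applied to $\Implication{\CGa{i}}{\CGap{i}}$, and Lemma~\ref{lem:AND-select} applied to each of the two AND gadgets $\AND{\CGa{i}}{\ell_i^1}$ and $\AND{\CGap{i}}{\overbar{\ell_i^1}}$.

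First I check the hypotheses on roots. By construction of the covering gadget, $\CGa{i}$ is the only root of $T(\CGa{i})$. By construction of the implication gadget, $\CGap{i}$ is the only root of its track $T_1$. The literal intervals $\Interval{\ell_i^1}$ and $\Interval{\overbar{\ell_i^1}}$ are likewise the only roots of their tracks, since they are the $\Interval{q}$ intervals of the variable implication gadgets. Lemma~\ref{lem:implication-select} then gives $\CGa{i}\in S^*$ or $\CGap{i}\in S^*$. Lemma~\ref{lem:AND-select} gives two further facts: $\set{\CGa{i},\Interval{\ell_i^1}}\subseteq S^*$ or $S^*\cap\Gamma(\CGa{i},\Interval{\ell_i^1})\neq\emptyset$, and $\set{\CGap{i},\Interval{\overbar{\ell_i^1}}}\subseteq S^*$ or $S^*\cap\Gamma(\CGap{i},\Interval{\overbar{\ell_i^1}})\neq\emptyset$.

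Next is a case analysis on $\set{\CGa{i},\CGap{i}}\cap S^*$, which is nonempty. If both are in $S^*$, we already have $|S^*\cap W|\geq 2$. Otherwise, say $\CGa{i}\in S^*$ and $\CGap{i}\notin S^*$. Then the second AND gadget cannot use its first alternative, so $S^*$ meets $\Gamma(\CGap{i},\Interval{\overbar{\ell_i^1}})$, which is a subset of $W$ disjoint from $\set{\CGa{i}}$. This gives two elements of $W$. The case $\CGap{i}\in S^*$, $\CGa{i}\notin S^*$ is symmetric: the first AND gadget forces an element of $\Gamma(\CGa{i},\Interval{\ell_i^1})$.

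The main obstacle is verifying the root-uniqueness hypotheses of Lemmas~\ref{lem:implication-select} and~\ref{lem:AND-select} for all four intervals. In particular, I need that no insert or AND construction later adds roots to these tracks; AND gadgets only add roots to new tracks $T_1,T_2,T_3,T_m$. If some track did carry extra roots, I would instead argue directly that the tracks of $\CGa{i}$, $\CGap{i}$ and the literals have singleton root sets, reading this off the track-creation steps of Sections~\ref{sec:implication-rev} and~\ref{sec:cover-gadget-rev}.
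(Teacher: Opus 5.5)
Your argument is correct and is essentially the paper's. Both apply Lemma~\ref{lem:AND-select} to $\AND{\CGa{i}}{\ell_i^1}$ and $\AND{\CGap{i}}{\overbar{\ell_i^1}}$, and both use that the triples $\set{\alphaAND{\CGa{i}}{\ell^1_i},\gammaAND{\CGa{i}}{\ell^1_i},\deltaAND{\CGa{i}}{\ell^1_i}}$ and $\set{\alphaAND{\CGap{i}}{\overbar{\ell^1_i}},\gammaAND{\CGap{i}}{\overbar{\ell^1_i}},\deltaAND{\CGap{i}}{\overbar{\ell^1_i}}}$ are disjoint from each other and from $\CGa{i},\CGap{i}$.

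The one difference is that your appeal to Lemma~\ref{lem:implication-select} is unnecessary. The paper simply notes that each of $\CGa{i}$ and $\CGap{i}$ is either in $S^*$ or forces an element of its own triple, which already gives two elements of the set in (a) even when neither $\CGa{i}$ nor $\CGap{i}$ is chosen.
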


\begin{proof}
 Recall that $\ClauseGadget{i}$ contains the gadgets $\AND{\CGa{i}}{\ell_i^1}$ and $\AND{\CGap{i}}{\overbar{\ell_i^1}}$ (along with some other gadegets). By Lemma~\ref{lem:AND-select}, if $\CGa{i} \notin S^*$ then we need at least one vertex among  $\alphaAND{\CGa{i}}{\ell^1_i},\gammaAND{\CGa{i}}{\ell^1_i},\deltaAND{\CGa{i}}{\ell^1_i}$.  The same holds when $\CGa{i} \notin S^*$, we need at least one among $\alphaAND{\CGap{i}}{\overbar{\ell^1_i}},\gammaAND{\CGap{i}}{\overbar{\ell^1_i}},\deltaAND{\CGap{i}}{\overbar{\ell^1_i}}$. This implies that (a) holds. Using analogous arguments we can show that $(b)$ and $(c)$ holds.
\end{proof}

\begin{lemma}\label{lem:clause-select-1}
Let $S^*$ be a good geodetic set of $D$ and $C_i=(\ell_i^1,\ell_i^2,\ell^3_i)$ be a clause. If none of $\Interval{\ell^i_1},\Interval{\ell^i_2},\Interval{\ell^i_3}$ is in $S^*$ then $|S^* \cap \ClauseGadget_i|\geq 7$.
\end{lemma}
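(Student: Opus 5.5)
Write $L_a$, $L_b$, $L_c$ for the three eight-element sets appearing in Lemma~\ref{lem:clause-select}(a), (b), (c). These are pairwise disjoint subsets of $\ClauseGadget_i$. By Lemma~\ref{lem:clause-select} each contains at least two intervals of $S^*$, so $|S^*\cap \ClauseGadget_i|\geq 6$ holds unconditionally. The lemma therefore reduces to showing that, when none of $\Interval{\ell^i_1},\Interval{\ell^i_2},\Interval{\ell^i_3}$ is in $S^*$, one of the three groups must contain a third vertex of $S^*$, or $S^*$ must contain some other interval of $\ClauseGadget_i$ outside $L_a\cup L_b\cup L_c$ (for instance $\CGCov{i}$).

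The plan has two steps. First, fix $j\in\{1,2,3\}$ and look at $\AND{\CGa{i}}{\ell_i^1}$ (resp.\ the $b$, $c$ analogues). By Lemma~\ref{lem:AND-select}, since $\Interval{\ell^i_1}\notin S^*$, the set $S^*$ meets $\{\alphaAND{\CGa{i}}{\ell^1_i},\gammaAND{\CGa{i}}{\ell^1_i},\deltaAND{\CGa{i}}{\ell^1_i}\}$. This holds whether or not $\CGa{i}\in S^*$. Similarly, for $\AND{\CGap{i}}{\overbar{\ell_i^1}}$, either $\{\CGap{i},\Interval{\overbar{\ell^1_i}}\}\subseteq S^*$ or $S^*$ meets the corresponding $\alpha,\gamma,\delta$ triple. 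In addition, Lemma~\ref{lem:implication-select} applied to $\Implication{\CGa{i}}{\CGap{i}}$ gives $\CGa{i}\in S^*$ or $\CGap{i}\in S^*$. Adding these up, $L_a$ contains one vertex of the first AND triple, one of $\CGa{i}$ or $\CGap{i}$, and either one vertex of the second AND triple or the pair $\CGap{i},\Interval{\overbar{\ell^1_i}}$. So we have at least two, and exactly two only if $\CGap{i}\in S^*$, $\CGa{i}\notin S^*$, and the second triple is avoided. Carrying out the same analysis for $b$ and $c$: if every group has exactly two, then $\CGa{i},\CGb{i},\CGc{i}\notin S^*$.

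Second, in that remaining case I would show that $\CGCov{i}$ is not covered unless some additional vertex of $S^*$ lies in $\ClauseGadget_i$. By Lemma~\ref{lem:semi-good}, a semi-good shortest path through $\CGCov{i}$ between two vertices of $S^*$ must enter $\CGCov{i}$ from a left neighbour $\Interval{u_1}$ and leave through $\CGf{i}$ or a track interval to the right. As in the proof of Lemma~\ref{lem:implication-select}, the construction of $\CovGadget{i}$ (the $\epsilon$-shifts, with $\CGCov{i}$ beginning at $max(\Interval{v^i_T})+4\epsilon$) forces $\Interval{u_1}$ to lie on one of the tracks $T_{\CGa{i}},T_{\CGb{i}},T_{\CGc{i}}$. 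The tracks $T_{\CGd{i}}$ and the earlier tracks are shifted too little to reach $\CGCov{i}$ on a shortest path. Because $S^*$ is good, the left endpoint of the path is then a root of one of these tracks, i.e.\ one of $\CGa{i},\CGb{i},\CGc{i}$, which contradicts the case assumption. Hence if no group has a third vertex, $S^*$ must contain $\CGCov{i}$ itself, an interval of $\ClauseGadget_i$ outside the three groups. In either case $|S^*\cap\ClauseGadget_i|\geq 7$.

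The main obstacle is the geometric claim in the second step: showing that only paths coming from the roots $\CGa{i},\CGb{i},\CGc{i}$ can cover $\CGCov{i}$ as an internal vertex. In particular one must rule out shortest paths from $\CGd{i}$, from earlier tracks, or from the right (via $\CGf{i}$ and later gadgets). I would verify this by computing the distances to $\CGCov{i}$ along consecutive tracks, using the $\epsilon$ offsets, in the same style as the proofs of Lemmas~\ref{lem:implication-select} and~\ref{lem:AND-select}.
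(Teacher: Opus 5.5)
Your proposal is correct and follows essentially the same route as the paper. You use Lemmas~\ref{lem:implication-select} and~\ref{lem:AND-select} to show that each of the three groups contains a third vertex of $S^*$ unless $\CGap{i}\in S^*$ and $\CGa{i}\notin S^*$ (and likewise for $b$, $c$), and in the remaining tight case you argue that $\CGCov{i}$ can only be covered via a track rooted at $\CGa{i}$, $\CGb{i}$ or $\CGc{i}$, so $S^*$ must contain $\CGCov{i}$ or one of these roots. Your bookkeeping in the tight case is slightly more explicit than the paper's, and the geometric claim you flag as the main obstacle is exactly the one the paper asserts directly from the construction of $\CovGadget{i}$.
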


\begin{proof}
    Assume none of $\Interval{\ell^i_1},\Interval{\ell^i_2},\Interval{\ell^i_3}$ is in $S^*$. Since $\Interval{\ell^1_i}\notin S^*$, due to Lemma~\ref{lem:AND-select}, we have that at least one among $\alphaAND{\CGa{i}}{\ell^1_i}, \gammaAND{\CGa{i}}{\ell^1_i}, \deltaAND{\CGa{i}}{\ell^1_i}$ lies in $S^*$. If $\CGap{i}\notin S^*$, then $\CGa{i}\in S^*$ and one more interval from  $\alphaAND{\CGap{i}}{\overbar{\ell^1_i}}, \gammaAND{\CGap{i}}{\overbar{\ell^1_i}},  \deltaAND{\CGap{i}}{\overbar{\ell^1_i}}$ lies in $S^*$ (Lemma~\ref{lem:implication-select} and~\ref{lem:AND-select}). Therefore, $$\abs{ S^* \cap \set{\CGa{i},\CGap{i}, \alphaAND{\CGa{i}}{\ell^1_i},\gammaAND{\CGa{i}}{\ell^1_i},\deltaAND{\CGa{i}}{\ell^1_i}, \alphaAND{\CGap{i}}{\overbar{\ell^1_i}},\gammaAND{\CGap{i}}{\overbar{\ell^1_i}},\deltaAND{\CGap{i}}{\overbar{\ell^1_i}}}}  \geq 3$$

  Now, using Lemma~\ref{lem:clause-select}, we also have $$\abs{ S^* \cap \set{\CGb{i},\CGbp{i}, \alphaAND{\CGb{i}}{\ell^2_i},\gammaAND{\CGb{i}}{\ell^2_i},\deltaAND{\CGb{i}}{\ell^2_i}, \alphaAND{\CGbp{i}}{\overbar{\ell^2_i}},\gammaAND{\CGbp{i}}{\overbar{\ell^2_i}},\deltaAND{\CGbp{i}}{\overbar{\ell^2_i}}}}  \geq 2$$ and $$\abs{ S^* \cap \set{\CGc{i},\CGcp{i}, \alphaAND{\CGc{i}}{\ell^3_i},\gammaAND{\CGc{i}}{\ell^3_i},\deltaAND{\CGc{i}}{\ell^3_i}, \alphaAND{\CGcp{i}}{\overbar{\ell^3_i}},\gammaAND{\CGcp{i}}{\overbar{\ell^3_i}},\deltaAND{\CGcp{i}}{\overbar{\ell^3_i}}}}  \geq 2$$

  The above arguments imply that when $\CGap{i}\notin S^*$, $|S^* \cap \ClauseGadget_i|\geq 7$. Arguing similarly as above, we can show that if at least one of $\CGbp{i},\CGcp{i}$ does not belong to $S^*$, then also $|S^* \cap \ClauseGadget_i|\geq 7$. Now consider the case, when $\set{\CGap{i},\CGbp{i},\CGcp{i}}\subset S^*$. Moreover, $S^*$ contains one interval from each of $\AND{\CGa{i}}{\ell^1_i}$, $\AND{\CGb{i}}{\ell^2_i}$ and $\AND{\CGc{i}}{\ell^3_i}$. Now, we will be done by showing that at least one of $\CGa{i},\CGb{i},\CGc{i}, \CGCov{i}$ must be in $S^*$. 

  Suppose that $\CGCov{i}\notin S^*$ and let $ \CGCov{i} \in I(\Interval{u},\Interval{v})$, where $\Interval{u},\Interval{v}\in S^*$ and $min(\Interval{u}) < min(\Interval{v})$.  Let $P$ be a shortest path between $\Interval{u}$ and $\Interval{v}$ such that $P$ contains $\CGCov{i}$. Then, $P$ must contain two distinct intervals $\Interval{w_1}$ and $\Interval{w_2}$ such that both $\Interval{w_1},\Interval{w_2}$ intersects $\CGCov{i}$ and $\max(\Interval{w_1}) < \min(\Interval{w_2})$. From construction of $\CovGadget{i}$, it follows that $\Interval{w_1}$ lies in $T(\Interval{z})$, where $\Interval{z}\in \set{\CGa{i},\CGb{i},\CGc{i}}$. Now, since $S^*$ is a good geodetic set, we infer that $S^*$ contains at least one of $\CGa{i},\CGb{i},\CGc{i}$. Hence $S^*$ contains at least seven intervals from $\ClauseGadget_i$.
\end{proof}

\begin{lemma}\label{lem:opt-sat}
If there is a geodetic set of $D$ with cardinality $\bound$, then $F$ is satisfiable.
\end{lemma}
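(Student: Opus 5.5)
By Lemma~\ref{lem:good-geodetic-set}, I start from a good geodetic set $S^*$ of $D$ with $|S^*|\leq \bound$. It contains no interval of any track, and it contains all point intervals $S_p$, which are simplicial; by Lemma~\ref{lem:uper-bound-interval} there are $\nbSimplicial$ of them. So $S^*\setminus S_p$ has at most $\boundWithoutSimplicial$ intervals. The plan is a counting argument: I show that the variable gadgets force at least $n$ of these and each clause gadget forces at least $6$. Together with the budget, this means every bound is tight, and tightness yields a satisfying assignment.

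\textbf{Variable gadgets.} In $\VarGadget_i$, the root $\trueInterval$ is a point interval and lies in $S^*$. I apply Lemma~\ref{lem:implication-select} to $\Implication{x_i}{\overbar{x_i}}$; its hypothesis holds because $\Interval{x_i}$ is the only root of its track $T_1$. This gives $\Interval{x_i}\in S^*$ or $\Interval{\overbar{x_i}}\in S^*$, so at least one non-point interval per variable. These intervals are distinct across variables and disjoint from the clause gadgets.

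\textbf{Clause gadgets.} For each $C_i$, Lemma~\ref{lem:clause-select} gives at least $2+2+2=6$ non-point intervals of $S^*$ inside $\ClauseGadget_i$. The three sets listed there are pairwise disjoint and lie in different gadgets. Summing the two lower bounds gives $n+6m$, which equals the budget. Hence equality holds everywhere: exactly one of $\Interval{x_i},\Interval{\overbar{x_i}}$ is in $S^*$, and $|S^*\cap \ClauseGadget_i\setminus S_p|=6$ for each $i$.

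\textbf{Assignment and clause satisfaction.} I define $\phi(x_i)=1$ if and only if $\Interval{x_i}\in S^*$. Then $\Interval{\ell}\in S^*$ exactly when $\phi(\ell)=1$, by the exactly-one property. Suppose some clause $C_i$ had no true literal. Then none of $\Interval{\ell_i^1},\Interval{\ell_i^2},\Interval{\ell_i^3}$ is in $S^*$, and Lemma~\ref{lem:clause-select-1} gives $|S^*\cap\ClauseGadget_i|\geq 7$, contradicting the count of exactly $6$. So $\phi$ satisfies $F$.

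\textbf{Main obstacle.} The delicate point is making the count honest. I must check that the intervals counted in Lemma~\ref{lem:clause-select-1} (including $\CGCov{i}$ and the literal-dependent ones) all lie in $\ClauseGadget_i$ and are not point intervals, so that "$\geq 7$" really refers to the non-point part. I also must check that no interval is charged both to a variable gadget and to a clause gadget. The literal intervals $\Interval{\ell}$ sit in the variable gadgets and are not counted in $\ClauseGadget_i$, so the two charges are disjoint. I also need that a geodetic set of size smaller than the bound can be padded, or else treat the hypothesis as "at most" $\bound$; the same counting handles both cases.
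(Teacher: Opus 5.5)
Your proof is correct and follows the same route as the paper. You pass to a good geodetic set, get at least one of $\Interval{x_i},\Interval{\overbar{x_i}}$ per variable from Lemma~\ref{lem:implication-select} and at least six intervals per clause gadget from Lemma~\ref{lem:clause-select}, use the budget $\bound$ to make all these bounds tight, and then use Lemma~\ref{lem:clause-select-1} to rule out an unsatisfied clause. Restricting the count to non-point intervals is cleaner than the paper's bookkeeping: the paper adds $|S_p|$ to the sizes of $S^*$ restricted to the variable and clause gadgets, even though those gadgets contain point intervals such as $\Interval{s_q}$, $\CGd{i}$, $\CGf{i}$ and $\betaAND{p}{q}$.
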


\begin{proof}
Let $S$ be a geodetic set of $D$ with cardinality $\bound$. Due to Lemma~\ref{lem:good-geodetic-set}, there exists a good geodetic set $S^*$ of $D$ with $|S^*|\leq |S| \leq \bound$. Recall that a variable gadget $\VarGadget_i = \Implication{\trueVertex}{x_i} \cup \Implication{x_i}{\overbar{x_i}}$. Due to Lemma~\ref{lem:implication-select}, we know that at least one among $\set{\Interval{x_i},\Interval{\overbar{x_i}}}$ lies in $S^*$. Let $S_1 = S^* \cap \parentesis{\bigcup\limits_{1\leq i\leq n} \VarGadget_i}$, and $S_2 = S^* \cap \parentesis{\bigcup\limits_{1\leq i\leq m} \ClauseGadget_i}$. Let $S_p$ denote the set of point intervals in $D$. Note that $S_p \subseteq S^*$. We have $|S_1|\geq n$,  $|S_2|\geq 6m$ by Lemma~\ref{lem:clause-select}, and $|S_p|=\nbSimplicial$. 
Therefore, $|S_1|=n$ as $\abs{S^*} \leq \bound$. This means that for each $1\leq i\leq n$, exactly one of $\Interval{x_i},\Interval{\overbar{x_i}}$ lies in $S^*$. Based on these, we define the following truth assignment $\phi\colon \set{x_1,\ldots,x_n}\rightarrow \set{1,0}$ of $F$. Define $\phi(x_i)=1$ if $\Interval{x_i}\in S^*$ and $\phi(x_i)=0$, otherwise. Using Lemma~\ref{lem:clause-select} we can infer that for each $1\leq i\leq m$, we have that $|S^*\cap \ClauseGadget_i| = 6$. Due to Lemma~\ref{lem:clause-select-1}, at least one of the intervals $\Interval{\ell^i_1},\Interval{\ell^i_2},\Interval{\ell^i_3}$ lies in $S^*$. Thus, for at least one literal $\ell_i^j$, we have that $\phi(\ell_i^j)=1$, as needed.
\end{proof}

\subsubsection{Completion of Proof of Theorem~\ref{thm:interval-hard}}\label{sec:complete}

Lemma~\ref{lem:uper-bound-interval},~\ref{lem:sat-opt}, and~\ref{lem:opt-sat} completes the proof of Theorem~\ref{thm:interval-hard}.

\section{Conclusion}\label{sec:conclusion}


We proved that \textsc{Minimum Geodetic Set} is FPT on chordal graphs when parameterized by the clique number/treewidth, and that \textsc{Minimum Geodetic Set} is NP-hard on interval graphs.

An interesting question is whether there are FPT algorithms for \textsc{Minimum Geodetic Set} on interval or chordal graphs, when parameterized by the geodetic number? Are there constant-factor approximation algorithms for these classes? (These are not true for general graphs, see~\cite{kellerhals2020} and~\cite{caldam2020}.)

Assuming the \emph{Exponential Time Hypothesis}, our reduction implies that there cannot be a $2^{o(\sqrt{n})}$ time algorithm for \textsc{Minimum Geodetic Set} on interval graphs of order $n$. Are there subexponential time algorithms for \textsc{Minimum Geodetic Set} on interval graphs or chordal graphs, matching this lower bound? (This is the case for many graph problems on geometric intersection graphs, see~\cite{stringETH}.)

We have seen that for every $k$, \textsc{Minimum Geodetic Set} is solvable in time $f(k)n$ for $k$-trees (which are chordal graphs with clique number $k+1$), but such a running time is unlikely to be possible for \emph{partial} $k$-trees (i.e. graphs of treewidth $k$), since \textsc{Minimum Geodetic Set} is known to be NP-complete for graphs of treewidth~14~\cite{DBLP:conf/iwpec/Tale25}. 
However, it is unknown whether \textsc{Minimum Geodetic Set} is solvable in polynomial time on partial $2$-trees (also known as graphs of treewidth at most~2, series-parallel graphs, and $K_4$-minor-free graphs): an algorithm is known only for the special case of outerplanar graphs~\cite{mezzini2018}.

Finally, we think that studying the computational complexities of related problems like \textsc{Isometric Path Cover}~\cite{chakraborty2022}, \textsc{Strong Geodetic Set}~\cite{DBLP:journals/rairo/LimaSSU24}, \textsc{Geodetic Hull}~\cite{hernando2005steiner} on interval graphs is another interesting direction of research.

\medskip \noindent \textbf{Conflict of Interest:} The authors declare that there is no conflict of interest. 
\bibliographystyle{alpha}
\bibliography{references}

\end{document}